\newcommand{\blind}{0}
\newtheorem{theorem}{Theorem}[section]
\newtheorem{lemma}[theorem]{Lemma}
\newtheorem{remark}[theorem]{Remark}
\newtheorem{proposition}[theorem]{Proposition}
\newtheorem{alg}[theorem]{Algorithm}
\newtheorem{example}[theorem]{Example}
\begin{document}

\def\spacingset#1{\renewcommand{\baselinestretch}%
{#1}\small\normalsize} \spacingset{1}


\if0\blind
{
  \title{\bf Optimal and efficient circular designs\\
with neighboring effects}
  \author{Xiangshun Kong\\
    Department of Statistics,
Beijing Institute of Technology\\
    Xueru Zhang\\
    School of Statistics and Data Science \& LPMC,
Nankai University\\
    and \\
    Wei Zheng\thanks{
    The authors gratefully acknowledge \textit{please remember to list all relevant funding sources in the unblinded version}}\hspace{.2cm}\\
    Department of Business Analytics and Statistics,\\
University of Tennessee, Knoxville.}
  \maketitle
} \fi

\if1\blind
{
  \bigskip
  \bigskip
  \bigskip
  \begin{center}
    {\LARGE\bf Optimal and efficient circular designs\\
with neighboring effects}
\end{center}
  \medskip
} \fi

\bigskip
\begin{abstract}
This paper studies circular designs for interference models, where a treatment assigned to a plot also affects its neighboring plots within a block. For the purpose of estimating total effects, the circular neighbor balanced design was shown to be universally optimal among designs which do not allow treatments to be neighbors of themselves. Our study shows that these self-neighboring block sequences should be the main ingredient for an optimal design. Here, we adopt the approximate design framework and study optimal designs in the whole design space. Our approach is flexible enough to accommodate all possible design parameters, that is the block size and the number of blocks and treatments. It can be broken down into two main steps: the identification of the minimal supporting set of block sequences and the optimality condition built on it. The former is critical for reducing the computational time and memory usage tremendously. Unlike other related problems, there is no fixed pattern of the minimal supporting set here. To deal with this unprecedented challenge, we theoretically derived an intermediate set of sequences, which guarantees to contain the minimal supporting set. The latter can then be efficiently identified by a customized algorithm. Such combination of theoretical and algorithmic results is the first of its kind in related literature. Similar results are obtained for circular crossover designs. Lastly, our method is flexible enough to produce both symmetric and asymmetric designs and also to deal with 
arbitrarily forms of the within-block covariance matrix.
\end{abstract}

\noindent%
{\it Keywords:}  Approximate design theory, interference model, linear equations system, total effects, universally optimal designs.
\vfill

\newpage
\spacingset{1.9} 
\section{Introduction}
\label{sec:intro}

In many applications of block designs, especially field experiments in agriculture, the treatment assigned to one plot may also have impacts on the responses of its neighboring plots. This phenomenon has long been recognized in literature, see \cite{stadler:1921,haines:1956,pearce:1957,cox:1958,bhalli:1964,jensen:1964,gomez:1972,dyke:1976,mcdonald:1976,cannell:1977,kawano:1978,murugesan:1978,kempton:1982,kempton:1984,jenkyn:1985,azais:1986,bradshaw:1986,kempton:1986,speckel:1987,bailey:1990,hide:1990,langton:1990,kempton:1992} for examples.  Among them, the treatment could be a plant variety, fertilizer, pesticide, virus type, or irrigation method. Typically, the blocks are arranged in rows of plots and hence the design of such experiments boils down to the determination of sequences of treatments for these blocks. It is most often assumed that the treatment applied to a plot has neighbor effects on its left and right immediate neighboring plots.

Besides neighbor effects, there are often {\it edge effects}, also called {\it border effects}, observed at the two ends of each block. \cite{langton:1990} commented that the edge effect might be caused by many unknown complex reasons and suggested that ``it will usually be essential to exclude from analysis all plot edges in order to ensure a fair comparison between treatments''. Correspondingly, \cite{azais:1993} introduced {\it circular designs}, in which two guarding plots at two ends of each block are set up to receive treatments without response so that the edge effects for the observed plots are totally under control. The name {\it circular} comes from the particular arrangement that the left guarding plot is assigned the same treatment as the right end plot and similarly for the opposite end, so that the two end plots appear like having neighbor effects on each other, hence the circular behavior. We refer to
\cite{aldred:2014,druilhet:1999,filipiak:2012a,filipiak:2005,zheng:2017} for detailed discussions on circular designs.

The majority of work on circular designs has focused on the estimation of direct effects, with neighbor effects being nuisance. On the other hand, it is desirable in practice to make the decision of selecting a single treatment to be applied over a larger spatial area. When the chosen treatment is in use, its only neighbor will be itself, and thus the parameter of interest shall be the sum of direct and neighbor effects. We shall call such effects as {\it total effects} as in \cite{bailey:2004}, which proved that the circular neighbor balanced design (CNBD) is universally optimal among the subclass of designs with no treatment as a neighbor of itself. Unfortunately, the constraint on the design space turns out to be severe: the efficiency of CNBD in the whole design space drops down to $50\%$ as the design size grows.  {Along this line of research, \cite{filipiak:2005} and \cite{ai:2009} further investigated the performance of CNBD when the within-block covariance is of AR(1) structure instead of the classical identify matrix. The only work that has lifted the constraint on the design space is \cite{druilhet:2012} who adopted the approximate design theory and found optimal designs among all designs, which of course allows self-neighboring block sequences. However, the results are still limited in the following sense. $(i)$ The derived optimal designs are all symmetric, and only exists when the number of blocks takes special values. Note CNBD is also a symmetric design. $(ii)$ The derivation of optimal designs boils down to solving a maximin problem over a set of representative block sequences, whose size grows superexponentially in the block length, say $k$. A general algorithm without further theoretical studies of the sequence structures can not deal with large values of $k$. Particularly, \cite{druilhet:2012} listed results of optimal designs when $k\leq 12$. With advancement of computational power nowadays, we can only push the limit up to $k=14$ in our own experience. $(iii)$ The within-block covariance is assumed to be proportional to the identity matrix. While the last one is relatively easy to be extended, it requires substantial understanding of the sequence structures and their impact on the parameter estimation in order to solve the first two issues, if possible.

In this paper, we provide a unified framework for deriving both symmetry and asymmetric designs for arbitrary covariance structures and arbitrary design sizes. Derive designs are optimal among all possible circular designs instead of restricting the comparison within a subclass, which confirms that self-neighboring sequences do play critical roles in optimal designs. Specifically, our results are comprehensive in following ways. }$(i)$ The approximate design theory is established for all possible combinations of $k$ and $t$, where $t$ is the number of treatments to be compared. Here, the most difficult part is to provide theoretical forms of the supporting sequences, especially when $k$ and $t$ are large. $(ii)$ We further allow the flexibility on the number of blocks, say $n$, and proposed methods to derive exact designs from the approximate design theory for an arbitrary $n$. In other words, all possible configurations of $(k,t,n)$ are covered without any combinatorial constraints. {On the contrary, the symmetric designs proposed by existing literatures require $n$ to be a multiple of $t(t-1)$.} It is also obvious that CNBD only exists when $t\geq k$. ($iii$) We allow the within-block covariance matrix to be any positive definite matrix while the past work mostly assumed this matrix to be proportional to the identity matrix. As a slight deviation from this identity assumption, \cite{filipiak:2005} and \cite{ai:2009} assumed AR(1) structure for the covariance matrix. They studied properties of CNBD for limited choices of $k$ and $t$. Meanwhile these results can not be generalized to other within-block covariance matrices. ($iv$) We provide answers of optimal designs for three different models. Models (\ref{model:3}) and (\ref{model:2}) both consider two-sided neighbors with the latter assuming the left and right neighbor effects being the same. Model (\ref{eqn:modelcrossover}) considers one-sided neighbor effects for crossover designs. Model (\ref{model:3}) is of the main interest here, but its intrinsic relationship with Model (\ref{model:2}) helps us derive the theoretical forms of the supporting sequences, which is the key for finding optimal designs. The results of optimal crossover designs for Model (\ref{eqn:modelcrossover}) are derived in the same way and hence will be briefly described.

The comprehensiveness of our results is achieved without the sacrifice of computational time. This is due to the combination of theoretical insights with efficient algorithms. Roughly speaking, linear equation systems regarding the proportions of all treatment sequences are established for universally optimal approximate designs. We further show that there is only a small subset of sequences, namely supporting sequences, allowed to have positive proportions and hence the computation of this linear equation system is tremendously reduced. In many cases, those supporting sequences can be theoretically identified following \cite{kushner:1997}'s arguments, where each sequence is associated with a quadratic function and the game became the identification of the minimax of these functions.  Unfortunately, for our problem such task becomes intractable. At the superficial level, there is no clear pattern of the supporting sequences that we can observe from computational results. This unusual phenomenon becomes the main hurdle for theoretical advancement. To tackle it, we first find a narrow enough interval which contains the minimax point instead of directly specifying its value. This partial result allows us to identify a slightly larger subset of sequences, where all supporting sequences must belong to. Then an algorithm is built to find the supporting sequences within this subset in $O(k^2)$ time. Without this subset, we would need $O(t^k)$ time to search for the supporting sequences.

Particularly, we find that optimal designs consist of sequences which allocate each treatment in a sub-block of adjacent plots with equal or almost equal numbers of replications. Unlike CNBD or designs derived in literature for direct effects, our proposed designs do not try to put as many treatments in a sequence as possible. The optimal number of distinct treatments in a sequence is around $\sqrt{2k}$ for crossover designs and $\sqrt{k}$ for interference models, whenever these numbers are smaller than the total number of treatments under consideration.

\if(0){
Relevant discussions have appeared in Patterson (1950,1951), McGilchrist (1965), McGilchrist and Trenbath (1971), Kempton (1985,1991,1997) and Besag and Kempton (1986) and Matthews (1988).}\fi

The rest of the paper is organized as follows. 
Section \ref{sec:formulation} formulates the design problems under the two interference models into a unified optimization problem. 
Section \ref{sec:general} theoretically establishes the approach to derive universally optimal designs.
In particular, Section \ref{sec:theory} provides two systems of linear equations to characterize all possible universally optimal approximate designs, one for symmetric designs and one for general designs. 
Section \ref{sec:exact} extends the results in Section \ref{sec:theory} to exact designs and provides a simplified algorithm to obtain the optimal designs.
The main results of this paper are in Section \ref{sec:sequence}, which derive theoretical forms of the supporting sequences to address the computational issues of these approaches especially for large designs. Results of different natures are separated into Sections \ref{sec:sequence2} and \ref{sec:largekt}.
Examples are provided in Section \ref{sec:example} to illustrate our theoretical results.
All proofs of theorems are given in supplementary materials.

\section{Problem formulation}\label{sec:formulation}
Throughout this paper, we consider designs on $\Omega_{k,t,n}$, the set of all block designs with $n$ blocks of size $k$ for the comparison of $t\geq 2$ treatments. We require $k\geq 4$ since no contrast of treatments is estimable for any circular design when $k\leq 3$. Suppose $Y_{dij}$ is the response observed from the $j$th plot of block $i$, we consider the following two models
\begin{eqnarray}
Y_{dij}&=&\mu+\beta_i+\tau_{d(i,j)}+\lambda_{d(i,j-1)}+\rho_{d(i,j+1)}+\varepsilon_{ij}.\label{model:3}\\
Y_{dij}&=&\mu+\beta_i+\tau_{d(i,j)}+\lambda_{d(i,j-1)}+\lambda_{d(i,j+1)}+\varepsilon_{ij},\label{model:2}
\end{eqnarray}
Here $\mu$ is the general mean, $\beta_i$ is the $i$th block effect, $d(i,j)$ is the treatment assigned to the $j$th plot of block $i$ by design $d$, $\tau_{d(i,j)}$ is the direct treatment effect of $d(i,j)$, $\lambda_{d(i,j-1)}$ is the neighbor effect of treatment $d(i,j-1)$ from the left neighbor, $\rho_{d(i,j+1)}$ denotes the neighbor effect from the right, and lastly $\varepsilon_{ij}$ is the error term with zero mean. Model (\ref{model:3}) reduces to Model (\ref{model:2}) if we assume $\lambda_i=\rho_i$, $1\leq i\leq t$, namely the neighbor effects are {\it undirectional}. For this reason, we call Models (\ref{model:3}) and (\ref{model:2}) as the directional and undirectional interference models, respectively. For both models, we consider circular designs, i.e., $\lambda_{d(i, 0)}=\lambda_{d(i, k)}$, $\lambda_{d(i, k+1)}=\lambda_{d(i, 1)}$ and $\rho_{d(i, k+1)}=\rho_{d(i, 1)}$.  Let $Y_d$ be the vector of responses organized block by block, these two models can be written in matrix forms of
\begin{eqnarray}
Y_d &=&1_{nk}\mu+U\beta+T_d\tau+L_d\lambda+R_d\rho+\varepsilon,\label{model:3'}\\
Y_d &=&1_{nk}\mu+U\beta+T_d\tau+L_d\lambda+R_d\lambda+\varepsilon,\label{model:2'}
\end{eqnarray}
where $\beta=(\beta_1,...,\beta_n)'$, $\tau=(\tau_1,...,\tau_t)'$, $\lambda=(\lambda_1,...,\lambda_t)'$, $\rho=(\rho_1,...,\rho_t)'$ with $'$ representing the transpose of a vector or a matrix. Also, $1_h$ represents a vector of $h$ ones, and $U=I_n\otimes 1_k$ with $\otimes$ being the Kronecker product and $I_k$ being the identity matrix of size $k$. Lastly, $T_d$, $L_d$ and $R_d$ represent the design matrices for the direct, left and right neighbor effects, respectively.

Our target here is to find the optimal design for the estimation of the total effect, namely $\phi=\tau+\lambda+\rho$ for Model (\ref{model:3}) and
$\phi=\tau+2\lambda$ for Model (\ref{model:2}). For this purpose, we shall re-parametrize those models as
\begin{eqnarray}
Y_{d}&=&1_{nk}\mu+U\beta+T_{d}\phi+\tilde{L}_{d}\lambda+\tilde{R}_{d}\rho+\varepsilon,\label{eq1-3}\\
Y_{d}&=&1_{nk}\mu+U\beta+T_{d}\phi+(\tilde{L}_{d}+\tilde{R}_{d})\lambda+\varepsilon,\label{eq1-2}
\end{eqnarray}
where $\tilde{L}_{d}=L_d-T_d$ and $\tilde{R}_{d}=R_d-T_d$. In other words, we have $\tilde{L}_{d}=(I_n\otimes H_{\rm c})T_d$ and $\tilde{R}_d=(I_n\otimes H_{\rm c}^{\prime})T_d$, where $H_{\rm c}=(\mathbb{I}_{i=j+1({\rm mod}~k)}-\mathbb{I}_{i=j})_{1\leq i,j\leq k}$ with $\mathbb{I}$ being the indicator function. Here we adopt a very mild condition for the covariance structure, i.e., $V(\varepsilon)=I_n\otimes \Sigma$ with $\Sigma$ being an arbitrary positive definite $k\times k$ matrix. By similar arguments as in \cite{kunert:1984}, the information matrix for $\phi$ under the two models are
\begin{eqnarray}\label{eqn:91503}
{~~~~~\rm Model~ (\ref{model:3}):} ~C_{d}&=&C_{d00}-
\left(
\begin{array}{cc}
C_{d01}  &   C_{d02}
\end{array}
\right)
\left(
\begin{array}{cc}
C_{d11}  &   C_{d12}\\
C_{d21}  &   C_{d22}
\end{array}
\right)^{-}
\left(
\begin{array}{c}
C_{d10}\\
C_{d20}
\end{array}
\right),\label{eqn:2173}\\
{~~~~~\rm Model~ (\ref{model:2}):} ~C_{d}&=&C_{d00}-(C_{d01}+C_{d02})\bigg(\sum_{i,j=1}^2C_{dij}\bigg)^{-}(C_{d10}+C_{d20}),\label{eqn:2172}
\end{eqnarray}
where $C_{dij}=G_i'(I_n\otimes \tilde{ B})G_j$, $0\leq i,j\leq 2$, with $G_0=T_d$, $G_1=\tilde{L}_d$, $G_2=\tilde{R}_d$, and $\tilde{B}=\Sigma^{-1}-\Sigma^{-1}J_k\Sigma^{-1}/1^{\prime}_k\Sigma^{-1}1_k$ with $J_k=1_k1_k'$. For technical conveniences, we shall define a projection matrix $B_k=I_k-k^{-1}J_k$. In fact, we have $\tilde{B}=B_k$ when $\Sigma=I_k$.

The block diagonal structure of the matrix $I_n\otimes \tilde{ B}$ allows us to write each $C_{dij}$ in an additive form, which induces the approximate design framework. Let us just examine $C_{d00}$ for illustration. With the block-wise decomposition $T_d=(T_1',T_2',...,T_n')'$, we have $C_{d00}=\sum^n_{i=1}T_i'\tilde{B}T_i$. Note that the summand $T_i'\tilde{B}T_i$ depends on block $i$ only through the sequence used in this block. Let ${\cal S}$ be the set of all $t^k$ treatment sequences, we shall denote $C_{s00}=T_i'\tilde{B}T_i$ if sequence $s\in {\cal S}$ is adopted in block $i$. By this notation, we have $C_{d00}=\sum_{s\in {\cal S}}n_sC_{s00}$, where $n_s$ is the number of times that sequence $s$ is selected in the design $d$. Similarly, we have $C_{dij}=\sum_{s\in {\cal S}}n_sC_{sij}$ for $0\leq i,j\leq 2$. This means that we can consider a design $d$ as a result of selecting $n$ sequences from ${\cal S}$ with replications, thus the representation $d=\{n_s : s\in {\cal S}\}$. Define its associated measure as $\xi_d=\{p_s : s\in {\cal S}\}$, where $p_s=n_s/n$ is the proportion of sequence $s$ in design $d$. Then we have $C_{dij}=nC_{\xi_dij}$ with $C_{\xi_dij}=\sum_{s\in {\cal S}}p_sC_{sij}$, and thus $C_d=nC_{\xi_d}$ with $C_{\xi_d}$ being derived from equations (\ref{eqn:2173}) and (\ref{eqn:2172}) by replacing $d$ therein by $\xi_d$. As a result, finding the optimal design $d$ boils down to finding the optimal measure $\xi_d$. In the approximate design framework, we shall relax $p_s$ to be any value in the interval $[0,1]$, in which case there does not necessarily exist an exact design $d$ associated with it. Thereafter, we shall suppress the subscript $d$ and aim to optimize $\xi$ over the measure space $\mathcal{P}=\{\{p_s:s\in\mathcal{S}\}: p_s\geq 0, \sum_{s\in {\cal S}} p_s=1\}$.

Following \cite{kiefer:1975}, we call a measure $\xi^*$ to be {\it universally optimal} if it maximizes $\Phi(C_{\xi})$ over $\mathcal{P}$ for any function $\Phi: \mathbb{R}^{t\times t}\rightarrow \mathbb{R}$ satisfying: ($C.1$) $\Phi$ is concave; ($C.2$) $\Phi(S'C_{\xi}S)=\Phi(C_{\xi})$ for any permutation matrix $S$; ($C.3$) $\Phi(bC_{\xi})$ is nondecreasing in the scalar $b>0$. Let ${\cal P}^*$ be the set of all universally optimal measures. Each element of ${\cal P}^*$ shall also be optimal under the alphabetical criteria of A, D, E, and T among others. If there exists an exact design $d$ with its associated measure $\xi_d\in {\cal P}^*$, then $d$ is said to be universally optimal. Otherwise, we shall produce exact designs based on the universally optimal measures and evaluate their performances by their efficiencies under the alphabetical criteria against a measure in ${\cal P}^*$.


\section{Optimal designs for general covariance matrix $\Sigma$}\label{sec:general}
\subsection{Approximate design theory}\label{sec:theory}



In view of the treatment exchangeability in condition ($C.2$) above, we shall call a collection of sequences to be an {\it equivalence class} if it is closed under any form of treatment relabeling of the sequences. To be specific, let ${\cal G}$ be the collection of all $t!$ possible permutations on the set $\mathbb{Z}_t=\{1,2,...,t\}$. In algebra, ${\cal G}$ is called the symmetric group of $\mathbb{Z}_t$. Then an equivalence class containing a representative sequence, say $s$, can be constructed by $\langle s \rangle=\{\sigma(s): \sigma\in {\cal G}\}$, where $\sigma(s)$ is a sequence generated by applying the treatment relabeling/permutation $\sigma$ on $s$. We call $\langle s \rangle$ as the equivalence class produced by sequence $s$. In fact, for any alternative sequence $\tilde{s}\in \langle s \rangle$, we have $\langle \tilde{s} \rangle=\langle s \rangle$ due to the group property of ${\cal G}$. As a result, two equivalence classes are either identical or mutually exclusive and we shall have the partition ${\cal S}=\cup^m_{i=1} \langle s_i \rangle$, where $s_i$'s are representative sequences for the $m$ distinct equivalence classes. To calculate $m$, note that one equivalence class could be represented by one way of partitioning $k$ balls into at most $t$ boxes. When $k\leq t$, $m$ is the well known Bell number depending only on $k$, that is, $m= k^kL(k)$ with $L(k)$ decaying at the exponential rate in $k$. Theorem \ref{prop:pd} examines the properties of these $m$ equivalence classes as related to the job of searching for optimal measures.

The main purpose of this section is to present Theorem \ref{thm:3}, which maps ${\cal P}^*$ to a linear subspace. To do that, we need to introduce some notations along with the definition of a special subset of measures, which overlaps with but does not contain ${\cal P}^*$. \cite{kushner:1997} called a measure to be symmetric if it assigns equal proportion to sequences within each equivalence class. One consequence here is that $C_{\xi ij}$ will be completely symmetric for all $0\leq i,j\leq 2$. So the name of symmetry could be justified by both the symmetric permutation ${\cal G}$ and these completely symmetric matrices. The latter is more relevant to the optimal design problem here since these matrices are the direct building blocks for computing the information matrix. Thus, we shall instead define a measure $\xi$ to be {\it symmetric} if all matrices $C_{\xi ij}, 0\leq i,j\leq 2$ are completely symmetric. Also, denote by $\mathcal{P}_0$ the collection of all symmetric measures. Alternative to the full permutation approach as in \cite{kushner:1997}, one can also construct a symmetric measure through an orthogonal array of type I. {See Example \ref{example4} in Section \ref{sec:example}.}

For any $\xi\in \mathcal{P}_0$, we have $C_{\xi ij}=c_{\xi ij}B_t/(t-1)$. 
This form can be found in view of the orthogonality between $B_t$ and $J_t$. Applying them to (\ref{eqn:2173}) and (\ref{eqn:2172}) yields
\begin{eqnarray}\label{eqn:218}
C_{\xi}=y_{\xi}B_t/(t-1), &~~~~ & y_{\xi}=c_{\xi 00}-\ell_{\xi}'Q_{\xi}^{-}\ell_{\xi},
\end{eqnarray} 
with $\ell_{\xi}$ and $Q_{\xi}$ defined as $(c_{\xi 01},c_{\xi 02})'$ and $(c_{\xi ij})_{1\leq i,j\leq 2}$ under Model (\ref{model:3}), reduced to $c_{\xi 01}+c_{\xi 02}$ and $\sum_{i,j=1}^2c_{\xi ij}$ under Model (\ref{model:2}). In this section, all theoretical results will apply to both Models (\ref{model:3}) and (\ref{model:2}) under these unified notations unless otherwise noted. (\ref{eqn:218}) indicates that a measure would be universally optimal within ${\cal P}_0$ if it maximizes the scaler $y_{\xi}$. In this regard, let $y^*=\max_{\xi\in {\cal P}} y_{\xi}$. Part $(iii)$ of Theorem \ref{prop:pd} takes it further and claims that such a measure is actually universally optimal among ${\cal P}$. In fact, it also says $C_{\xi}=y^*B_t/(t-1)$ for all $\xi\in {\cal P}^*$. Let $F_{\xi}=((c_{\xi 00},\ell_{\xi}')',(\ell_{\xi}',Q_{\xi}')')$ and define the support of a measure $\xi=\{p_s: s\in {\cal S}\}$ as ${\cal V}_{\xi}=\{s\in\mathcal{S}:p_s>0\}$. We have the following theorem. 
\if(0){\begin{table}[htp]
\begin{center}
{\bf we should not use the table to represent them anymore, and I agree with this}\\$F_{\xi}=
\left(
\begin{array}{cc}
c_{\xi 00}  &  \ell_{\xi}'    \\
\ell_{\xi}  & Q_{\xi}     
\end{array}
\right)
$, ~~~~~~~
\begin{tabular}{|c|c|c|}
\hline
 & Model (\ref{model:2}) & Model (\ref{model:3})\\\hline
$Q_{\xi}$ &$\sum_{i,j=1}^2c_{\xi ij}$&$(c_{\xi ij})_{1\leq i,j\leq 2}$\\\hline
$\ell_{\xi}$ &$c_{\xi 01}+c_{\xi 02}$&$(c_{\xi 01},c_{\xi 02})'$\\\hline
\end{tabular}~.
\end{center}
\label{default}
\end{table}%
}\fi
\begin{theorem}\label{prop:pd}
$(i)$ $Q_s=0$ if and only if $s\in \langle (11\ldots 11)\rangle$ and ${\rm rank}(Q_s)=1$ if and only if $s\in \langle(12\ldots12) \rangle$, which is only possible when $k$ is even. For any other sequence, we have $Q_s>0$. $(ii)$ For any $\xi\in {\cal P}^*$, we have $\langle (11\ldots 11)\rangle \cap {\cal V}_{\xi}=\emptyset$ and $Q_{\xi}>0$. $(iii)$  $\xi\in {\cal P}^*$ if and only if $C_{\xi}=y^*B_t/(t-1)$.
\end{theorem}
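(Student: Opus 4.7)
My plan is to prove the three parts in the order given: (i) and (ii) supply structural facts about individual sequences, while (iii) is the main symmetrization result.

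Part (i) is algebraic. I would work directly from $\tilde L_s = H_{\rm c} T_s$ and $\tilde R_s = H_{\rm c}'T_s$, using two easy observations: $\tilde B$ is positive semidefinite with one-dimensional kernel $\mathrm{span}(1_k)$ (since $\Sigma^{-1}J_k\Sigma^{-1}1_k=\Sigma^{-1}1_k\cdot(1_k'\Sigma^{-1}1_k)$), and both row and column sums of $H_{\rm c}$ vanish. The latter yields $\tilde L_s 1_t=\tilde R_s 1_t=0$ and $1_k'\tilde L_s=1_k'\tilde R_s=0$, so $\tilde B^{1/2}\tilde L_s=0$ if and only if $\tilde L_s=0$, which forces $s(i-1)=s(i)$ for every cyclic $i$, i.e.\ $s\in\langle(11\ldots 1)\rangle$. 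Interpreting $Q_s$ under Model~(\ref{model:3}) as the $2\times 2$ matrix $(c_{s ij})_{1\le i,j\le 2}$ with $c_{s ij}=\mathrm{tr}(C_{s ij})=\langle\tilde B^{1/2}G_i,\tilde B^{1/2}G_j\rangle_F$ in the Frobenius pairing, Cauchy--Schwarz gives $\mathrm{rank}(Q_s)=1$ if and only if $\tilde L_s=c\tilde R_s$ for some scalar $c\ne 0$. Reading this pointwise on the indicator $v$ of any treatment used by $s$ produces the cyclic recurrence $v_{i-1}+(c-1)v_i-cv_{i+1}=0$; since $v\in\{0,1\}^k$, a short case analysis forces $c=1$ and then $v_{i-1}=v_{i+1}$, so $v$ has period $2$, which around a cycle of length $k$ requires $k$ even and $v$ strictly alternating; hence $s\in\langle(12\ldots 12)\rangle$. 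Everything else gives $Q_s>0$. The Model~(\ref{model:2}) case is easier because $Q_s$ collapses to a scalar.

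Part (ii) then follows from part (i) by a rescaling argument. Constant sequences satisfy $\tilde L_s=\tilde R_s=0$ and $T_s'\tilde B T_s=0$, so they contribute nothing to $C_{\xi ij}$ for any $i,j$; if $\xi$ puts mass $\alpha>0$ on $\langle(11\ldots 1)\rangle$, renormalizing after deleting that mass yields $\xi'$ with $C_{\xi'}=C_\xi/(1-\alpha)$, strictly larger in the Loewner order, and $\Phi=\mathrm{tr}$ (which satisfies (C.1)--(C.3)) produces $\Phi(C_{\xi'})>\Phi(C_\xi)$, contradicting $\xi\in{\cal P}^*$. For $Q_\xi>0$, any $v$ in the kernel of $Q_\xi$ beyond the intrinsic $1_t$-direction(s) must kernel every $Q_s$ with $s\in{\cal V}_\xi$ (since $Q_\xi=\sum_s p_s Q_s$ with each $Q_s\succeq 0$), so by part (i) ${\cal V}_\xi\subseteq\langle(11\ldots 1)\rangle\cup\langle(12\ldots 12)\rangle$; the first piece was just ruled out, and a direct calculation shows that $C_\xi$ for a support restricted to the alternating class cannot equal $y^*B_t/(t-1)$, closing the contradiction via part (iii).

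For part (iii) I would adapt Kiefer's symmetrization argument to the Schur-complement form of $C_\xi$. Given any $\xi$, the orbit average $\bar\xi=(t!)^{-1}\sum_{\sigma\in{\cal G}}\sigma(\xi)$ is symmetric; using $T_{\sigma(s)}=T_sP_\sigma$ we get $C_{\sigma(\xi) ij}=P_\sigma'C_{\xi ij}P_\sigma$ for all $i,j$, so every $C_{\bar\xi ij}$ is the block-wise orbit average of the corresponding $C_{\xi ij}$. Joint concavity of the Schur complement in these blocks, applied to (\ref{eqn:2173})--(\ref{eqn:2172}), gives $C_{\bar\xi}\succeq(t!)^{-1}\sum_\sigma P_\sigma' C_\xi P_\sigma$ in Loewner order. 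Since $\bar\xi$ is ${\cal G}$-invariant, $C_{\bar\xi}$ commutes with every $P_\sigma$ and so lies in $\mathrm{span}(I_t,J_t)$; combined with $C_\xi 1_t=0$ (which follows from $T_s 1_t=1_k$ and $\tilde B 1_k=0$), one gets $C_{\bar\xi}=y_{\bar\xi}B_t/(t-1)$ with $y_{\bar\xi}\le y^*$. For any admissible $\Phi$ this chains to
\[
\Phi(C_\xi)\le\Phi\Bigl((t!)^{-1}\!\sum_\sigma P_\sigma' C_\xi P_\sigma\Bigr)\le\Phi(C_{\bar\xi})\le\Phi\bigl(y^* B_t/(t-1)\bigr),
\]
where the first step uses (C.2) plus concavity, the second uses that both sides lie on the scalar family $\{cB_t:c\ge 0\}$ combined with (C.3), and the third uses $y_{\bar\xi}\le y^*$ with (C.3). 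If $\xi\in{\cal P}^*$ all three inequalities are equalities; choosing a strictly concave admissible $\Phi$ (e.g.\ the $D$-criterion restricted to the range of $B_t$) forces $P_\sigma'C_\xi P_\sigma=C_\xi$ for every $\sigma$, so $C_\xi\in\mathrm{span}(I_t,J_t)$, and the third equality together with $C_\xi 1_t=0$ pins down $C_\xi=y^* B_t/(t-1)$. The converse is immediate from the same chain.

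The main obstacle I anticipate is the rank-one step of part (i): converting the scalar relation $\tilde L_s=c\tilde R_s$ into a combinatorial description of $s$ requires careful cyclic bookkeeping, since $c$ itself is an unknown scalar that must be pinned down before one can read off the alternation pattern, and the binary nature of $v$ together with the cyclic boundary is what eventually forces $c=1$ and $k$ even. Everything downstream, especially the $Q_\xi>0$ claim in part (ii), leans on the completeness of this characterization.
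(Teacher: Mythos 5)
Your proposal is correct and follows essentially the same route as the paper: the Cauchy--Schwarz/determinant argument plus the kernel structure of $\tilde B$ to reduce rank-deficiency of $Q_s$ to the proportionality $\tilde L_s=c\tilde R_s$, the cyclic recurrence forcing $c=1$, $k$ even, and the alternating pattern, then exclusion of the two degenerate classes from optimal supports, and the standard Kiefer symmetrization (which the paper outsources to Proposition 1 of Zheng (2015)) for part (iii). Your renormalization argument for removing mass on constant sequences is a slightly cleaner phrasing of the paper's "zero information" claim, but it is the same idea.
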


\begin{remark} The seminal work \cite{kushner:1997} is the first to use approximate design theory to study crossover designs. The argument critically relies on the condition that $Q_{s}$ is positive definite for all sequences. This condition no longer holds here. By Theorem \ref{prop:pd}, we get around this issue by first showing that any design $\xi$ with singular $Q_{\xi}$ can not be optimal. Such an idea is the first of its kind in the related literature. 
\end{remark}

Technically, the standard development of the linear equation system in Theorem \ref{thm:3} requires $Q_{\xi}>0$ for all $\xi\in {\cal P}$, i.e., $Q_s>0$ for all $s\in {\cal S}$, which is not the case here. We remedy this issue by singling out the only two equivalence classes of sequences without positive definite $Q_s$, i.e., $\langle (11\ldots 11)\rangle$ and $\langle(12\ldots12) \rangle$. We also claim that $\langle (11\ldots 11)\rangle$ can simply be ignored and $\langle(12\ldots12) \rangle$ need to be combined with other sequences to construct universally optimal measures. That is, the original measure space ${\cal P}$ can be shrunk a little bit by taking out these singularities, yet without missing out any element of ${\cal P}^*$. Particularly, we have ${\cal P}^*\subset {\cal P}^+$ with ${\cal P}^+=\{\xi\in {\cal P}: Q_{\xi}>0\}$.

Part $(iii)$ allows us to quickly check the universal optimality of a given measure, but extra tools are needed to identify the whole ${\cal P}^*$ in an efficient way. This is the task of the rest of this paper. Observe the linearity $c_{\xi ij}=\sum_{s\in {\cal S}}p_sc_{sij}$ with $c_{sij}=tr(B_tC_{s ij}B_t)$ for $0\leq i,j\leq 2$ due to the fact $C_{\xi ij}=\sum_{s\in {\cal S}}p_sC_{sij}$. Propagating this linearity forward, we have $\ell_{\xi}=\sum_{s\in {\cal S}}p_s\ell_{s}$, $Q_{\xi}=\sum_{s\in {\cal S}}p_sQ_{s}$, and $F_{\xi}=\sum_{s\in {\cal S}}p_sF_{s}$, with $\ell_{s}$, $Q_{s}$, $F_{s}$ equaling $\ell_{\xi}$, $Q_{\xi}$, $F_{\xi}$ when the measure ${\xi}$ is a degenerated measure with a single sequence $s$. Define the quadratic functions $q_s(x)=c_{s00}+2\ell_s'x+x'Q_sx$ and $q_{\xi}(x)=c_{\xi 00}+2\ell_{\xi}'x+x'Q_{\xi}x$ so that $q_{\xi}(x)=\sum_{s\in {\cal S}}p_s q_s(x)$, with $x\in\mathbb{R}^2$ for Model (\ref{model:3}) and $x\in\mathbb{R}$ for Model (\ref{model:2}). One can verify that $y_{\xi}=\min_{x}q_{\xi}(x)$ for $\xi \in {\cal P}$ and the minimum is achieved at $x_{\xi}=-Q_{\xi}^{-1}\ell_{\xi}$ for $\xi\in {\cal P}^+$. Define $r(x)=\max_{s\in{\cal S}} q_{s}(x)$, which is convex due to the convexity of $q_{s}(x)$. Hence it has an attainable minimum value denoted by $y_*=\min_{x}r(x)$. By Theorem \ref{prop:pd}, the minimizing point of $r(x)$ shall also be unique, and thus the notation:
\begin{eqnarray}\label{eqn:2262}
x^*&=&\arg\min_{x}r(x).
\end{eqnarray}
So obviously $r(x^*)=y_*$. The following results are useful for identifying $x^*$ and $y^*$.


\begin{theorem}\label{thm:1}
$(i)$ $y_*=y^*$. $(ii)$ $\xi\in {\cal P}_0\cap {\cal P}^*$ if and only if $det(F_{\xi})>0$ and
\begin{eqnarray}\label{eqn:226}
\max_{s\in {\cal S}}[tr(F_sF^{-1}_{\xi})-tr(Q_sQ^{-1}_{\xi})]&=&1.
\end{eqnarray}
Besides, the maximum can be achieved by all $s\in {\cal V}_{\xi}$. $(iii)$  $\xi\in {\cal P}_0\cap {\cal P}^*$ implies $-Q_{\xi}^{-1}\ell_{\xi}=x^*$ and $q_{\xi}(x^*)=y^*$.
\end{theorem}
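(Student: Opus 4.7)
The plan rests on two facts: (a) $q_\xi(x) = \sum_s p_s q_s(x) \le r(x)$ for every $\xi$ and $x$, and (b) $\xi \mapsto y_\xi$ is concave on $\mathcal P$, being the pointwise minimum over $x$ of functions that are affine in $\xi$.

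For part $(i)$, the inequality $y^* \le y_*$ is immediate from (a): $y_\xi = \min_x q_\xi(x) \le q_\xi(x^*) \le r(x^*) = y_*$, then maximize over $\xi$. For the reverse, I would invoke the subdifferential of the convex, piecewise-quadratic function $r$ at its minimizer. Since $0 \in \partial r(x^*) = \mathrm{conv}\{\nabla q_s(x^*) : s \in \mathcal A(x^*)\}$, where $\mathcal A(x^*) = \{s : q_s(x^*) = y_*\}$, Carath\'eodory's theorem yields a probability vector $\{p_s^*\}$ supported in $\mathcal A(x^*)$ with $\sum_s p_s^* \nabla q_s(x^*) = 0$. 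The resulting measure $\xi^*$ satisfies $\nabla q_{\xi^*}(x^*) = 0$, and since $q_{\xi^*}$ is convex, $x^*$ is already a global minimizer of $q_{\xi^*}$—so $y_{\xi^*} = q_{\xi^*}(x^*) = \sum_s p_s^* q_s(x^*) = y_*$, giving $y^* \ge y_{\xi^*} = y_*$.

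For part $(ii)$, the key algebraic identity is
\[
tr(F_s F_\xi^{-1}) - tr(Q_s Q_\xi^{-1}) = q_s(x_\xi)/y_\xi,
\]
which I would verify by block-inverting $F_\xi$ via the Schur complement with $y_\xi$ as the complement of $Q_\xi$ and collecting terms. Averaging against $\{p_s\}$ returns $1$ identically, because $\sum_s p_s q_s(x_\xi) = q_\xi(x_\xi) = y_\xi$; so equation (\ref{eqn:226}) is equivalent to $r(x_\xi) \le y_\xi$ with equality throughout $\mathcal V_\xi$. For the forward direction, concavity of $y$ together with optimality of $\xi$ makes the directional derivative of $y$ along $\delta_s - \xi$ nonpositive, and a direct calculation (differentiating $y_{(1-\epsilon)\xi+\epsilon\delta_s}$ at $\epsilon=0$ via the envelope identity) yields this derivative as $q_s(x_\xi) - y_\xi$, giving $r(x_\xi) \le y_\xi$; equality on $\mathcal V_\xi$ is forced by $y_\xi = \sum_{s\in\mathcal V_\xi} p_s q_s(x_\xi)$. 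The condition $\det(F_\xi)>0$ reduces via Schur to $\det(Q_\xi)y_\xi>0$, which follows from Theorem~\ref{prop:pd}$(ii)$ and from $y^*>0$ (exhibited by any non-degenerate feasible measure). For the converse, once (\ref{eqn:226}) gives $r(x_\xi) = y_\xi$, I chain $y_* \le r(x_\xi) = y_\xi \le y^* = y_*$ using part $(i)$ to conclude $y_\xi = y^*$; combined with $C_\xi = y_\xi B_t/(t-1)$ for $\xi \in \mathcal P_0$, this matches the optimality characterization of Theorem~\ref{prop:pd}$(iii)$.

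Part $(iii)$ then drops out: $r(x_\xi) = y_\xi = y^* = y_*$ together with uniqueness of the minimizer of $r$ (already noted just above the theorem) forces $x_\xi = x^*$, and $q_\xi(x^*) = q_\xi(x_\xi) = y_\xi = y^*$. The main obstacle I anticipate is keeping the part $(i)$ subdifferential argument free of any implicit invertibility of $Q_{\xi^*}$: the classes $\langle(11\ldots11)\rangle$ and $\langle(12\ldots12)\rangle$ have rank-deficient $Q_s$, and even the constructed $\xi^*$ need not lie in $\mathcal P^+$. This is handled by working only with the critical-point equation $\nabla q_{\xi^*}(x^*) = 0$ plus convexity, never inverting $Q_{\xi^*}$ or using the closed form $x_\xi = -Q_\xi^{-1}\ell_\xi$. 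The Schur-complement identity in part $(ii)$ is tedious but mechanical, and it applies uniformly to Model (\ref{model:3}) and Model (\ref{model:2}) under the unified $F_\xi$, $Q_\xi$ notation introduced before (\ref{eqn:218}).
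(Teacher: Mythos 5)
Your proposal is correct and follows essentially the same route as the paper: your subdifferential/Carath\'eodory argument for $y_*=y^*$ is just the dual formulation of the paper's separating-direction argument (and, like the paper, it correctly avoids any inversion of $Q_{\xi^*}$), while for parts $(ii)$--$(iii)$ the paper simply invokes the standard GET/Schur-complement argument and the analogue of Theorem 2 in Zheng (2015), which is exactly what you spell out. Your key identity $tr(F_sF_{\xi}^{-1})-tr(Q_sQ_{\xi}^{-1})=q_s(x_{\xi})/y_{\xi}$ checks out by block inversion of $F_{\xi}$, and the rest of your chain of inequalities matches the intended argument.
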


Note that part ($ii$) takes the form of the well known general equivalence theory (GET). Equation (\ref{eqn:226}) permits the usage of many GET based algorithms, such as Federov's exchange algorithm, to derive a measure in ${\cal P}_0\cap {\cal P}^*$. Due to the invariance of the values of $c_{sij}$'s within each equivalence class, it is sufficient to only consider sequences in $\{s_1,s_2,...,s_m\}$ instead of ${\cal S}$ for both the maximization step and the measure updating step during the exchange algorithm. This treatment reduces the dimensionality of the optimizing problem from $|{\cal S}|=t^k$ to $m$. It allows us to find a measure in ${\cal P}_0\cap {\cal P}^*$ with the computational complexity of $O(m)$. Now $x^*$ and $y^*$ can be simply calculated based on part  ($iii$). Different measures in ${\cal P}_0\cap {\cal P}^*$ may have different $Q_{\xi}$ and $\ell_{\xi}$, but surely the same value of $Q_{\xi}^{-1}\ell_{\xi}$. Note there is no need to worry about the singularity issue of $Q_{\xi}$ in view of Theorem \ref{prop:pd}. Now we are ready to characterize all measures in ${\cal P}^*$ as follows.

\begin{theorem}\label{thm:3} 
Let ${\cal T}=\{s\in{\cal S}: q_{s}(x^*)=y_*\}$. Then $\xi\in {\cal P}^*$ if and only if
\begin{eqnarray}
\sum_{s\in {\cal T}}p_s[E_{s00}+E_{s01}(x^*\otimes B_t)]&=&y^*B_t/(t-1),\label{eq3-1}\\
\sum_{s\in {\cal T}}p_s[E_{s10}+E_{s11}(x^*\otimes B_t)]&=&0,\label{eq3-2}\\
\sum_{s\in {\cal T}}p_s&=&1,\label{eq3-3}
\end{eqnarray}
where the new notations $E_{s00},E_{s10}(=E_{s01}'),E_{s11}$ represent $C_{s00},(C_{s01},C_{s02}),(C_{sij})_{i,j=1,2}$ for Model (\ref{model:3}) and $C_{s00},C_{s01}+C_{s02},\sum_{i,j=1}^2C_{sij}$ for Model (\ref{model:2}).
\if(0){\begin{table}[htp]
\begin{center}
\begin{tabular}{|c|c|c|}
\hline
& Model (\ref{model:2}) & Model (\ref{model:3})\\\hline
$E_{s00}$ &$C_{s00}$&$C_{s00}$\\\hline
$E_{s01}$ &$C_{s01}+C_{s02}$&$(C_{s01},C_{s02})$\\\hline
$E_{s11}$ &$\sum_{i,j=1}^2C_{sij}$&$(C_{sij})_{i,j=1,2}$\\\hline
\end{tabular}
\end{center}
\label{default}
\end{table}%
}\fi
\end{theorem}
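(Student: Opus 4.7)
The plan is to pass through the auxiliary matrix-valued function $\Psi_\xi(Z):=E_{\xi 00}+E_{\xi 01}Z+Z'E_{\xi 10}+Z'E_{\xi 11}Z$ on $2t\times t$ matrices $Z$ (with scalar $x^*$ and $Z\in\mathbb{R}^{t\times t}$ under Model~(\ref{model:2})). Because each $E_{s11}=(\tilde{L}_s,\tilde{R}_s)'\tilde{B}(\tilde{L}_s,\tilde{R}_s)\succeq 0$, we have $E_{\xi 11}\succeq 0$, and completing the square yields the Schur identity
\begin{equation*}
\Psi_\xi(Z)-C_\xi=(Z-Z^*)'E_{\xi 11}(Z-Z^*)\succeq 0,\qquad Z^*=-E_{\xi 11}^{-}E_{\xi 10},
\end{equation*}
so $C_\xi$ is the Loewner minimum of $\Psi_\xi$, attained exactly when the first-order condition $E_{\xi 10}+E_{\xi 11}Z=0$ holds.

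\emph{Sufficiency.} Assume (\ref{eq3-1})--(\ref{eq3-3}). Then (\ref{eq3-2}) asserts that $Z=x^*\otimes B_t$ satisfies the first-order condition, so $\Psi_\xi(x^*\otimes B_t)=C_\xi$. Substituting (\ref{eq3-2}) back into $\Psi_\xi$ cancels the two $Z'$ cross terms, leaving $C_\xi=E_{\xi 00}+E_{\xi 01}(x^*\otimes B_t)=y^*B_t/(t-1)$ by (\ref{eq3-1}). Theorem~\ref{prop:pd}$(iii)$ then delivers $\xi\in\mathcal{P}^*$.

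\emph{Necessity.} Assume $\xi\in\mathcal{P}^*$. First I would establish $\mathcal{V}_\xi\subseteq\mathcal{T}$ (which gives (\ref{eq3-3})). Since $c_{sij}=tr(B_tC_{sij})$ is invariant under treatment permutations of $s$, the group-symmetrization $\xi^{(G)}:=|\mathcal{G}|^{-1}\sum_{\sigma\in\mathcal{G}}\sigma(\xi)$ satisfies $q_\xi\equiv q_{\xi^{(G)}}$. Because $\mathcal{P}^*$ is permutation-invariant by $(C.2)$ and convex (matrix-concavity of the Schur complement together with concavity of $\Phi$), $\xi^{(G)}\in\mathcal{P}_0\cap\mathcal{P}^*$, and Theorem~\ref{thm:1}$(iii)$ gives $q_\xi(x^*)=q_{\xi^{(G)}}(x^*)=y^*=y_*$. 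The decomposition $q_\xi(x^*)=\sum_sp_sq_s(x^*)$ combined with $q_s(x^*)\leq r(x^*)=y_*$ forces $q_s(x^*)=y_*$ on the support. The core step is then the matrix identity $\Psi_\xi(x^*\otimes B_t)=y^*B_t/(t-1)$. Setting $M:=\Psi_\xi(x^*\otimes B_t)-C_\xi\succeq 0$, a direct expansion gives $tr(B_t\Psi_s(x^*\otimes B_t))=q_s(x^*)$ for every single sequence $s$; summing this over $\mathcal{V}_\xi\subseteq\mathcal{T}$ and subtracting $tr(B_t\cdot y^*B_t/(t-1))=y^*$ produces $tr(B_tM)=0$. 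Moreover $\tilde{B}1_k=0$ and $B_t1_t=0$ force $M1_t=0$, and by symmetry $1_t'M=0$, so $tr(M)=tr(B_tM)=0$; a PSD matrix with zero trace vanishes, giving $M=0$. Evaluating the Schur identity at $Z=x^*\otimes B_t$ now reads $(x^*\otimes B_t-Z^*)'E_{\xi 11}(x^*\otimes B_t-Z^*)=0$, so $E_{\xi 11}(x^*\otimes B_t)=-E_{\xi 10}$, which is (\ref{eq3-2}); substituting (\ref{eq3-2}) into $M=0$ cancels the $Z'$ terms and yields (\ref{eq3-1}).

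The main obstacle is upgrading the scalar optimality conditions of Theorem~\ref{thm:1} (which live on the symmetric subclass $\mathcal{P}_0$) to full matrix equations valid for arbitrary, possibly non-symmetric, $\xi\in\mathcal{P}^*$. The three ingredients that make the lift succeed are the PSD Schur inequality $\Psi_\xi(Z)\succeq C_\xi$, the matching $B_t$-trace inherited from $q_s(x^*)=y_*$ on the support, and the uniform kernel condition $\tilde{B}1_k=0$; once these are aligned at $Z=x^*\otimes B_t$, equations (\ref{eq3-1}) and (\ref{eq3-2}) fall out cleanly, and Model~(\ref{model:2}) is handled by the same argument with a scalar $x^*$ and reduced blocks.
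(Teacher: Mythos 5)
Your proof is correct, and the necessity half takes a genuinely different route from the paper's. The paper proves necessity by averaging the given $\xi\in{\cal P}^*$ with a symmetric optimal measure $\xi_1$, invoking \cite{kushner:1997}'s equality conditions for the concavity of the information matrix to obtain two orthogonality relations of the form $E_{\xi 11}(E_{\xi 11}^+E_{\xi 10}-E_{\xi_2 11}^+E_{\xi_2 10})=0$, and then running a column-space argument to pin down $E_{\xi_2 11}^+E_{\xi_2 10}=-x^*\otimes B_t$, from which (\ref{eq3-2}) and then (\ref{eq3-1}) follow; the support condition ${\cal V}_\xi\subset{\cal T}$ is delegated to Lemma \ref{thm:112901} and to arguments borrowed from \cite{zheng:2015}. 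You instead work directly with the linearized matrix $\Psi_\xi(x^*\otimes B_t)$ and show that the PSD gap $M=\Psi_\xi(x^*\otimes B_t)-C_\xi$ has zero $B_t$-trace (from $\sum_s p_sq_s(x^*)=y_*=y^*$ on the support) and annihilates $1_t$ (from $\tilde B1_k=0$), hence vanishes; both (\ref{eq3-1}) and (\ref{eq3-2}) then drop out of $M=0$ and the completed square, with no Moore--Penrose or column-space manipulations needed. The two approaches lean on the same pillars --- Theorem \ref{prop:pd}$(iii)$, Theorem \ref{thm:1}, and the existence of a symmetric optimal measure (which you use only through the symmetrization $\xi^{(G)}$ to get $q_\xi(x^*)=y^*$, a step that could be shortened further via the chain $y^*=\mathrm{tr}(C_\xi)\le y_\xi\le q_\xi(x^*)\le y_*=y^*$) --- but your trace-plus-PSD argument is more self-contained and makes the equality cases transparent, at the cost of having to verify explicitly that $E_{\xi10}$ lies in the range of $E_{\xi11}$ and that convexity/permutation-invariance of ${\cal P}^*$ legitimizes the symmetrization; both facts are immediate from Theorem \ref{prop:pd}. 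The sufficiency direction is essentially identical in both treatments.
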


\subsection{Exact designs and algorithm}\label{sec:exact}
Equations (\ref{eq3-1})--(\ref{eq3-3}) indicate that it is sufficient to only consider sequences in ${\cal T}$ instead of ${\cal S}$ in the search of universally optimal measures. This reduces the computational burden tremendously. In general, ${\cal T}$ can be derived by the following steps along with the values of $x^*$ and $y^*$.
\begin{alg}\label{alg:T}{(For finding $x^*,y^*,{\cal T}$)} 
\begin{itemize}
\item[Step 0.] Specify representative sequences, $s_1$, ..., $s_m$, for each of the $m$ equivalence classes. 
\item[Step 1.] Maximize $y_{\xi}$ over all $\xi$ supported on $\{s_1,...,s_m\}$, and denote the maximizer by $\xi_0$.
\item[Step 2.] Calculate $x^*=-Q^{-1}_{\xi_0}\ell_{\xi_0}$ and $y^*=q_{\xi_0}(x^*)$.
\item[Step 3.] Identify the index set $A=\{i\in \mathcal{Z}_m: q_{s_i}(x^*)=y^*\}$.
\item[Step 4.] Recover ${\cal T}=\cup_{i\in A}\langle s_i\rangle$.
\end{itemize}
\end{alg}
Here, the maximization in step 1 can be achieved through an exchange algorithm based on the general equivalence theorem (GET) type of results in Theorem \ref{thm:1} with complexity $O(m)$. All other parts are calculated instantly. With the derived $(x^*,y^*,{\cal T})$, all universally optimal measures in ${\cal P^*}$ can now be recovered from the linear equations in Theorem \ref{thm:3}. In fact, these conditions can also be used to find an exact design that is either highly efficient or even optimal. To be specific, multiplying all terms in (\ref{eq3-1})--(\ref{eq3-3}) by $n$, we have 
\begin{eqnarray}\label{eqn:227}
\sum_{s\in {\cal T}}n_s
\left[
\begin{array}{c}
E_{s00}+E_{s01}(x^*\otimes B_t)-y^*B_t/(t-1)\\
E_{s10}+E_{s11}(x^*\otimes B_t)
\end{array}
\right]
&=&0
\end{eqnarray}
and $\sum_{s\in {\cal T}}n_s=n$ with all $n_s$ being non-negative integers. An exact design could be found by minimizing the Frobenius norm of the matrix on the left side of (\ref{eqn:227}) through an integer quadratic programming (IQP) solver such as {\it Gurobi}. As evidenced by simulation examples later in this paper, the exact designs such found are highly efficient or even optimal under various criteria. 

Alternatively, one can construct a symmetric exact design without resorting to IQP. The following theorem represents all symmetric universally optimal measures in terms of linear equations. This means (\ref{eqn-1})--(\ref{eqn-2}) is equivalent with (\ref{eqn:226}) when ${\cal S}$ therein is replaced by ${\cal T}$, but they serve different purposes. While the latter facilitates the calculation of $(x^*,y^*,{\cal T})$, the former helps the direct construction of symmetric measures or designs.

\begin{theorem}\label{thm:3.4}
For a symmetric measure $\xi=\{p_s: s\in {\cal S}\}\in {\cal P}_0$, we have $\xi \in {\cal P}^*$ if and only if
\begin{eqnarray}
\sum_{s\in {\cal T}} p_s (\ell_s+Q_sx^*)&=&0,\label{eqn-1}\\
\sum_{s\in {\cal T}} p_s&=&1\label{eqn-2}.
\end{eqnarray}
\end{theorem}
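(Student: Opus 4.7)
The plan is to leverage Theorem~\ref{prop:pd}(iii) to reduce universal optimality of a symmetric measure to the single scalar condition $y_\xi=y^*$, and then unpack that condition via the convex-quadratic machinery into the two linear equations in the statement.

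\textbf{Step 1 (reduction to $y_\xi=y^*$).} For every $\xi\in\mathcal{P}_0$, display~(\ref{eqn:218}) gives $C_\xi = y_\xi B_t/(t-1)$, so by Theorem~\ref{prop:pd}(iii) a symmetric $\xi$ lies in $\mathcal{P}^*$ if and only if $y_\xi=y^*$. The remainder of the proof recasts $y_\xi=y^*$ as the system (\ref{eqn-1})--(\ref{eqn-2}).

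\textbf{Step 2 (unpacking $y_\xi=y^*$).} Because $q_s(x)\le r(x)$ for every $s\in\mathcal{S}$, taking convex combinations yields $q_\xi(x)\le r(x)$, hence
$$y_\xi=\min_{x}q_\xi(x)\le q_\xi(x^*)\le r(x^*)=y^*.$$
Thus $y_\xi=y^*$ holds iff both (a) $q_\xi(x^*)=y^*$ and (b) $x^*$ is a minimizer of $q_\xi$. Condition (a), written as $\sum_{s}p_s\,q_s(x^*)=y^*$, combined with the inequality $q_s(x^*)\le y^*$ (with equality exactly on $\mathcal{T}$, by the definition of $\mathcal{T}$), forces $p_s=0$ whenever $s\notin\mathcal{T}$, i.e.~is equivalent to (\ref{eqn-2}). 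Condition (b) is equivalent to the vanishing first-order condition $\ell_\xi+Q_\xi x^*=0$; after substituting the linear representations $\ell_\xi=\sum_s p_s\ell_s$ and $Q_\xi=\sum_s p_s Q_s$ and using (a) to restrict the support, this becomes (\ref{eqn-1}). Both directions of the biconditional are now symmetric: forward, $\xi\in\mathcal{P}^*\Rightarrow y_\xi=y^*\Rightarrow$~(a) and (b) $\Rightarrow$~(\ref{eqn-1})--(\ref{eqn-2}); backward, (\ref{eqn-1}) and (\ref{eqn-2}) $\Rightarrow$~(a) and (b) $\Rightarrow y_\xi=y^*\Rightarrow\xi\in\mathcal{P}^*$.

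Since every link in the chain rests on a result already in place, there is no deep obstacle. The one point to watch is that, in the reverse direction, I am concluding $x^*$ is a global minimizer of $q_\xi$ from the single first-order condition $\ell_\xi+Q_\xi x^*=0$; this step is valid because each $Q_s$ is positive semi-definite, hence so is $Q_\xi$, making $q_\xi$ a convex quadratic, so any critical point is automatically a global minimum and no separate positive-definiteness hypothesis on $Q_\xi$ needs to be invoked.
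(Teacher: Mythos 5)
Your proposal is correct and follows essentially the same route as the paper: both reduce universal optimality of a symmetric measure to the scalar condition $y_\xi=y^*$ via Theorem \ref{prop:pd}($iii$), identify (\ref{eqn-1}) with the first-order condition $\nabla q_\xi(x^*)=0$ for the convex quadratic $q_\xi$, and obtain (\ref{eqn-2}) from the support restriction $q_s(x^*)\le y^*$ with equality exactly on ${\cal T}$. The only difference is cosmetic: where the paper invokes its earlier lemmas (e.g.\ Lemma \ref{thm:112901}) for the necessity direction, you re-derive the needed facts explicitly through the inequality chain $y_\xi\le q_\xi(x^*)\le r(x^*)=y^*$, and you correctly note that positive semi-definiteness of $Q_\xi$ suffices for the critical point to be a global minimizer.
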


{Note there is only one or two linear equations to solve here and the calculation of the optimal proportions becomes trivial. Compared with the IQP approach, this approach is much faster, meanwhile it is not flexible in $n$, i.e., the number of blocks, since we have to make sure the design is symmetric. When the representative sequences with their associated weights are derived from (\ref{eqn-1})--(\ref{eqn-2}), we need to expand each representative sequence to a set of sequences to make sure the resulting design is symmetric. Typically, an orthogonal array of type I ($OA_I$) is used in this symmetrization step (see Example \ref{example4}). Due to the structure of $OA_I$, the number of blocks for such an exact design will be a multiple of $t(t-1)$. In fact, existing work on the current design problem has all been adopting this symmetric design approach.

Lastly, the methods laid out in this section is applicable to the crossover design with the model 
\begin{eqnarray}
Y_{dij}=\mu+\beta_i+\tau_{d(i,j)}+\lambda_{d(i,j-1)}+\varepsilon_{ij}.\label{eqn:modelcrossover}
\end{eqnarray}
Compared with Model (\ref{model:3}), we only have the neighbor effect from the left in (\ref{eqn:modelcrossover}). This is because the index $j$ represents the time and we only have the carryover effects from the past treatment rather than the future treatment. \cite{bailey:2004} gave the exact form of ${\cal T}$ for this model, but did not discuss the construction of exact designs. We shall illustrate our method of producing exact designs for this model in Examples \ref{example1} and \ref{example2}.


\section{Theoretical form of ${\cal T}$}\label{sec:sequence}
Based on the discussion in Section \ref{sec:exact}, it is crucial to derive the triplet $(x^*,y^*,{\cal T})$ for the search of optimal or highly efficient exact designs. Algorithm \ref{alg:T} is the state of art tool for finding $(x^*,y^*,{\cal T})$ in a general setup. However, this algorithm could still become infeasible as the design size further grows. Specifically, it has the complexity of $O(m)$, where $m$ is the number of different equivalence classes. Note $m$ increases superexponentially with respect to $k$ and $t$, and empirically we find the algorithm to be no longer affordable when $k\geq 15$. Hence there is a need to provide the theoretical form of ${\cal T}$ whenever possible. We achieve this by assuming $\Sigma=I$. This assumption is also adopted by existing literatures, wherein \cite{druilhet:2012} numerically tabulated optimal designs for $k\leq 12$ and listed the exact form of ${\cal T}$ for $k\leq 12$. Here we shall provide the theoretical form of ${\cal T}$ for all combinations of $k$ and $t$. Besides the computational benefit, such a theoretical form also provides insights into what forms of sequences are typically preferred. {The challenging part is that there is no pattern of supporting sequences from the computational result when $k\leq 14$ so that it is not easy to even guess about the forms supporting sequences through the aids of a computer.}

The discussion is carried out in two parts. Section \ref{sec:sequence2} deals with the cases when the set ${\cal T}$ can be directly described. This is possible when $t=2,3$ with any $k$. In all these cases, the cardinality of $\cal T$ is reasonably small. The much more complicated situation of $t>3$ is studied separately in Section \ref{sec:largekt}, where two types of results are presented. In Section \ref{sec:theoreticalT}, a subset which contains ${\cal T}$ but much smaller than ${\cal S}$ is provided. As a result, an algorithm can further be used to recover ${\cal T}$ from this subset very quickly. Alternatively in Section \ref{sec:higheff}, we provide a substitute of ${\cal T}$ which contains only one equivalence class but produces highly efficient designs.

\subsection{The straightforward cases: $t=2, 3$}\label{sec:sequence2}
As mentioned above, the triplet $(x^*,y^*,{\cal T})$ plays an essential role in finding optimal measures or designs. Even though we have laid out a general idea to derive them without knowing the optimal measure, it will still become a daunting work to carry out the computation when $k$ and $t$ are large. Hence it is crucial to know their theoretical forms in such cases. These results are also important for obtaining insights into preferred structures of block sequences for any design size. To do that, we first need to distinguish the notation $(x^*,y^*,{\cal T})$ for Models (\ref{model:3}) and (\ref{model:2}) by $(x^*_1,y^*_1,{\cal T}_1)$ and $(x^*_2,y^*_2,{\cal T}_2)$, respectively. In this section, Theorem \ref{thm:5} establishes the connection between these two triplets as well as the connection of the corresponding optimal measures. Such connections guide us to study these two models in an interactive way. 
In Theorem \ref{thm:table}, for $t\leq 3$, we derive the explicit expression of $(x^*,y^*,{\cal T})$ which shows that the cardinality of $\cal T$ is reasonably small. Section \ref{sec:largekt} studies the case of $k>10$ and $t>3$ that is not covered by Theorem \ref{thm:table}.

To introduce Theorem \ref{thm:5}, we define the dual of a sequence $s=(t_1,t_2,...,t_k)$ by reversing the positions of the treatments, that is $s'=(t_k,...,t_1)$. With the definition $p_{\langle s\rangle}=\sum_{\tilde{s}\in \langle s\rangle}p_{\tilde{s}}$, a measure is said to be \emph{self-dual} if $p_{\langle s\rangle}=p_{\langle s^{\prime}\rangle}$ for any $s\in{\cal S}$.


\begin{theorem}\label{thm:5}
If $\Sigma$ is persymmetric, the following hold.
\begin{itemize}
\item[($i$)] $x^*_1=(x^*_2, x^*_2)$, $y^*_1=y^*_2$ and ${\cal T}_1={\cal T}_2$.
\item[($ii$)] For any criterion function satisfying $(C.1)$--$(C.3)$, the efficiency of any given measure under Model (\ref{model:2}) is greater than or equal to that under Model (\ref{model:3}).
\item[($iii$)] The efficiency of a symmetric self-dual measure is same under Models (\ref{model:3}) and (\ref{model:2}).
\end{itemize}
\end{theorem}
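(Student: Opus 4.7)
The entire argument turns on a reversal duality induced by the persymmetry of $\Sigma$. I will first introduce the $k\times k$ exchange matrix $E_k$ with $(E_k)_{ij}=\mathbb{I}(i+j=k+1)$, so persymmetry reads $E_k\Sigma E_k=\Sigma$; this propagates to $E_k\Sigma^{-1}E_k=\Sigma^{-1}$ and thence to $E_k\tilde{B}E_k=\tilde{B}$. A short direct check will yield the commutation $H_{\rm c}E_k=E_kH_{\rm c}'$. Writing $s'=(t_k,\ldots,t_1)$ for the reverse of $s=(t_1,\ldots,t_k)$, one then gets $T_{s'}=E_kT_s$, $\tilde{L}_{s'}=E_k\tilde{R}_s$, and $\tilde{R}_{s'}=E_k\tilde{L}_s$. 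Sandwiching these with $\tilde{B}$ and tracing against $B_t$ produces the scalar identities
\begin{equation*}
c_{s'00}=c_{s00},\quad c_{s'01}=c_{s02},\quad c_{s'02}=c_{s01},\quad c_{s'11}=c_{s22},\quad c_{s'22}=c_{s11},\quad c_{s'12}=c_{s12},
\end{equation*}
equivalently $q_{s'}(x_1,x_2)=q_s(x_2,x_1)$.

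For part~(i), since reversal is an involution of $\mathcal{S}$, the function $r^{(1)}(x_1,x_2)=\max_{s\in\mathcal{S}}q_s(x_1,x_2)$ is symmetric in its two arguments. Combined with the convexity of $r^{(1)}$ and the uniqueness of its minimizer (Theorem \ref{thm:1}), this forces $x_1^*=(a,a)$ for some scalar $a$. The pointwise identity $q^{(2)}_s(x)=q^{(1)}_s(x,x)$ (which follows directly from the definitions of $\ell_\xi$ and $Q_\xi$ under the two models) gives $r^{(2)}(x)=r^{(1)}(x,x)$, so $y^*_2=r^{(1)}(a,a)=y^*_1$ and $x^*_2=a$. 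The equality $\mathcal{T}_1=\mathcal{T}_2$ then drops out of the definition $\mathcal{T}_i=\{s:q_s^{(i)}(x^*_i)=y^*_i\}$.

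For part~(ii), I will observe that for each block with sequence $s$, the Model~(\ref{model:2'}) nuisance regressor $\tilde{L}_s+\tilde{R}_s$ has column space contained in that of the Model~(\ref{model:3'}) regressor $[\tilde{L}_s,\tilde{R}_s]$. The projection formula for information matrices after eliminating nuisance parameters then delivers $C^{(2)}_\xi\geq C^{(1)}_\xi$ in the Loewner order for every measure $\xi$. By (i), the two models share the same optimal $\Phi$-value, and any information function satisfying $(C.1)$--$(C.3)$ is Loewner isotonic; hence the Model~(\ref{model:2}) efficiency of $\xi$ dominates its Model~(\ref{model:3}) efficiency.

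For part~(iii), the reversal identities applied to a symmetric self-dual $\xi$ give $c_{\xi 01}=c_{\xi 02}=:a$ and $c_{\xi 11}=c_{\xi 22}=:b$; setting $c_{\xi 12}=c$, a short calculation with $\ell_\xi=(a,a)'$ and $Q_\xi=\bigl(\begin{smallmatrix}b&c\\c&b\end{smallmatrix}\bigr)$ yields $\ell_\xi'Q_\xi^{-1}\ell_\xi=2a^2/(b+c)$ for Model~(\ref{model:3}), matching the scalar expression $(2a)^2/(2b+2c)$ for Model~(\ref{model:2}). Together with the symmetric representation $C^{(i)}_\xi=y^{(i)}_\xi B_t/(t-1)$, this gives $C^{(1)}_\xi=C^{(2)}_\xi$ and hence equal efficiency. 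The main obstacle is the very first step: establishing $\tilde{L}_{s'}=E_k\tilde{R}_s$ and $\tilde{R}_{s'}=E_k\tilde{L}_s$ relies on the non-obvious commutation $H_{\rm c}E_k=E_kH_{\rm c}'$ together with the persymmetry of $\tilde B$; once these are in hand, the rest is clean algebraic bookkeeping on the resulting symmetries of the $c_{sij}$'s.
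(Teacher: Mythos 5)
Your proposal is correct, and it is considerably more informative than what the paper actually provides: the paper's ``proof'' of Theorem \ref{thm:5} is a one-line deferral (``can be derived similarly as Lemma 5 / Theorem 5 in \cite{zheng:2015}''), and the reversal-duality machinery you build --- $E_k\tilde{B}E_k=\tilde{B}$ from persymmetry, the commutation $H_{\rm c}E_k=E_kH_{\rm c}'$, hence $c_{s'01}=c_{s02}$, $c_{s'11}=c_{s22}$, $c_{s'12}=c_{s12}$ and $q_{s'}(x_1,x_2)=q_s(x_2,x_1)$ --- is exactly the standard route that the cited reference takes, so you are reconstructing the intended argument rather than inventing a different one. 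The symmetry of $r^{(1)}$ plus the uniqueness of its minimizer (which the paper secures via Theorem \ref{prop:pd}) correctly pins $x_1^*$ to the diagonal, the identity $q_s^{(2)}(x)=q_s^{(1)}(x,x)$ then yields $(i)$, and the nested-column-space comparison $\mathrm{span}(\tilde L_s+\tilde R_s)\subseteq\mathrm{span}([\tilde L_s,\tilde R_s])$ gives the Loewner ordering needed for $(ii)$. Two small points to tighten: Loewner isotonicity is not literally one of $(C.1)$--$(C.3)$, so either note that it follows from concavity plus scalar monotonicity (write $C_1+D=\alpha(C_1/\alpha)+(1-\alpha)(D/(1-\alpha))$ and let $\alpha\uparrow 1$) or restrict to the alphabetic criteria, which are all isotonic; and in $(iii)$ your computation $\ell_\xi'Q_\xi^{-1}\ell_\xi=2a^2/(b+c)$ assumes $b\neq\pm c$, so you should either invoke Theorem \ref{prop:pd}$(ii)$ to rule out singular $Q_\xi$ for the measures of interest or handle the generalized inverse explicitly. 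Neither gap is structural.
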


Part ($i$) indicates that we can find ${\cal T}_1$ by working on the simpler task of finding ${\cal T}_2$. Part ($ii$) indicates that it is sufficient to work solely on Model (\ref{model:3}), and the derived measure is automatically suited for Model (\ref{model:2}). Particularly, the universal optimality of a measure under Model (\ref{model:3}) implies its universal optimality under Model (\ref{model:2}), and its reverse is implied by part ($iii$). The total effects are not estimable for any $\Sigma$ under Models (\ref{model:3}) and (\ref{model:2}) when $k\leq3$. These results not only help get around the computational bottleneck for large $k$ and $t$, but also provides insight on preferred arrangements of treatments within a block. To precede, we call a matrix to be of type-H if it can be expressed in the form $a I_{k}+b 1_{k}^{\prime}+1_{k} b^{\prime}$ for $a \in$ $\mathbb{R}^{+}$ and $b \in \mathbb{R}^{k}$. It covers the special case of completely symmetric matrices and the most often adopted case of identity matrix in relevant literature.

\begin{theorem}\label{thm:table}
Suppose $\Sigma$ is of type-H in Model (\ref{model:2}), ${\cal T}_2$ and $x_2^*$ are derived and displayed in Table \ref{tb:theorem} for cases of ($i$) $t=2$ and $k\geq 4$, ($ii$) $t=3$ and $k\geq 4$.
If $\Sigma$ is also persymmetric, we have: $x^*_1=(x^*_2, x^*_2)$ and ${\cal T}_1={\cal T}_2$ for Model (\ref{model:3}).
\end{theorem}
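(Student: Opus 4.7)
The plan is to split the proof into two halves. The second claim, that under persymmetry of $\Sigma$ the triplet for Model (\ref{model:3}) is obtained from the one for Model (\ref{model:2}) via $x_1^*=(x_2^*,x_2^*)$ and ${\cal T}_1={\cal T}_2$, follows immediately from Theorem \ref{thm:5}(i). So all the real work lies in establishing Table \ref{tb:theorem} for the undirectional model (\ref{model:2}) under the assumption that $\Sigma$ is of type-H.

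First I would simplify the building block $\tilde{B}$ under the type-H assumption. Writing $\Sigma=aI_k+b1_k'+1_kb'$ and exploiting the identity $\tilde{B}1_k=0$ together with Sherman--Morrison, I expect to obtain $\tilde{B}=a^{-1}B_k$, possibly up to an additive rank-one term that is annihilated when sandwiched between the centered design matrices $\tilde{L}_d,\tilde{R}_d$. Once this reduction is in place, every scalar $c_{sij}$ for a block sequence $s$ becomes an explicit combinatorial function of (a) the replication numbers of the treatments in $s$, (b) the number of positions where the treatment changes (equivalently the number of ``runs'' under the circular adjacency), and (c) for $c_{s01}-c_{s02}$ the signed pattern of left-versus-right changes. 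In particular $q_s(x)$, viewed as a quadratic in the scalar $x\in\mathbb{R}$, will depend only on these counts, so the problem of maximizing over ${\cal S}$ collapses to a problem over a small family of count vectors.

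Next I would invoke Theorem \ref{thm:1} to locate $x_2^*$ as the unique minimizer of $r(x)=\max_{s}q_s(x)$, and ${\cal T}_2=\{s:q_s(x_2^*)=y_2^*\}$ as the set of sequences that are simultaneously active at the minimax point. For $t=2$, an equivalence class is described by its run structure, so the $q_s$'s form a one-parameter family of parabolas, and the minimum of the upper envelope is attained at a single kink where two (or more) of them cross. I would write down the two parabolas governing the envelope near the candidate $x_2^*$, solve the linear equation $q_{s}(x)=q_{s'}(x)$ to pin down $x_2^*$ in closed form, then verify (by plugging back into $q_{s''}(x_2^*)$ for every other class) that no other equivalence class attains this value. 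That yields Table \ref{tb:theorem} row for $t=2$. For $t=3$ the catalogue of equivalence classes is larger but still parametrized by the run-length composition of $k$ into at most three parts, so the same strategy applies: guess the active classes (the heuristic that optimal sequences group each treatment into a single arc of adjacent plots, mentioned in the introduction, predicts which compositions to try), solve the kink equation among them for $x_2^*$ and $y_2^*$, and verify dominance for every remaining class by comparing quadratics at $x_2^*$.

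The main obstacle will be the $t=3$ verification, because one must rule out \emph{uniformly in $k\geq 4$} all non-candidate equivalence classes. The cleanest route is to introduce the compositional parameters $(n_1,n_2,n_3)$ and the number of runs $r$, write $q_s(x_2^*)-y_2^*$ as a bivariate polynomial in these parameters with $k$ appearing as a coefficient, and show this polynomial is nonpositive with equality only at the listed classes via an elementary monotonicity argument in each parameter (fixing the others and moving along feasible integer directions). Once this dominance check is carried out, Theorem \ref{thm:3} confirms that the listed ${\cal T}_2$ is exactly the support set at optimality, and Theorem \ref{thm:5}(i) transports the conclusion to Model (\ref{model:3}) under persymmetry, completing the proof.
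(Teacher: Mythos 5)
Your handling of the second claim (persymmetry plus Theorem \ref{thm:5}($i$) transports $(x_2^*,{\cal T}_2)$ to Model (\ref{model:3})) and your overall minimax strategy — locate $x_2^*$ as the kink of the upper envelope of per-class parabolas, then verify dominance of every other class at that point — coincide with the paper's proof. The gap is in the combinatorial reduction on which your dominance check rests. Under a type-H $\Sigma$ the paper's quadratic is $q_s(x)=k-\chi_s/k+4(\gamma_s-k)x+(6k+2\psi_s-8\gamma_s)x^2$, which involves, besides the replication counts (through $\chi_s=\sum_i f_{s,i}^2$) and the adjacent-coincidence count $\gamma_s$ (equivalently the number of circular runs), the \emph{distance-two} coincidence count $\psi_s=|\{j:t_{j-1}=t_{j+1}\}|$. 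This third invariant is absent from your list (a)--(c) and is \emph{not} determined by the frequencies and the run count: for $t=2$, $k=8$, the sequences $(1,1,1,2,2,2,1,2)$ and $(1,1,2,2,1,1,2,2)$ both have frequencies $(4,4)$ and four circular runs, yet $\psi=4$ for the first and $\psi=0$ for the second, so their $q_s$ differ. Hence your plan to write $q_s(x_2^*)-y_2^*$ as a polynomial in the composition $(n_1,n_2,n_3)$ and the run count $r$ and argue monotonicity cannot work as stated. Worse, the classes $s_b=(1_{\lambda}'\otimes(12))$ and the interleaved $(23)$-blocks that actually belong to ${\cal T}_2$ are there precisely because they maximize $\psi_s$; the statistic you drop is the one that selects half of the answer.

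Even with $\psi_s$ restored, the $t=3$ verification is substantially harder than an ``elementary monotonicity argument in each parameter.'' The paper has to classify treatments by type (treatment $i$ is type-$j$ when $\psi_{s,i}=f_i-j$), enumerate the achievable values of $\psi_s$ near its maximum, and in each case weigh the gain in $0.32(\gamma_s+\psi_s)$ against the loss from $\chi_s/k$ at the candidate $x_2^*$ — and even then it only carries the analytic argument for $k\geq 48$, delegating $4\leq k<48$ (and the small-$k$, $t\geq4$ entries) to exhaustive computation. Your write-up should either budget for such a case analysis over the type profile $(n_0,n_1,n_2,\dots)$ or state explicitly that small $k$ is checked by finite enumeration.
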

Note that the results in Table \ref{thm:table} still work for Model (\ref{model:2}) if we remove the dual sequences $s_a'$ and $s_b'$ since $c_{sij}=c_{s'ij}$ for $0\leq i,j\leq 2$ under this model. 
\begin{center}
\begin{table}
\caption{Representative sequences of ${\cal T}_1$ and $x^*_1$ for Theorem \ref{thm:table} with integers $\lambda\geq 2$, $\mu\geq 3$.}\label{tb:theorem}
\begin{tabular}{|c|c|c|c|c|c|}
  \hline
  $t$&$k$ & representative sequences of equivalence classes in ${\cal T}_1$ & $x_1^*$\\ \hline
$t=2$  &$2\lambda $& $s_{a}=(1_{\lambda}^{\prime}|2\cdot1_{\lambda}^{\prime})$, $s_b=(1_{\lambda}^{\prime}\otimes(12))$  & $(\lambda-1)/(2\lambda-1)$ \\
  &$2\lambda+1 $&  $s_a=(1_{\lambda+1}^{\prime} |2\cdot1_{\lambda}^{\prime})$, $s_b=(1_{\lambda}^{\prime}\otimes(12)|1)$ &$(\lambda-2)/(2\lambda-3)$\\\hline
$t= 3$&4  &  $s_a=(1123)$, $s_b=(1213)$ &$1/3$ \\
  &5 & $s_a=(11223)$, $s_b=(11232)$, $s_c=(12323)$ & $2/5$ \\
  &$6$ & $s_a=(112233)$, $s_b=(112323)$ & $2/5$\\
  &7 & $s_a=(1112223)$, $s_b=(1112323)$ & $(28+\sqrt{532})/126$ \\

      &$8$ & $s_a=(11122333)$, $s_b=(11112323)$, $s_b'$ & $3/7$\\
  &$3\mu$ & $s_a=(1^{\prime}_{\mu}|2\cdot1^{\prime}_{\mu}|3\cdot1^{\prime}_{\mu})$, $s_b=(1^{\prime}_{\mu+1}|1^{\prime}_{\mu-1}\otimes(23)|2)$, $s_b'$ & $(2\mu-2)/(4\mu-3)$\\
  &$3\mu+1$ & $s_a=(1^{\prime}_{\mu}|2\cdot1^{\prime}_{\mu+1}|3\cdot1^{\prime}_{\mu})$, $s_b=(1^{\prime}_{\mu+1}|1^{\prime}_{\mu}\otimes(23))$,  $s_b'$& $(2\mu-2)/(4\mu-3)$\\
  &$3\mu+2$& $s_a=(1^{\prime}_{\mu+1}|2\cdot1^{\prime}_{\mu}|3\cdot1^{\prime}_{\mu+1})$, $s_b=(1^{\prime}_{\mu+1}|1^{\prime}_{\mu}\otimes(23)|2)$, $s_b'$ & $(2\mu-1)/(4\mu-1)$ \\
  \hline
\end{tabular}
\end{table}
\end{center}

\subsection{The complicated case: $t>3$}\label{sec:largekt}
The purpose of this section is to study $\cal T$ with $t> 3$, that is not yet covered in the previous section. The reason for this separated investigation is because there is no clear pattern of $\cal T$ as in the earlier cases. As a result, this large design scenario can not be determined as directly as in Table \ref{tb:theorem}. This explains why there is a lack of general theory for a broad range of $k$ and $t$ in literature. To the best of our knowledge, the exact form of ${\cal T}$ has not yet been given in any literature for circular designs in estimating total effects when $k$ and $t$ are large. We provide two types of results here. In Section \ref{sec:theoreticalT}, a subset which contains ${\cal T}$ but much smaller than ${\cal S}$ is provided. As a result, an algorithm can further be used to recover ${\cal T}$ from this subset very quickly. Alternatively in Section \ref{sec:higheff}, we find efficient measures based on a single equivalence class. Even with this simplified approach, the pattern of the efficient measures only begins to reveal itself when $k$ is beyond $20$. As will be shown in Theorem \ref{thm:11}, the efficiency of our proposed measure converges to $1$ at the rate of $0.04/\sqrt{0.96k}$.

\subsubsection{The exact form of ${\cal T}$}\label{sec:theoreticalT}
In this section, we find a small subset of sequences which contains ${\cal T}$ so that ${\cal T}$ can be recovered from this subset with the computational complexity of $O(k^2)$. First, we define a sequence $s(k,k_1,t_1,t_2)$ of length $k$ in a recursive way as follows. 
\begin{eqnarray}
&&\quad\tilde{s}(k,t_1,t_2)=(~(t_1+1)\cdot1_{f_{t_1+1}}'~|~(t_1+2)\cdot1_{f_{t_1+2}}'~|~\cdots~|~(t_1+t_2)\cdot1_{f_{t_1+t_2}}'), \label{eqn:121001}\\
&&\quad\quad {\rm where}~ 1+f_{t_1+t_2}\geq f_{t_1+1}\geq f_{t_1+2}\cdots\geq f_{t_1+t_2-1}\geq f_{t_1+t_2}, ~{\rm and}~ \sum^{t_1+t_2}_{j={t_1+1}}f_j=k;\notag\\
&&\quad\hat{s}(k,t)={\cal M}(~\tilde{s}(k-\lfloor k/2\rfloor,0,t-\lfloor t/2\rfloor),~\tilde{s}(\lfloor k/2\rfloor,t-\lfloor t/2\rfloor,\lfloor t/2\rfloor)~);\label{eqn:121002}\\
&&\quad s(k,k_1,t_1,t_2)=(~\hat{s}(k_1,t_1)~|~\tilde{s}(k-k_1,t_1,t_2)~). ~~(0\leq k_1\leq k) \label{eqn:121003}
\end{eqnarray}
The sequence in (\ref{eqn:121003}) is uniquely determined by parameters $(k,k_1,t_1,t_2)$, and will be called a {\it candidate sequence} hereafter. In (\ref{eqn:121002}), we used an operator ${\cal M}$ which intertwines two sequences. Specifically, for sequences ${a}=(a_1,\ldots,a_p)$ and ${b}=(b_1,\ldots,b_p)$ of the same length $p$, we have ${\cal M}({a},{b})=(a_1,b_1,\ldots,a_p,b_p)$. If instead ${b}=(b_1,\ldots,b_{p-1})$ is of length $p-1$, we have ${\cal M}({a},{b})=(a_1,b_1,\ldots,a_{p-1},b_{p-1},a_p)$. {For a candidate sequence to be well defined, we require 
sequences (\ref{eqn:121001}) and (\ref{eqn:121002}) vanish to {\it empty} sequences with zero length if $k=0$, and $t_2>0$ when $k-k_1>0$ and $k_1,t_1\geq 2$ when $k_1>0$ in (\ref{eqn:121003}).} We give a toy example to illustrate the construction of the candidate sequence. Suppose $(k,k_1,t_1,t_2)=(21,13,4,3)$. Then $\hat{s}(k_1,t_1)=\hat{s}(13,4)={\cal M}(\tilde{s}(7,0,2),\tilde{s}(6,2,2))={\cal M}((1,1,1,1,2,2,2),$ $(3,3,3,4,4,4))=(1,3,1,3,1,3,1,4,2,4,2,4,2)$ and we have $\tilde{s}(k-k_1,t_1,t_2)=\tilde{s}(8,4,3)=(5,5,5,6,6,6,7,7)$. Finally, we generate $s(21,13,4,3)=(1,3,1,3,1,3,1,4,2,4,2,4,2,5,$ $5,5,6,6,6,7,7)$. Hence $k=21$ is the length of the sequence $s(k,k_1,t_1,t_2)$ which contains $t_1+t_2=7$ treatments, $k_1=13$ is the length of the intertwined subsequence $\hat{s}(k_1,t_1)$ which contains $t_1=4$ treatments. First, we show that the values of $x^*$ and $y^*$ can be computed based on a quite small subset of sequences as defined by 
$${\cal S}^*=\{s(k,k_1,t_1,t_2): 0\leq t_1\leq t_1+t_2\leq \min(4\sqrt{k}+2,t),0\leq k_1\leq k\}.$$
Meanwhile, as will be shown in Theorem \ref{thm:121101}, ${\cal S}^*$ also plays a crucial role for recovering ${\cal T}$.

\begin{theorem}\label{lem:121001}
Suppose $\Sigma$ is of type-H. For $k>10$ and $t>3$, we have
\begin{eqnarray}
y^*&=&\min_x\max_{s\in{\cal S}^*}q_s(x);\label{eqn:121005}\\
x^*&=&\arg\min_x\max_{s\in{\cal S}^*}q_s(x).\label{eqn:121006}
\end{eqnarray}
\end{theorem}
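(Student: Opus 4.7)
Since ${\cal S}^*\subseteq{\cal S}$, the inequality $\min_x\max_{s\in{\cal S}^*}q_s(x)\leq y^*$ is automatic. Let $\tilde{x}$ minimize $r^*(x):=\max_{s\in{\cal S}^*}q_s(x)$ and write $\tilde{y}:=r^*(\tilde{x})$. The crux of the lemma is the pointwise dominance
\begin{equation}
q_s(\tilde{x})\leq\tilde{y}\qquad\text{for every }s\in{\cal S}.\label{eq:plandom}
\end{equation}
Granting (\ref{eq:plandom}), we have $r(\tilde{x})\leq\tilde{y}\leq y^*=\min_x r(x)\leq r(\tilde{x})$, so equality holds throughout; combined with the already-recorded uniqueness of the minimizer of $r(x)$ (see (\ref{eqn:2262})), this forces $\tilde{x}=x^*$ and $\tilde{y}=y^*$, giving (\ref{eqn:121005})--(\ref{eqn:121006}).

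The heavy lifting is proving (\ref{eq:plandom}). Under the type-H assumption, $\tilde{B}$ is a positive multiple of $B_k$, so the entries of $C_{sij}$ admit closed-form expressions as linear combinations of the treatment-multiplicity profile $(f_1,\ldots,f_{t_0})$ of $s$ (with $t_0$ the number of distinct treatments used by $s$) and the counts of ordered adjacent-treatment pairs in $s$. Substituting these into $q_s(x)$ turns (\ref{eq:plandom}) into an explicit polynomial inequality in the profile and adjacency data, and the proof splits into two stages. \emph{Stage A} bounds $t_0$: I would show that if $t_0>\min(4\sqrt{k}+2,t)$ then $q_s(\tilde{x})<\tilde{y}$ strictly, realising the introduction's observation that the optimal number of distinct treatments is of order $\sqrt{k}$; the multiplier $4$ provides a uniform cushion covering all $k>10$ and all admissible $\tilde{x}$, the latter being first localised to a narrow interval by evaluating a handful of reference candidate sequences such as the pure-block $s(k,0,0,t)$. \emph{Stage B} fixes $t_0\leq\min(4\sqrt{k}+2,t)$ and shows that any such $s$ can be transformed into a candidate sequence $s(k,k_1,t_1,t_2)$ by a finite chain of local exchanges---coalescing broken runs into contiguous ones, equalising run lengths up to a difference of one, and grouping the ``intertwined'' segment into the $\hat{s}$-block defined in (\ref{eqn:121002})---each of which weakly increases $q_s(\tilde{x})$.

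Stage B is the main obstacle. Unlike the classical crossover setting analysed by \cite{kushner:1997}, where supporting sequences are of pure block form, the optimal sequences here mix an intertwined region on $t_1$ treatments with a pure-block region on $t_2$ treatments, and the split point $k_1$ ranges over all of $[0,k]$ as $(k,t)$ varies. Each local exchange perturbs $c_{s00}$, $\ell_s$, $Q_s$ and the adjacency counts simultaneously, so its effect on $q_s(\tilde{x})$ must be traced through a case analysis determined by where the exchange falls relative to the $(k_1,t_1,t_2)$ partition. I plan to treat the pure-block case $k_1=0$ and the pure-intertwine case $k_1=k$ as base cases, building on the explicit computations that underlie Theorem \ref{thm:table}, and then to handle general $k_1$ by a monotone-swap interpolation in which every exchange either shortens the intertwined region, extends the block region, or equalises run lengths, with each move preserving (\ref{eq:plandom}) by direct calculation using the closed forms derived at the start of Stage B.
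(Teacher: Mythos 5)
Your proposal is correct and follows essentially the same route as the paper: the sandwich argument reducing everything to pointwise dominance of each $s\in{\cal S}$ by some member of ${\cal S}^*$, the preliminary localization of the minimax point to a narrow interval (the paper pins $x^*\in[0.4,0.5)$ in Lemma \ref{lem:120701} exactly as your ``reference candidate sequences'' step intends), and a chain of monotone local rearrangements --- coalescing runs, balancing frequencies, and forming the intertwined block --- each weakly increasing $q_s$ on that interval (the paper's Lemmas \ref{lem:b1}, \ref{lem:123101}, \ref{lem:010201}, \ref{lem:010202} and \ref{lem:120801}, expressed through the invariants $\gamma_s$, $\psi_s$, $\chi_s$ in (\ref{eqn:030801})). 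Your Stage A/Stage B split mirrors the paper's treatment of the treatment-count bound and the exchange argument, so no substantive difference in approach.
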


In fact, we can show that $|{\cal S}^*|\leq 10k^2$ for all $k> 12$ and any $t$. This cardinality is much smaller than the original full set $|{\cal S}|=t^k$. It is now very time efficient to apply Algorithm \ref{alg:T} to obtain $x^*$ and $y^*$ by solving (\ref{eqn:121005}) and (\ref{eqn:121006}). Consider $k=100$ and $t=5$, we have $|S|=7.88\times 10^{69}$ and it is impossible to directly execute Algorithm \ref{alg:T}. Instead, we have $|{\cal S}^*|\leq 10^5$, and we  instantaneously $x^*$ and $y^*$ by working on ${\cal S}^*$. Next, we will discuss how to recover ${\cal T}$ based on $x^*,y^*,{\cal S}^*$.
To proceed, we need some new notations. For $s=(t_1,\ldots,t_k)$, recursively define
\begin{eqnarray}
\psi_{t'}(s)&=&|\{j:t_{j-1}=t'=t_{j+1},1\leq j\leq k\}|,~~\psi(s)=\sum_{t'=1}^t\psi_{t'}(s);\label{eqn:121301}\\
\gamma_{t'}(s)&=&|\{j:t_j=t'=t_{j-1},1\leq j\leq k\}|,~~\gamma(s)=\sum_{t'=1}^t\gamma_{t'}(s);\label{eqn:121302}\\
\chi(s)&=&\sum_{t'=1}^tf_{s,t'}^2,~~f_{s,t'}=|\{j:t_{j}=t',1\leq j\leq k\}|;\label{eqn:121303}\\
{\cal C}(s)&=&\{s':\psi(s')=\psi(s),\gamma(s')=\gamma(s),\chi(s')=\chi(s)\}.\label{eqn:121401}
\end{eqnarray}
The notation $\psi_{t'}(s)$ and $\gamma_{t'}(s)$ are not redundant and will be used for proofs in supplementary materials. Note that, in (\ref{eqn:121301}) and (\ref{eqn:121302}), $t_0=t_k$ and $t_{k+1}=t_1$ under circular setup. Simple analysis reveals that $\langle s\rangle\subset{\cal C}(s)$.
Thus, ${\cal C}(s)$ is called the {\it pseudo equivalence class} of $s$. 
Let ${\cal C}^*=\{s\in{\cal S}^*:q_{s}(x^*)=\max_{s'\in{\cal S}^*} q_{s'}(x^*)\}$.

\begin{theorem}\label{thm:121101}
Suppose $\Sigma$ is of type-H. For $k>10$ and $t>3$, we have
\begin{eqnarray}\label{eqn:121107}
{\cal T}=\cup_{s\in{\cal C}^*}{\cal C}(s).
\end{eqnarray}
\end{theorem}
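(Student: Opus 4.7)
My plan is to prove the two inclusions $\cup_{s\in{\cal C}^*}{\cal C}(s)\subseteq{\cal T}$ and ${\cal T}\subseteq\cup_{s\in{\cal C}^*}{\cal C}(s)$ separately, with essentially all the work on the reverse inclusion. The easy direction reduces to showing that $q_s(x)$ depends on $s$ only through the triple $(\chi(s),\gamma(s),\psi(s))$. Under the type-H assumption, $\tilde B$ is a positive multiple of $B_k$. Since $H_c 1_k=H_c'1_k=0$, one has $B_kH_c=H_c$ and $B_kH_c'=H_c'$, so each $C_{sij}$ equals $T_s'M T_s$ (up to a common positive factor) for $M$ a fixed linear combination of $I_k, E_k, E_k', E_k^2, (E_k')^2$, where $E_k$ is the cyclic shift on $\mathbb{Z}_k$. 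Taking $\operatorname{tr}(B_t\cdot)$ and using $T_s 1_t=1_k$ together with $E_k 1_k=1_k$, the six scalars $c_{sij}$ become explicit affine functions of $\chi(s),\gamma(s),\psi(s)$: for example $c_{s00}\propto k-\chi(s)/k$, $c_{s0i}\propto\gamma(s)-k$ for $i=1,2$, $c_{s11}=c_{s22}\propto 2(k-k^2/t-\gamma(s)+k/t)$, and $c_{s12}\propto\psi(s)-2\gamma(s)+k+k/t-k^2/t$. Consequently $q_{s'}\equiv q_s$ whenever $s'\in{\cal C}(s)$, and since every $s\in{\cal C}^*$ satisfies $q_s(x^*)=y^*$ by Theorem \ref{lem:121001}, the whole class ${\cal C}(s)$ lies in ${\cal T}$.

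For the reverse inclusion, fix $s\in{\cal T}$, so $q_s(x^*)=y^*$. By the invariance just established, it suffices to produce a candidate sequence $\tilde s\in{\cal S}^*$ with ${\cal C}(\tilde s)={\cal C}(s)$: then $q_{\tilde s}(x^*)=q_s(x^*)=y^*=\max_{s'\in{\cal S}^*}q_{s'}(x^*)$ places $\tilde s$ in ${\cal C}^*$, whence $s\in{\cal C}(\tilde s)\subseteq\cup_{u\in{\cal C}^*}{\cal C}(u)$. The required $\tilde s$ is built by reusing the construction behind Theorem \ref{lem:121001}. First, the a priori cap in that proof shows that any optimal $s$ uses at most $\min(4\sqrt k+2,t)$ distinct treatments, so the admissible range of $t_1+t_2$ in ${\cal S}^*$ is wide enough to accommodate the support of $s$. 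Then, given the treatment-frequency multiset of $s$ together with the counts $\gamma(s)$ and $\psi(s)$, a specific choice of $(k_1,t_1,t_2)$ reproduces the same invariant triple: the monotone-size constraint $1+f_{t_1+t_2}\geq f_{t_1+1}\geq\cdots\geq f_{t_1+t_2}$ in (\ref{eqn:121001}) fixes the frequency multiset (hence $\chi$), while the intertwining operator ${\cal M}$ of (\ref{eqn:121002}), parametrized by $k_1$, tunes $\psi$ and $\gamma$ through the length of the intertwined sub-block $\hat s(k_1,t_1)$.

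The main obstacle is precisely this realizability claim: that the map $(k_1,t_1,t_2)\mapsto(\chi,\gamma,\psi)$ restricted to ${\cal S}^*$ is surjective onto the set of invariant triples attained by optimal sequences in ${\cal T}$. Unlike the small-$t$ settings handled in Theorem \ref{thm:table}, there is no closed-form pattern for ${\cal T}$, so the verification must proceed through a careful case analysis tied to the block-and-intertwine structure of $s(k,k_1,t_1,t_2)$. I expect the argument to characterize admissible optimal triples by a short system of inequalities in $k$ and the treatment count $t_1+t_2$, and then to exhibit, case by case, a specific $(k,k_1,t_1,t_2)$ whose associated triple matches; the intertwining operator ${\cal M}$ supplies exactly the $\psi$-adjustment needed without disturbing $\chi$ or $\gamma$. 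Once this surjectivity step is in place, combining the two inclusions yields (\ref{eqn:121107}).
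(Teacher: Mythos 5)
Your overall architecture is the right one and matches what the paper intends: the forward inclusion does follow from the fact that, under a type-H $\Sigma$, $q_s(x)$ is a function of $s$ only through the triple $(\chi_s,\gamma_s,\psi_s)$ (this is exactly the paper's display $q_s(x)=k-\chi_s/k+4(\gamma_s-k)x+(6k+2\psi_s-8\gamma_s)x^2$ for Model~(\ref{model:2}), transferred to Model~(\ref{model:3}) via Theorem~\ref{thm:5}), combined with $\max_{s\in{\cal S}^*}q_s(x^*)=y^*$ from Theorem~\ref{lem:121001}. So $\cup_{s\in{\cal C}^*}{\cal C}(s)\subseteq{\cal T}$ is fine.

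The reverse inclusion, however, is where essentially all of the content of this theorem lives, and your proposal does not prove it --- it only announces a plan. You correctly isolate the needed claim (every $s\in{\cal T}$ shares its invariant triple with some candidate sequence in ${\cal S}^*$, so that ${\cal C}(s)={\cal C}(\tilde s)$ for some $\tilde s\in{\cal C}^*$), call it the ``main obstacle,'' and then write ``I expect the argument to characterize admissible optimal triples \dots and then to exhibit, case by case, a specific $(k,k_1,t_1,t_2)$ whose associated triple matches.'' That expectation is precisely the unproved step. Note also that the \emph{statement} of Theorem~\ref{lem:121001} is not strong enough to supply it: knowing that the minimax over ${\cal S}^*$ equals the minimax over ${\cal S}$ does not tell you that an optimal $s\notin{\cal S}^*$ has the \emph{same} triple as an optimal member of ${\cal S}^*$; two distinct quadratics can agree at the single point $x^*$. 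What is actually needed is the constructive domination argument inside the paper's proof of Theorem~\ref{lem:121001} (built on Lemmas~\ref{lem:b1}, \ref{lem:123101}, \ref{lem:010201}, \ref{lem:010202}, \ref{lem:120801}): every sequence can be transformed into a candidate sequence by rearrangement steps each of which either leaves $(\chi,\gamma,\psi)$ unchanged or \emph{strictly} increases $q$ at the relevant points of $[0.4,0.5)$ (where $x^*$ lies by Lemma~\ref{lem:120701}); optimality of $s$ then forces every step to be invariant-preserving, which is what yields ${\cal C}(s)={\cal C}(\tilde s)$. Your sketch of the $c_{sij}$ computation and the a priori bound $t_1+t_2\leq\min(4\sqrt k+2,t)$ on the support size of an optimal sequence are likewise asserted rather than derived. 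As written, the proposal is a correct reduction of the theorem to the hard lemma, not a proof of it.
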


The idea of constructing the intermediate set ${\cal S}^*$ followed up with the algorithmic derivation of ${\cal T}$ as summarized in Theorem \ref{thm:121101} is the first of its kind in dealing with the issue when the points $(x^*,y^*)$ can not be exactly identified. The complexity of $\cal T$ depends on the cardinality of ${\cal C}^*$, which is normally very small. We have tried $k\leq 10^3$ with $t\leq k$ and found that ${\cal C}^*$ contains at most $5$ different sequences from ${\cal S}^*$. Now let us revisit the case of $(k,t)=(100,5)$. It only takes $3.71$ seconds on an ordinary 2.9 GHz MacBook Pro to identify ${\cal T}$ based on (\ref{eqn:121107}). Specifically, we have ${\cal C}^*=\{s_1,s_2\}$ with $s_1=s(100,0,0,5)=(1_{20}',2\cdot 1_{20}',\ldots,5\cdot 1_{20}')$, and $s_2=s(100,37,2,3)=(1,2,1,\ldots,1,2,1~|~3\cdot 1_{21}'~|~4\cdot 1_{21}'~|~5\cdot 1_{21}')$ where the length of the subsequence $(1,2,1,\ldots,1,2,1)$ is $37$. Meanwhile, we notice ${\cal C}(s_1)=\langle s_1\rangle$, ${\cal C}(s_2)=\langle s_2\rangle$. Hence we simply have ${\cal T}=\langle s_1\rangle\cup\langle s_2\rangle$. In fact, the computational cost does not grow with $t$. Now take $t=20$ instead of $5$, we have ${\cal C}^*=\{s_3,s_4\}$ where $s_3=s(100,0,0,10)=(1_{10}',2\cdot 1_{10}',\ldots,10\cdot 1_{10}')$ and $s_4=s(100,70,10,3)=({\cal M}(1_{7}',6\cdot 1_{7}')|\cdots|{\cal M}(5\cdot 1_{7}',10\cdot 1_{7}')|11\cdot 1_{10}'|12\cdot 1_{10}'|13\cdot 1_{10}')$.

\subsubsection{Efficient designs based on a single equivalence class}\label{sec:higheff}

To build efficient designs based on a relatively simpler supporting set, say ${\cal R}$, than ${\cal T}$. The quality of ${\cal R}$ can be evaluated by its efficiency $e_{\cal R}=y_{\cal R}/y^*$, where $y_{\cal R}=\max_{\xi:{\cal V}_{\xi}\subset \cal R}y_{\xi}$. In other words, $e_{\cal R}$ calculates the efficiency of the best measure we could potentially construct using sequences in ${\cal R}$. To prepare for following theorems, we need some notations. For $1\leq i\leq \min(k,t)$, let
\begin{eqnarray}\label{eqn:si}
s_i=(1_{f_1}',2\cdot 1_{f_2}',\cdots,i\cdot1_{f_i}') 
\end{eqnarray}
with the constraint of $1+f_i\geq f_1\geq f_2\cdots\geq f_{i-1}\geq f_i$ and $\sum^i_{j=1}f_j=k$. When $i$ divides $k$, treatments $1,2,...,i$ have equal replications. Otherwise, we assign one more replication to treatments in the left end of the sequence in order. The left end arrangement is only for ease of presentation. Due to the circular nature, we can cyclically shift the treatments in the sequence $s_i$ toward either direction at an arbitrary distance and the resulting sequence has the same statistical property as we will show later. Define the integers $i_0=\min\{{\arg\max}_{i} q_{s_i}(0.4),t\}$ and $i^*=\min\{{\arg\max}_{i} q_{s_i}(0.5),t\}$. The constants $0.96$ and $0.04$ in the following theorem are well engineered to work for all $k$ and $t$ as specified in the theorem. Efficient single sequences are first empirically found in the numerical results in \cite{druilhet:2012} with $k\leq 12$. Our result covers all possible $(k,t)$ and may coincide with those given by \cite{druilhet:2012} in few special cases.


\begin{theorem}\label{thm:11}
Suppose $\Sigma$ is of type-H. For $k>10$, let ${\cal R}{(k,t)}=\langle s_{i^*}\rangle$ under Model (\ref{model:2}) and ${\cal R}{(k,t)}=\langle s_{i^*}\rangle\cup\langle s'_{i^*}\rangle$ under Model (\ref{model:3}). We have $e_{{\cal R}{(k,t)}}> 1-v(t,k)$, where 
\begin{eqnarray}\label{eqn:309}
v(t,k)=\frac{0.04i_0}{k-k/i_0-0.96i_0-0.25i_0/k}
\end{eqnarray}
decreases in $k$ for any given $t$. This bound is asymptotically tight in the sense that $(1-e_{{\cal R}(k,t)})\sqrt{k}\rightarrow 0.04\sqrt{0.96}$ and $v(k,t)\sqrt{k}\rightarrow 0.04\sqrt{0.96}$ as $k\rightarrow\infty$ for any given $t$. {For $4\leq k\leq 10$, let ${\cal R}{(k,t)}=\langle s_a\rangle$ where $s_a$ is defined in Table \ref{tb:theorem}. We list the corresponding $e_{{\cal R}{(k,t)}}$ as follows.}
\begin{table}[htp]
\begin{center}
\caption{Efficiency of ${{\cal R}{(k,t)}}$ in Theorem \ref{thm:11} when $4\leq k\leq 10$.}
\setlength{\tabcolsep}{4mm}
\begin{tabular}{|c|c|c|c|c|c|c|c|}
\hline
  &$k=4$ &$k=5$ &$k=6$ &$k=7$&$k=8$ &$k=9$ &$k=10$ \\ \hline
 $t=2$ &-- &$0.8333$ &$0.9259$ &$0.9830$&$0.9800$ &$0.9952$ &$0.9918$ \\ \hline
 $t=3$ &$0.9000$ &$0.9821$ &$0.8929$ &$0.9956$&$0.9735$ &$0.9878$ &$0.9898$ \\ \hline 
 $t=4$ &$1.0000$ &$0.9098$ &$1.0000$ &$0.9956$&$0.9615$ &$0.9818$ &$0.9795$ \\ \hline
 $t\geq 5$ &$1.0000$ &$0.9097$ &$0.8819$ &$0.9956$&$0.9615$ &$0.9797$ &$0.9795$ \\ \hline
\end{tabular}
\end{center}
\label{default}
\end{table}%

\end{theorem}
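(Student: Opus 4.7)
The plan is to bracket both $y_{{\cal R}(k,t)}$ and $y^*$ using the quadratic functions $q_s(x)$ evaluated at the carefully chosen points $x=0.4$ and $x=0.5$. First, I would reduce to Model (\ref{model:2}). Since ${\cal R}(k,t)=\langle s_{i^*}\rangle$ is a single equivalence class under this model, any symmetric measure supported on it must be uniform, and permutation invariance of the $c_{sij}$ within an equivalence class gives $c_{\xi ij}=c_{s_{i^*} ij}$, hence $y_{{\cal R}(k,t)}=\min_x q_{s_{i^*}}(x)=c_{s_{i^*}00}-\ell_{s_{i^*}}^2/Q_{s_{i^*}}$. Under Model (\ref{model:3}), the self-dual uniform measure on $\langle s_{i^*}\rangle\cup\langle s'_{i^*}\rangle$ yields the same efficiency by Theorem \ref{thm:5}(iii), together with $y^*_1=y^*_2$ from Theorem \ref{thm:5}(i), so it suffices to work under Model (\ref{model:2}).

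Next, for the upper bound I would use $y^*=\min_x r(x)\leq r(0.4)=\max_{s\in{\cal S}}q_s(0.4)$ and prove the structural lemma that this maximum is attained by one of the structured sequences $s_i$ in (\ref{eqn:si}). The type-H form of $\Sigma$ allows $q_s(x)$ to be written as an explicit linear combination of the frequency statistics $\chi(s),\gamma(s),\psi(s)$ from (\ref{eqn:121301})--(\ref{eqn:121303}). At $x=0.4$, the coefficient analysis shows that the maximum is attained by sequences whose treatments are partitioned into consecutive runs of near-equal sizes (large $\chi$, small $\gamma$ and $\psi$), which is precisely the family $\{s_i\}$. Hence $r(0.4)=\max_i q_{s_i}(0.4)=q_{s_{i_0}}(0.4)$ by the definition of $i_0$.

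For the explicit bound, I would derive closed-form expressions for $c_{s_i 00},\ell_{s_i},Q_{s_i}$ from the circular balanced structure of $s_i$, substitute them into $\min_x q_{s_{i^*}}(x)$ and $q_{s_{i_0}}(0.4)$, and simplify to obtain $1-y_{{\cal R}(k,t)}/y^*<v(t,k)$ in the stated form. Monotonicity of $v(t,k)$ in $k$ then follows from the explicit expression. For the asymptotic tightness, I would complement the upper bound with a matching lower bound on $1-e_{{\cal R}(k,t)}$: evaluating $r$ at $x=0.5$ yields $y^*\leq q_{s_{i^*}}(0.5)+O(1/k)$, which combined with $y_{{\cal R}(k,t)}=q_{s_{i^*}}(0.5)-Q_{s_{i^*}}(0.5-x_{i^*})^2$ produces a gap of order $1/\sqrt{k}$ when $i^*\sim\sqrt{0.96k}$ (the large-$k$ regime where $i^*$ is interior to the arg-max rather than capped at $t$), giving $(1-e_{{\cal R}(k,t)})\sqrt{k}\to 0.04\sqrt{0.96}$. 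For $4\leq k\leq 10$ the table entries follow by direct evaluation of $y^*$ from Table \ref{tb:theorem} and of $\min_x q_{s_a}(x)$.

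The main obstacle is the structural lemma, namely that $r(0.4)$ is attained by a structured $s_i$. Unlike Theorem \ref{lem:121001}, which only asserts attainment of $y^*$ within the intermediate set ${\cal S}^*$ at the unknown minimizer $x^*$, here we need a stronger maximum property at the a priori fixed point $x=0.4$. The combinatorial argument requires case analysis over replication profiles and interleaving patterns to rule out exotic arrangements that could beat the balanced $s_i$, and the specific numerical constants $0.4$ and $0.5$ have been engineered so that the resulting bound is both provable and sharp at the leading order $1/\sqrt{k}$; verifying this joint compatibility, while tracking sub-leading terms in the algebraic simplification, is the most delicate part of the proof.
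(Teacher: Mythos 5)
Your main lower-bound argument is essentially the paper's: reduce to Model (\ref{model:2}), write $q_s(x)$ in terms of $(\chi_s,\gamma_s,\psi_s)$ using the type-H structure, bound $y^*\leq r(0.4)=\max_{s\in{\cal S}}q_s(0.4)$, show by a rearrangement/merging/balancing argument that this maximum is attained within the family $\{s_i\}$ of (\ref{eqn:si}) at $i=i_0\approx\sqrt{k/0.96}$, and combine with $y_{{\cal R}(k,t)}=\min_x q_{s_{i^*}}(x)$ to get $e_{{\cal R}(k,t)}\geq q_{s_{i_0}}(0.5)/q_{s_{i_0}}(0.4)=1-0.04\,i_0/q_{s_{i_0}}(0.4)\geq 1-v(t,k)$ via $\chi_{s_i}\leq k^2/i+i/4$. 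Two flags on this part: your parenthetical justification of the structural lemma is backwards --- maximizing $q_s(0.4)=0.36k-\chi_s/k+0.32(\gamma_s+\psi_s)$ calls for \emph{small} $\chi_s$ and \emph{large} $\gamma_s+\psi_s$, which is what consecutive near-balanced runs deliver --- and you should observe that $q'_{s_i}(0.5)=2k+2(k-2i)-4(k-i)=0$ exactly, so $y_{{\cal R}}=q_{s_{i^*}}(0.5)$ and your correction term $Q_{s_{i^*}}(0.5-x_{i^*})^2$ vanishes identically; this exact stationarity is what makes the ratio clean.

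The genuine gap is in your tightness argument. The premise that $r(0.5)=q_{s_{i^*}}(0.5)+O(1/k)$ is false: the maximizer of $q_s(0.5)=0.5k-\chi_s/k+0.5\psi_s$ over all of ${\cal S}$ is not $s_{i^*}$ but an interleaved sequence in which every treatment is of type $1$, which has $\psi_s=k-i$ rather than $k-2i$ and therefore exceeds $q_{s_{i^*}}(0.5)$ by $\Theta(\sqrt{k})$ --- this is precisely why the candidate set ${\cal S}^*$ in Section \ref{sec:theoreticalT} must carry the intertwined blocks $\hat{s}(k_1,t_1)$. Moreover, even granting the premise, an upper bound on $y^*$ yields an \emph{upper} bound on $1-e_{{\cal R}(k,t)}=(y^*-y_{{\cal R}})/y^*$, whereas asymptotic tightness requires a matching \emph{lower} bound on $1-e_{{\cal R}(k,t)}$, hence a lower bound on $y^*$ exceeding $y_{{\cal R}}$ by $(0.04\sqrt{0.96}+o(1))\sqrt{k}$; the trivial bound $y^*\geq\min_xq_{s_{i_0}}(x)$ is off by exactly the order $0.04\,i_0$ you are trying to resolve. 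As written, your route would conclude $1-e_{{\cal R}(k,t)}=O(1/k)$, contradicting the theorem's own assertion. Closing this requires pinning $y^*$ down to $o(\sqrt{k})$ precision, e.g.\ by exhibiting a measure mixing $\langle s_{i_0}\rangle$ with an interleaved class whose value approaches $q_{s_{i_0}}(0.4)$, in the spirit of Theorem \ref{lem:121001}.
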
 


\if0{
\begin{theorem}\label{thm:12} 
Suppose $\Sigma$ is of type H in Model (\ref{model:2}). Let ${\cal R}=\langle s_{i^*}\rangle\cup \langle \tilde{s}_{i^*}\rangle$ with the index $i^*$ defined in (...), then $e_{\cal R}$ is given by (\ref{eq:091102}) for any $k$ and $t$ under both of the models. Particularly, we have $e_{\cal R}\geq 90\%$ for $k\geq 5$ and $e_{\cal R}\geq 1-v(k)$ for $k\geq 11$. Here, $v(k)=(0.04\sqrt{k}+0.22)/[k-k/(\sqrt{k}-1)-(\sqrt{k}+1)-1]$ decreases in $k$ and $25k^{1/2}v(k)\rightarrow 1$ as $k\rightarrow \infty$. For Model (\ref{model:3}), the statement on $e_{\cal R}$ is still true if we define ${\cal R}=\langle s_{i^*}\rangle\cup \langle \tilde{s}_{i^*}\rangle\cup\langle s'_{i^*}\rangle\cup \langle \tilde{s}'_{i^*}\rangle$.
\end{theorem}}\fi

\begin{figure}
\center
\includegraphics[width=6in]{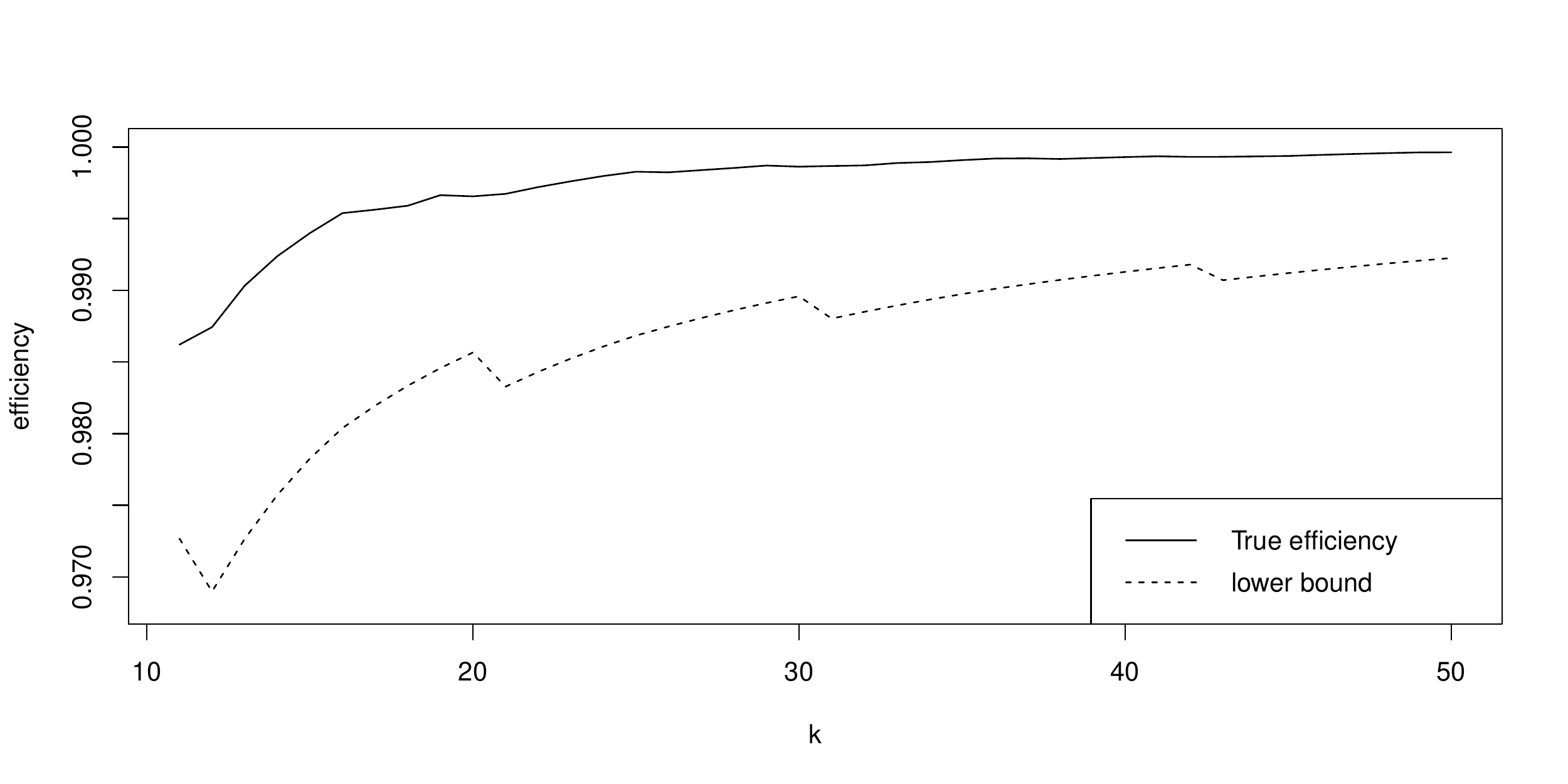}
\caption{The lower bound of $e_{{\cal R}(k,t)}$ derived in Theorem \ref{thm:11} and the true efficiency derived by Theorem \ref{lem:121001} in Section \ref{sec:largekt} for $t=7$ and a spectrum for $k$, i.e., $11\leq k\leq 50$.}
\label{fig:er}
\end{figure}

Theorem \ref{thm:11} does not cover the case $(k,t)=(4,2)$ since any single equivalence class alone leads to zero information matrix in this case.
Theorem \ref{thm:11} derives a lower bound of $e_{{\cal R}(k,t)}$ for our proposed supporting set ${\cal R}(k,t)$ without knowing the true ${\cal T}$ or optimal designs. It further shows that this bound is asymptotically tight. In Section \ref{sec:largekt}, $\cal T$ can be derived for arbitrary $(k,t)$, which enables us to generate true optimal designs and give the true efficiency of ${\cal R}(k,t)$. Figure \ref{fig:er} displays true efficiencies and efficiency bounds given by (\ref{eqn:309}) for the particular case of $t=7$ with $11\leq k\leq 50$, keeping in mind that the change of the value of $t$ does not affect the comparison of these two curves very much. By inspecting all values of $k,t\leq 10^4$, we observe that the high-efficiency equivalence class proposed in Theorem \ref{thm:11} is always a subset of $\cal T$. Moreover, by solving (\ref{eqn:227}), this equivalence class is always assigned a large weight in the true optimal design.

Note that CNBD is optimal among designs where a treatment is not a neighbor of itself at distances $1$ and $2$. Theorem \ref{thm:11} strongly indicates the necessity of including self-neighboring sequences in a design, if they are not the main and only important sequence. Such observation is validated by Theorems \ref{thm:table} and \ref{thm:11} for circular designs. This could also be verified by the results in \cite{druilhet:2012} when $k\leq 12$. 

\if{0}
\begin{theorem}\label{thm:120401}
Suppose $\Sigma$ is of type-H. For $k>10$ and $t>3$, optimal designs for Models (\ref{model:2}) and (\ref{model:3}) can be found on ${\cal R}_2(k,t)$, i.e., $e_{{\cal R}_2(k,t)}=1$ where
\begin{eqnarray}
{\cal R}_2(k,t)=&&\{(~\hat{s}(1,f_1),\ldots,\hat{s}(t_1,f_{t_1})~|~\tilde{s}(t_1+1,f_{t_1+1},f_{t_1+2}),\\
&&\quad\quad \tilde{s}(t_1+3,f_{t_1+3},f_{t_1+4}),\ldots,\tilde{s}(t_1+2r-1,f_{t_1+2r-1},f_{t_1+2r})~):\notag\\
&&1+f_{t_1}\geq f_1\geq f_2\cdots\geq f_{t_1-1}\geq f_{t_1}; \label{eqn:120702}\\
&&1+f_{t_1+2p}\geq f_{t_1+1}\geq f_{t_1+2}\cdots\geq f_{t_1+2p-1}\geq f_{t_1+2p}~~\geq 1; \notag\\
&&~\sum^{t_1}_{j=1}f_j=k_1;~\sum_{j=1}^{2r}f_{t_1+j}=k_2;~k_1+k_2=k,~t_1+2r\leq \min(3k^{1/2},t)\}.\label{eqn:120501}
\end{eqnarray}
\end{theorem}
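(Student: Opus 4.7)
The plan is to reduce the claim $e_{{\cal R}_2(k,t)} = 1$ to Theorem \ref{thm:121101} by establishing the set inclusion ${\cal T} \subset {\cal R}_2(k,t)$. Once this inclusion is in hand, every universally optimal measure---which by Theorem \ref{thm:3} must be supported on ${\cal T}$---is automatically supported inside ${\cal R}_2(k,t)$, so $y_{{\cal R}_2(k,t)} = y^*$ and the efficiency claim follows immediately.

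First I would verify the containment ${\cal S}^* \subset {\cal R}_2(k,t)$ as essentially a bookkeeping step. A generic sequence $s(k,k_1,t_1,t_2) \in {\cal S}^*$ decomposes via \eqref{eqn:121003} as $(\hat{s}(k_1,t_1) \mid \tilde{s}(k-k_1,t_1,t_2))$: the head $\hat{s}(k_1,t_1)$ matches the intertwined portion $\hat{s}(1,f_1),\ldots,\hat{s}(t_1,f_{t_1})$ of ${\cal R}_2(k,t)$ after an appropriate reading-off of replication counts, and the tail of single-treatment runs $(t_1+1)\cdot 1'_{f_{t_1+1}} \mid \cdots \mid (t_1+t_2)\cdot 1'_{f_{t_1+t_2}}$ can be re-grouped into $r=\lceil t_2/2\rceil$ two-treatment sub-blocks $\tilde{s}(t_1+2p-1,f_{t_1+2p-1},f_{t_1+2p})$. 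The monotonicity constraints in \eqref{eqn:121001} translate directly to \eqref{eqn:120702}, and the bound $t_1+t_2 \leq \min(4\sqrt{k}+2,t)$ built into ${\cal S}^*$ is compatible with $t_1+2r \leq \min(3k^{1/2},t)$ for $k>10$. Consequently ${\cal C}^* \subset {\cal S}^* \subset {\cal R}_2(k,t)$.

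Second, the substantive step is to show ${\cal C}(s) \subset {\cal R}_2(k,t)$ for every $s \in {\cal C}^*$. The invariants in \eqref{eqn:121301}--\eqref{eqn:121303} encode enough structure to force the required shape: $\chi(s')$ fixes the multiset of replication counts $\{f_{s',t'}\}$ up to permutation of labels, $\gamma(s')$ fixes the number of adjacent-pair repetitions and hence the number of maximal contiguous runs, and $\psi(s')$ together with the circular adjacency structure pins down how many treatments participate in the intertwined portion versus the contiguous-run portion. A combinatorial case analysis on these three invariants, leveraging the fact that the canonical $s \in {\cal C}^*$ already uses at most $4\sqrt{k}+2$ treatments, shows that any $s'$ realizing the same triple as $s$ must (up to treatment relabeling and cyclic shift) decompose into the same number of intertwined heads and paired contiguous-run tails with matching run-length profile, which is exactly the defining shape of ${\cal R}_2(k,t)$. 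Combining this with step one and the representation ${\cal T} = \bigcup_{s \in {\cal C}^*} {\cal C}(s)$ from Theorem \ref{thm:121101} yields ${\cal T} \subset {\cal R}_2(k,t)$ and hence $e_{{\cal R}_2(k,t)} = 1$.

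The main obstacle is the combinatorial enumeration in step two. Because $(\psi,\gamma,\chi)$ are only three scalar invariants, the pseudo equivalence class ${\cal C}(s)$ is a priori much larger than the true equivalence class $\langle s \rangle$, and one must rule out ``exotic'' sequences that realize the same triple without factoring cleanly into the intertwined-plus-paired-runs form. The size restriction $t_1+t_2 \leq \min(4\sqrt{k}+2,t)$ inherited from ${\cal S}^*$, together with the circular boundary conditions $t_0=t_k$, $t_{k+1}=t_1$ in \eqref{eqn:121301}--\eqref{eqn:121302}, is precisely what keeps the enumeration tractable, and is the reason the theorem is stated under the hypotheses $k>10$ and $t>3$.
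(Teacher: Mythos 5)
Your overall strategy of funneling the claim through Theorems \ref{lem:121001} and \ref{thm:121101} is the right instinct (note the paper never actually proves this statement; the relevant machinery lives in Lemmas \ref{lem:120801}--\ref{lem:010202} and the proof of Theorem \ref{lem:121001}), but both of your concrete steps have genuine problems. First, the target inclusion ${\cal T}\subset{\cal R}_2(k,t)$ is both false and unnecessary. It is false because ${\cal T}$ is a union of pseudo equivalence classes and hence closed under all $t!$ treatment relabelings, whereas ${\cal R}_2(k,t)$ contains only canonically labelled sequences; your step two inherits this defect since ${\cal C}(s)\supset\langle s\rangle$. It is unnecessary because $e_{{\cal R}}=1$ only requires one measure supported on ${\cal R}$ with $y_{\xi}=\min_x q_{\xi}(x)=y^*$, and since $q_s(x)$ depends on $s$ only through $(\chi_s,\gamma_s,\psi_s)$ as in (\ref{eqn:030801}), it suffices that ${\cal R}_2(k,t)$ contain, for each $s\in{\cal C}^*$, \emph{one} sequence with the same invariant triple; the gradient-balancing argument from the proof of Theorem \ref{thm:1} then produces the required measure. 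Your step two would also fail on its own terms: three scalar invariants do not determine the intertwined-plus-runs shape ($\chi$ does not even determine the multiset of frequencies, let alone the number of treatments), which is precisely why the paper distinguishes the pseudo equivalence class ${\cal C}(s)$ from $\langle s\rangle$. Fortunately that step can simply be dropped.

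Second, your step one is not bookkeeping, and the containment ${\cal S}^*\subset{\cal R}_2(k,t)$ you assert is false. A tail of single-treatment runs $(\,(t_1+1)\cdot 1'_{f_{t_1+1}}\,|\,\cdots)$ cannot be ``re-grouped'' into intertwined two-treatment sub-blocks without changing the sequence and, with it, $\gamma$ and $\psi$; the two sets use different canonical forms (intertwined head versus runs head, cross-half intertwining inside $\hat{s}(k_1,t_1)$ versus consecutive pairs), and their treatment-count caps disagree ($4\sqrt{k}+2$ for ${\cal S}^*$ versus $3k^{1/2}$ for ${\cal R}_2(k,t)$), so members of ${\cal S}^*$ with more than $3\sqrt{k}$ treatments lie outside ${\cal R}_2(k,t)$. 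What actually needs to be proved is that every $s\in{\cal C}^*$ has its $(\chi,\gamma,\psi)$ triple realized by some member of ${\cal R}_2(k,t)$. This uses (i) the invariance of the triple under treatment relabeling and under permuting the run and pair blocks around the circle (junctions between blocks of distinct treatments contribute nothing to $\gamma$ or $\psi$), and (ii) a quantitative bound showing that the maximizers use at most $3\sqrt{k}$ treatments, which must come from the $\sqrt{k/0.96}$ analysis in the proof of Theorem \ref{thm:11} rather than from the definition of ${\cal S}^*$. Without (ii) in particular, the ``compatibility'' of the two bounds that you invoke is a real hole, not a formality.
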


Theorem \ref{thm:120401} helps researchers to focus on only two equivalence classes in finding optimal designs, one of the pattern (\ref{eqn:si}) and the other belongs to the ${\cal A}(k,t)$ defined in (\ref{eqn:120702})--(\ref{eqn:120501}). The following Algorithm \ref{alg:newalg1} further addresses the two equivalence classes in Theorem \ref{thm:120401} with computational burden $O(k^{2.5})$. This is usually affordable and negligible compared with $|\cal S|$.

From the definition, any sequence in ${\cal A}(k,t)$ contains two parts: $s_1(t_1,k_1)$ of pattern (\ref{eqn:si}), and $s_2(t_1,p,k_2)$ containing $2p$ treatments with nearly balanced frequencies. Note that $s_1(t_1,k_1)$ is uniquely determined by $t_1$ and $k_1$, and $s_2(t_1,p,k_2)$ is uniquely determined by $t_1$, $p$ and $k_2=k-k_1$. Thus, any sequence in ${\cal A}(k,t)$ can be uniquely determined by $(t_1,p,k_1)$ since $k$ is pre-specified in applications. However, it should be emphasized here that many combinations of $(t_1,p,k_1)$ will not appear in ${\cal A}(k,t)$ since (\ref{eqn:120702})--(\ref{eqn:120501}) obviously put some restrictions on $(t_1,p,k_1)$. For convenience, in the pseudo code of Algorithm \ref{alg:newalg1}, we define $\Psi:R^3\rightarrow {\cal S}$ as
$\Psi(t_1,p,k_1)=(~s_1(t_1,k_1)~|~{s_2(t_1,2p,k-k_1)}~)$ if there exists an $(~s_1(t_1,k_1)~|~{s_2(t_1,2p,k-k_1)}~)\in{\cal A}(k,t)$ for given $k$.
Otherwise, we intentionally set $\Psi(t_1,p,k_1)=\mathtt{NULL}$.
Some complex details and explanation of the searching range of parameters in Algorithm \ref{alg:newalg1} are postponed to Appendix for a better reading experience.

\begin{algorithm}
\caption{\scriptsize{(pseudo code to uniquely determine the two equivalence classes in Theorem \ref{thm:120401})}}\label{alg:newalg1}
\label{alg:A}
\begin{algorithmic}
\STATE {Initialize $i=1$, $k_1=0$, $t_1=1$, $p=1$. Set $\mathtt{Record.matrix}=\mathtt{NULL}$.} 
\FOR{each $2\leq i\leq \min(3\sqrt{k},t)$}
\STATE generate $s_i$ and calculate the explicit form of $q_{s_i}(x)$ and $q_{s_i}'(x)$;\
\FOR{each $(t_1,p,k_1)\in[0,\min(3\sqrt{k},t)]\times[0,\min(3\sqrt{k},t)]\times[0,k]$}
\STATE calculate $\Psi(t_1,p,k_1)$;\
\IF{$\Psi(t_1,p,k_1)=\mathtt{NULL}$} 
\STATE skip this combination of $(i,t_1,p,k_1)$;
\ELSE 
\STATE let $\mathtt{s.temp}=\Psi(t_1,p,k_1)$;
\STATE calculate the explicit form of $q_{\mathtt{s.temp}}(x)$ and $q_{\mathtt{s.temp}}'(x)$;
\STATE let $\mathtt{Cross.point}=\{(x^*,y^*):q_{s_i}(x^*)=q_{\mathtt{s.temp}}(x^*)=y^*,q_{s_i}'(x^*)\cdot q_{\mathtt{s.temp}}'(x^*)<0\}$;
\IF{$\mathtt{Cross.point}=\emptyset$} 
\STATE skip this combination of $(i,t_1,p,k_1)$;
\ELSE 
\STATE add a new row $(y^*,x^*,i,t_1,p,k_1)$ to $\mathtt{Record.matrix}$;
\ENDIF 
\ENDIF 
\ENDFOR
\ENDFOR
\STATE
\STATE Find the row (randomly pick up one if multi rows) of $\mathtt{Record.matrix}$ which has the largest $y^*$, denoted by $(y^*,x^*,i^*,t^*,p^*,k^*)$. Then, we have ${\cal R}_2(k,t)=\langle s^-\rangle\cup\langle s^+\rangle$ where $s^-=s_{i^*}$ and $s^+=\Psi(t^*,p^*,k^*)$.
\end{algorithmic}
\end{algorithm}
\fi

\section{Examples}\label{sec:example}
This section illustrates the applications of theorems in Sections \ref{sec:general} and \ref{sec:sequence}. We consider two forms of $\Sigma$. One is the identity matrix which is of type-H, so that Theorems \ref{thm:table} and \ref{thm:11} are directly applicable to derive ${\cal T}$ or reasonable subsets of sequences. The other is in the AR(1) form $\Sigma_1=(0.2^{|i-j|})_{1\leq i,j\leq k}$ which is not of type-H, hence 
we need to address ${\cal T}$ through Algorithm \ref{alg:T}. Examples \ref{example1} and \ref{example2} consider these two types of $\Sigma$ for flexible choices of $k$ and $t$. Example \ref{example3} further shows the flexibility of our method under $\Sigma_1$. These examples focus on relatively small $k$ and $t$ to save space. We devote Examples \ref{example:largekt} and \ref{example4} to large $k$ and $t$, where optimal or highly efficient designs are manually constructed according to Theorems \ref{thm:121101} and \ref{thm:11}. 
For an exact design $d$, let $0=\lambda_0\leq\lambda_1\leq\lambda_2\leq\dots\leq\lambda_{t-1}$ be all the eigenvalues of $C_d$. Its A-, D-, E-, T-efficiencies are defined as follows.
 \begin{eqnarray*}
E_A(d)=\frac{(t-1)^2}{ny^*\Sigma_{i=1}^{t-1}\lambda_i^{-1}},&& E_D(d)=\frac{t-1}{ny^*}(\Pi_{i=1}^{t-1}\lambda_i)^{\frac{1}{t-1}},\\
 E_E(d)=\frac{(t-1)\lambda_1}{ny^*}, &&E_T(d)=\frac{\Sigma_{i=1}^{t-1}\lambda_i}{ny^*}.
\end{eqnarray*}
Note the choices of $k,t,n$ are all arbitrary in this section.

\begin{example}\label{example1}
We first illustrate the application of results in {Section \ref{sec:sequence2}}, for which we take $(k,t)$ as $(5,4)$ or $(8,3)$ and assume $\Sigma=I$. Exact designs are listed in {Table \ref{tb:exactdesign}} with $n=6,15$. Note all values of $k,t,n$ are given arbitrarily and the same method applies to any other configurations as long as $t\leq 3$ or $k\leq 10$. Take $(k,t)=(5,4)$ for example, Theorem \ref{thm:table} shows that we have ${\cal T}=\langle (12341) \rangle \cup \langle (11233) \rangle$ under Model (\ref{model:3}). Exact designs are derived through the IQP based on this support. We can observe that the cases of $n=6$ and $n=15$ make different selections of sequences from ${\cal T}$. All exact designs are highly efficient. 
\begin{center}
\begin{table}
\caption{Some exact designs and their A- and D- efficiencies, $\Sigma=I$. 
}\label{tb:exactdesign}
\begin{tabular}{|c|c|c|c|}
  \hline
  $k,t,n,$ model &exact design& A-effi & D-effi  \\ \hline
    5, 4, 6, (\ref{model:3}) & $(12431)$, $(24133)\times 2$, $(34421)$, $(41223)$, $(44321)$ & 0.9868 & 0.9903 \\\hline
    5, 4, 15, (\ref{model:3}) & $(12243)\times 2$, $(12431)\times 2$, $(23144)\times 2$, $(23341)$, $(34412)$   & 0.9983 & 0.9987 \\
    & $(24133)\times 3$, $(41223)$, $(42114)$, $(43211)$, $(43221)$  &  &  \\\hline
  8, 3, 6, (\ref{model:3}) & $(12223311)$, $(22213132)$, $(33311122)\times 2$, $(33322111)\times 2$   & 0.9585 & 0.9706 \\\hline
       8, 3, 15, (\ref{model:3}) & $(11232311)$, $(22111332)\times 4$, $(22213132)$, $(23331122)\times 4$  & 0.9994 & 0.9995 \\
    & $(33311122)\times 2$, $(33312123)$, $(33322111)\times 2$   &  &  \\\hline

  5, 4, 6, (\ref{eqn:modelcrossover}) & $(12241)$, $(13344)$, $(14421)$, $(23342)$, $(33114)$, $(44322)$  & 0.9926 & 1.0000  \\\hline
  5, 4, 15, (\ref{eqn:modelcrossover}) & $(12241)$, $(12441)$, $(14421)$, $(21142)$, $(21332)\times 2$, $(23342)$   & 0.9982 & 0.9982 \\
    & $(33114)\times 2$, $(33144)$, $(33211)$, $(44122)\times 2$, $(44233)\times 2$   &  &  \\\hline
      8, 3, 6, (\ref{eqn:modelcrossover}) & $(11333221)$, $(21113322)\times 2$, $(22333112)\times 2$, $(33111223)$   & 1.0000 & 1.0000 \\\hline
      8, 3, 15, (\ref{eqn:modelcrossover}) & $(11333221)\times 2$, $(21113322)\times 5$, $(22333112)\times 5$  & 0.9994 & 0.9994 \\
    & $(33111223)\times 3$   &  &  \\\hline
    \end{tabular}
\end{table}
\end{center}
\end{example}

\begin{example}\label{example2}
Consider $\Sigma=\Sigma_1$ and the same values of $k,t,n$ as in Example \ref{example1}. Since $\Sigma_1$ is not of type-H, Theorem \ref{thm:table} is not applicable here and we shall apply {Algorithm \ref{alg:T}} to obtain ${\cal T}$ directly. Exact designs are again derived by applying IQP to ${\cal T}$ and displayed in Table \ref{tb:exactdesign3}. They are all highly efficient. 
\begin{center}
\begin{table}
\caption{Some exact designs and their A- and D- efficiencies, $\Sigma=\Sigma_1$.}\label{tb:exactdesign3}
\begin{tabular}{|c|c|c|c|}
  \hline
  $k,t,n$, model &exact design& ~A-effi~ & ~D-effi~  \\ \hline
    5, 4, 6, (\ref{model:3}) & $(11443)$, $(12234)$, $(22134)$, $(23341)$, $(32411)$, $(33244)$ & 0.9786 & 0.9816 \\\hline
  5, 4, 15, (\ref{model:3}) & $(11234)$, $(11423)$, $(12243)$, $(14322)$, $(22143)$, $(22411)$   & 0.9936 & 0.9941 \\
    & $(31123)$, $(31442)$, $(32143)$, $(32411)$, $(34213)$, $(34223)$   &  &  \\
        & $(42134)$, $(42314)$, $(43124)$   &  &  \\\hline
  8, 3, 6, (\ref{model:3}) & $(11122233)$, $(11333222)$, $(22111333)$, $(22233311)$  & 0.9857 & 0.9857 \\
   & $(33311122)$, $(22233311)$  &  &  \\\hline
       8, 3, 15, (\ref{model:3}) & $(11323231)$, $(11333221)$, $(12233311)$, $(21133322)\times3$  & 0.9979 & 0.9982 \\
    & $(22111332)\times3$, $(22333112)\times 2$, $(23311122)\times2$    &  &  \\
        & $(31212133)$, $(33111223)$   &  &  \\\hline
  5, 4, 6, (\ref{eqn:modelcrossover}) & $(12241)$, $(13344)$, $(21144)$, $(23342)$, $(31143)$, $(32244)$  & 0.9949 & 0.9994  \\\hline
  5, 4, 15, (\ref{eqn:modelcrossover}) & $(12233)$, $(13322)$, $(13341)$, $(21142)$, $(23312)$, $(24432)$   & 0.9986 & 0.9986 \\
    & $(31143)$, $(31144)$, $(32244)$, $(34411)$, $(34413)$, $(34422)$   &  &  \\
        & $(41122)$, $(42211)$, $(43314)$   &  &  \\\hline
      8, 3, 6, (\ref{eqn:modelcrossover}) & $(12223311)$, $(13332211)$, $(21113322)$, $(23331122)$   & 1.0000 & 1.0000 \\ 
      & $(31112233)$, $(32221133)$   &  &  \\\hline
      8, 3, 15, (\ref{eqn:modelcrossover}) & $(12223311)\times 2$, $(13332211)\times 3$, $(23311122)\times 3$  & 0.9997 & 0.9997 \\
    & $(23331122)\times 2$, $(31112233)\times 2$, $(32221133)\times 3$   &  &  \\\hline
\end{tabular}
\end{table}
\end{center}
\end{example}


\begin{example}\label{example3}
Both of the above examples only examined two values of $n$. Here we consider a continous spectrum of $n$ values with other parameters being the same as in Example \ref{example2}. For this purpose, we carry out a series of calculation like Example \ref{example2} for all $n\in[8,50]$ when $k = 5$ and  $t = 3$. The A-, D-, E- and T- efficiencies of these designs are plotted against $n$ in Figures \ref{penG} and \ref{penG3} for Models (\ref{model:3}) and (\ref{eqn:modelcrossover}), respectively. Also see Figure \ref{penG2} for a more complex case of $k=11$ and $t=4$ under Model (\ref{model:3}).

\begin{figure}
\center
\includegraphics[width=6in]{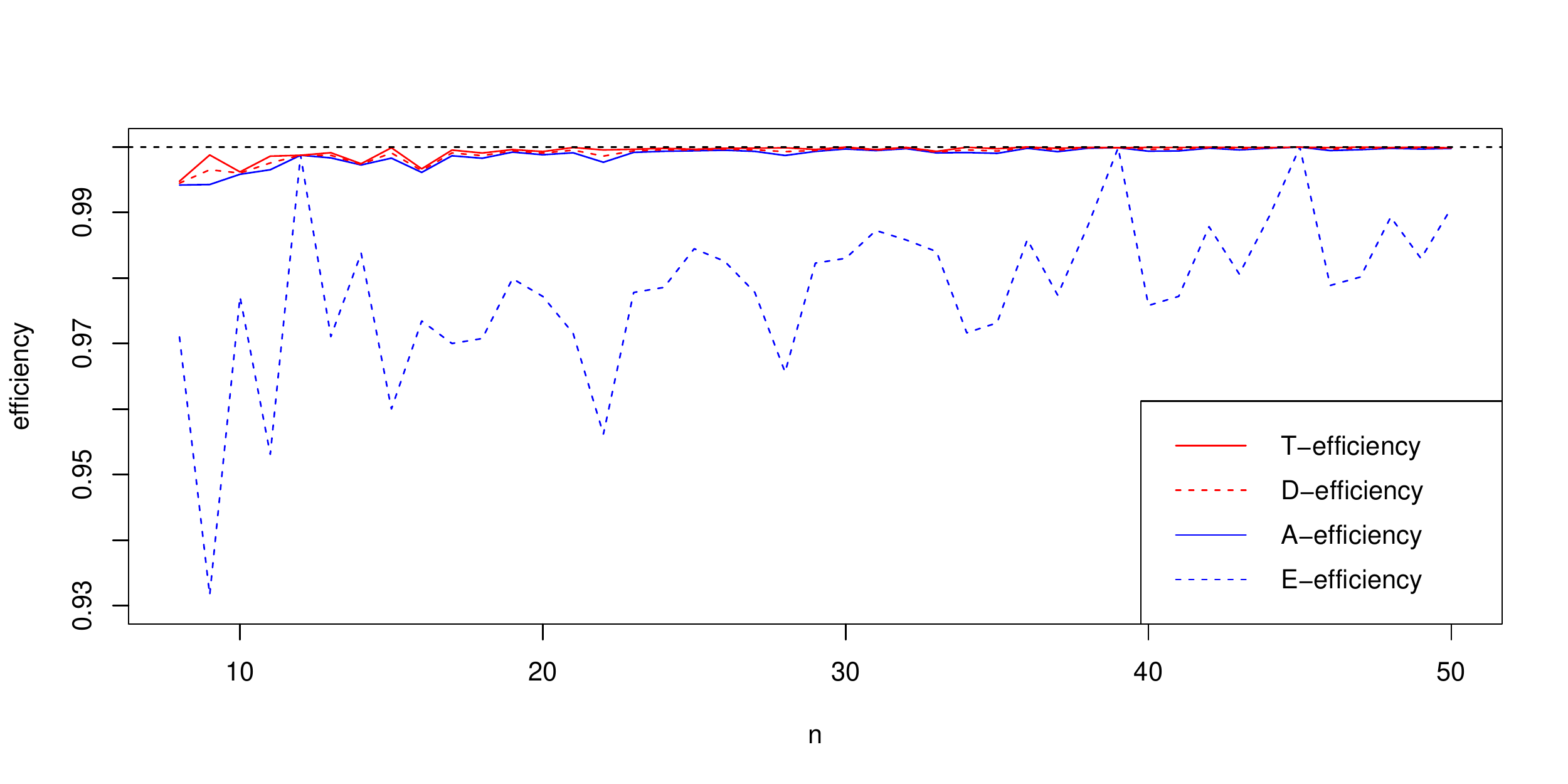}
\caption{A-, D-, T-, E- efficiencies of exact designs for different $n$ with $(k,t)=(5,3)$ under Model (\ref{model:3}).}
\label{penG}
\end{figure}

\begin{figure}
\center
\includegraphics[width=6in]{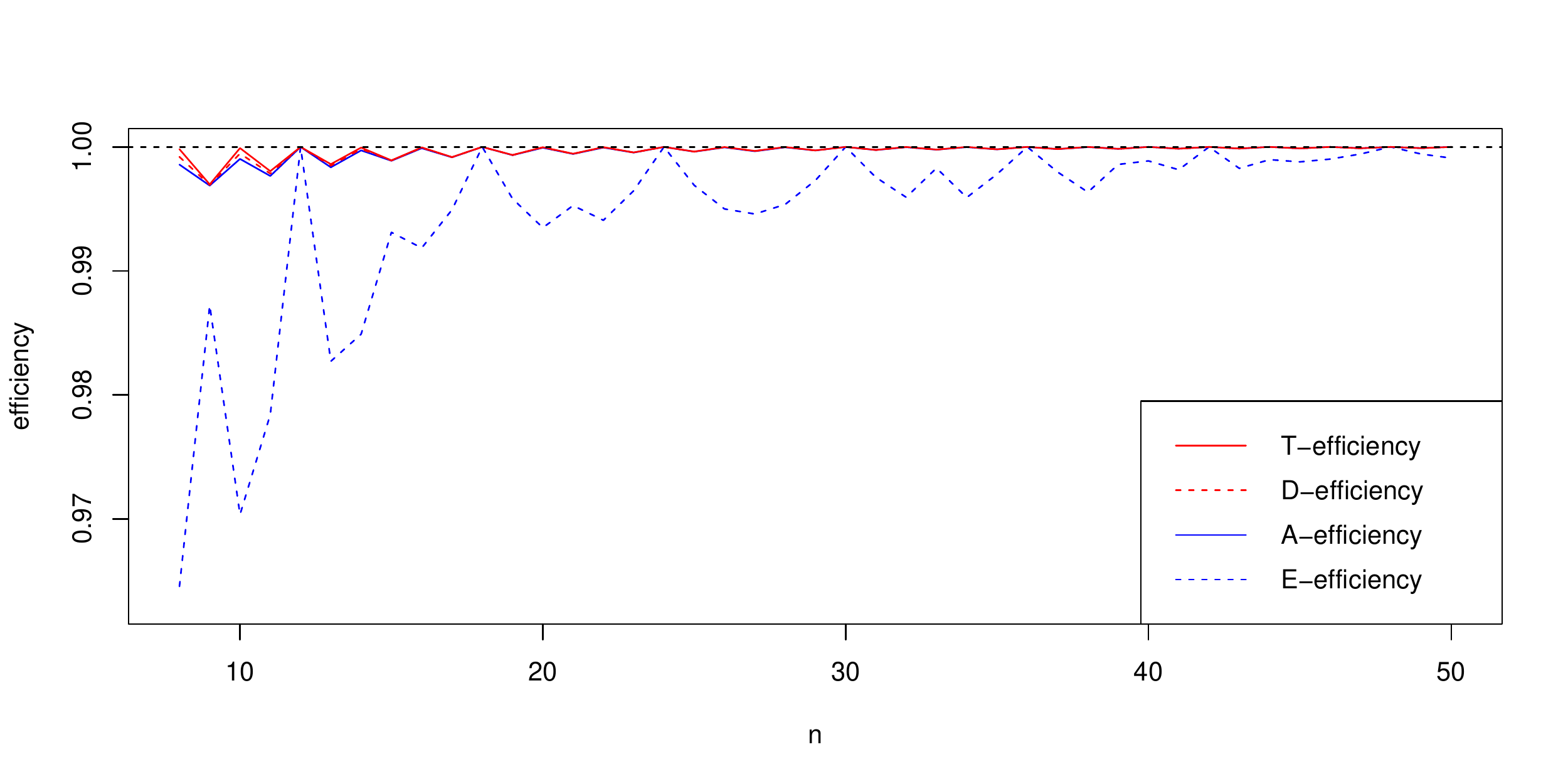}
\caption{A-, D-, E-, T- efficiencies of exact designs for different $n$ with $(k,t)=(5,3)$ under the crossover Model (\ref{eqn:modelcrossover}).}
\label{penG3}
\end{figure}

\begin{figure}
\center
\includegraphics[width=6in]{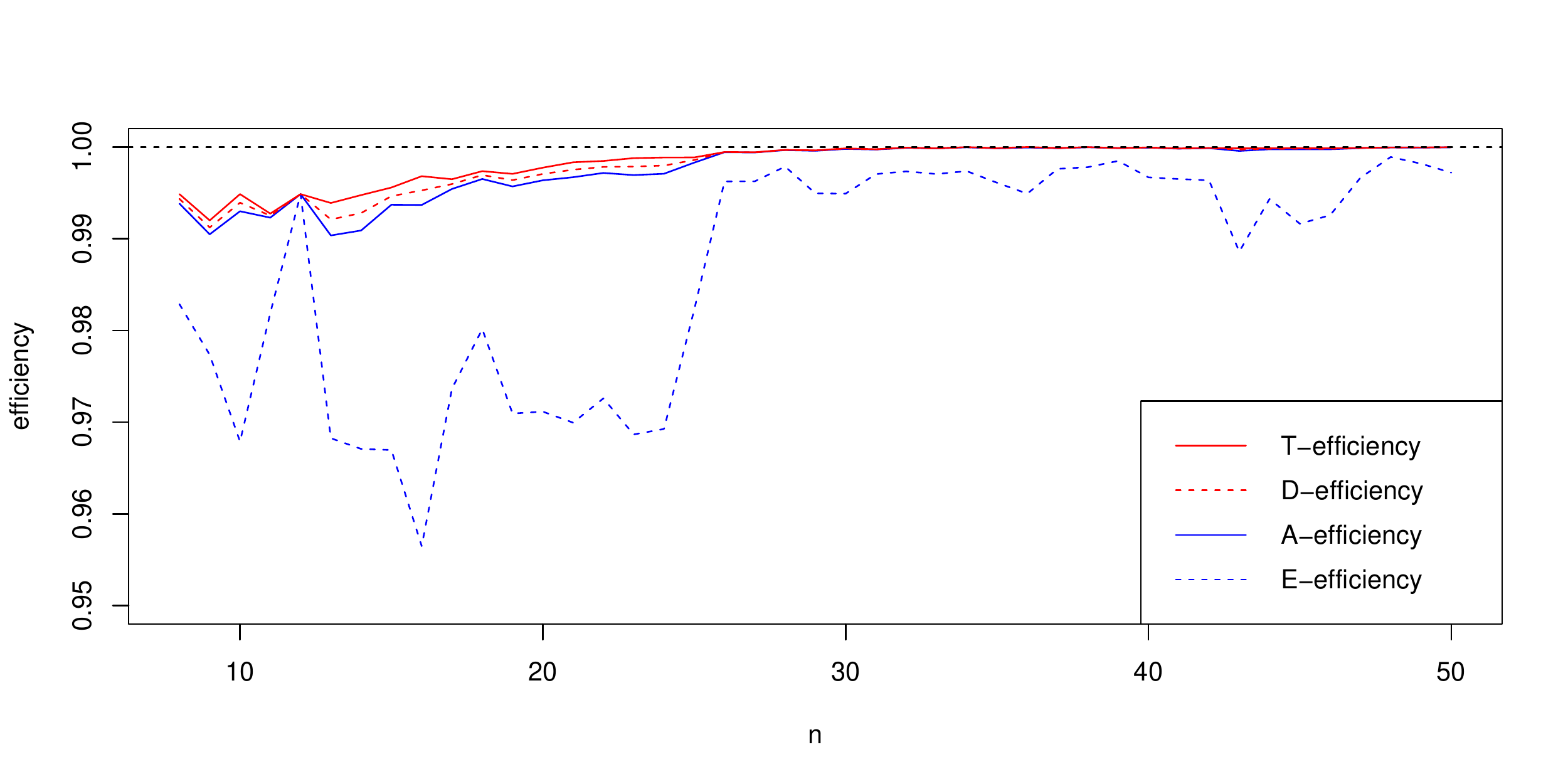}
\caption{A-, D-, E-, T- efficiencies of exact designs for different $n$ with $(k,t)=(11,4)$ under Model (\ref{model:3}).}
\label{penG2}
\end{figure}
\end{example}

In the above three examples, we have restricted to $k\leq 10$ so that Theorem \ref{thm:table} can directly display ${\cal T}$ and/or Algorithm \ref{alg:T} can calculate ${\cal T}$ within manageable time. Next, we shall illustrate how to obtain strictly universally optimal approximate designs when $k\geq 11$. Generating corresponding optimal exact designs is a simple task since (\ref{eqn:227}) can be solved easily over only two equivalence classes.

\begin{example}\label{example:largekt} For $11\leq k\leq 50$, $t=8$ and $\Sigma=I$, we list obtained designs in Tables \ref{tb:largekt2} and  \ref{tb:largekt3} (in supplementary materials). We illustrate the process of deriving the designs for the case of $(k,t)=(31,8)$ here. For convenience, we use $z_q$ to represent a subsequence $(z,\ldots,z)$ of length $q$. The ${\cal S}^*$ in Theorem \ref{lem:121001} contains $1116$ different sequences. Then we can restrict Algorithm \ref{alg:T} to ${\cal S}^*$ and it takes less than 0.3 second to obtain ${\cal C}^*=\{s_1,s_2\}$, where $s_1=(1_6,2_5,3_5,4_5,5_5,6_5)$ and $s_2=({\cal M}(1_4,2_3),3_6,4_6,5_6,6_6)$. The other sequences in the pseudo equivalence class ${\cal C}(s_1)$ are obtained by reducing the replication of treatment $1$ down to $5$ and let another treatment to be replicated $6$ times. However, all these sequences have the same combinatorial features from the perspective of design construction, hence it is sufficient to only consider one sequence, e.g. $s_1$. It is the same case for $s_2$ even though we have $5$ sequences in ${\cal C}(s_2)$. Hence by theorem Theorem \ref{thm:121101}, we shall have $11$ equivalence classes in ${\cal T}$, with $6$ from ${\cal C}(s_1)$ and $5$ from ${\cal C}(s_2)$. However, we can construct optimal designs on $\langle s_1\rangle$ and $\langle s_2\rangle$.


\end{example}

In Example \ref{example:largekt}, we can see that $s_1$ in Table \ref{tb:largekt2} (supplementary materials) always has a dominating proportion in optimal designs. In fact, it is the same sequence in Theorem \ref{thm:11} that produces highly efficient designs. We illustrate this process in the following Example \ref{example4}.

\begin{example}\label{example4}
Consider $k=11$, $t=5$ and $\Sigma=I$. Theorem \ref{thm:11} suggests the set of sequences $\langle s_3\rangle$ with $s_3=(11112222333)$. One can generate a  symmetric exact design based on $s_3$ as follows. Find an $OA_I$ with $3$ rows and $5$ levels, denoted by ${M}$.
\begin{equation}
{M}=
\left(
\begin{array}{cccccccccccccccccccccc}
1   &~ 5   &~ 5&~    2  &~  5&~    1   &~ 2&~    4   &~ 1 &~    4  &~   4 &~    3   &~  1 &~    4   &~  3  &~   3 &~    3 &~    5 &~   2  &~   2\\
4   &~ 1 &~   3   &~ 5&~    2   &~ 5 &~   3   &~ 3 &~   3    &~ 1&~     5   &~  4 &~    2   &~  2  &~   1 &~    5&~     2 &~    4 &~    1   &~  4\\
5    &~2  &~  1  &~  4 &~   4  &~  2  &~  5  &~  2   &~ 4   &~  5  &~   3  &~   2  &~   3   &~  1  &~  4   &~  1   &~  5   &~  3   &~  3   &~  1
\end{array}
\right).\notag
\end{equation}
By definition of $OA_I$, each pair of distinct symbols from ${\mathbb Z}_5$ appears equally often, exactly once here, for all three $2\times 20$ subarrays of ${M}$. The first column  $(1,4,5)$ means that we shall replace the symbols $1,2,3$ in $s_3$ by $1,4,5$ respectively, and hence produce a new sequence from $s_3$ as $(11114444555)$. Applying all columns of ${M}$ to $s_3$ results in a new array
\begin{equation}
{M}_{s_3}=
\left(
\begin{array}{cccccccccccccccccccccc}
1   &~ 5   &~ 5&~    2  &~  5&~    1   &~ 2&~    4   &~ 1 &~    4  &~   4 &~    3   &~  1 &~    4   &~  3  &~   3 &~    3 &~    5 &~   2  &~   2\\
1   &~ 5   &~ 5&~    2  &~  5&~    1   &~ 2&~    4   &~ 1 &~    4  &~   4 &~    3   &~  1 &~    4   &~  3  &~   3 &~    3 &~    5 &~   2  &~   2\\
1   &~ 5   &~ 5&~    2  &~  5&~    1   &~ 2&~    4   &~ 1 &~    4  &~   4 &~    3   &~  1 &~    4   &~  3  &~   3 &~    3 &~    5 &~   2  &~   2\\
1   &~ 5   &~ 5&~    2  &~  5&~    1   &~ 2&~    4   &~ 1 &~    4  &~   4 &~    3   &~  1 &~    4   &~  3  &~   3 &~    3 &~    5 &~   2  &~   2\\
4   &~ 1 &~   3   &~ 5&~    2   &~ 5 &~   3   &~ 3 &~   3    &~ 1&~     5   &~  4 &~    2   &~  2  &~   1 &~    5&~     2 &~    4 &~    1   &~  4\\
4   &~ 1 &~   3   &~ 5&~    2   &~ 5 &~   3   &~ 3 &~   3    &~ 1&~     5   &~  4 &~    2   &~  2  &~   1 &~    5&~     2 &~    4 &~    1   &~  4\\
4   &~ 1 &~   3   &~ 5&~    2   &~ 5 &~   3   &~ 3 &~   3    &~ 1&~     5   &~  4 &~    2   &~  2  &~   1 &~    5&~     2 &~    4 &~    1   &~  4\\
4   &~ 1 &~   3   &~ 5&~    2   &~ 5 &~   3   &~ 3 &~   3    &~ 1&~     5   &~  4 &~    2   &~  2  &~   1 &~    5&~     2 &~    4 &~    1   &~  4\\
5    &~2  &~  1  &~  4 &~   4  &~  2  &~  5  &~  2   &~ 4   &~  5  &~   3  &~   2  &~   3   &~  1  &~  4   &~  1   &~  5   &~  3   &~  3   &~  1\\
5    &~2  &~  1  &~  4 &~   4  &~  2  &~  5  &~  2   &~ 4   &~  5  &~   3  &~   2  &~   3   &~  1  &~  4   &~  1   &~  5   &~  3   &~  3   &~  1\\
5    &~2  &~  1  &~  4 &~   4  &~  2  &~  5  &~  2   &~ 4   &~  5  &~   3  &~   2  &~   3   &~  1  &~  4   &~  1   &~  5   &~  3   &~  3   &~  1
\end{array}
\right).\notag
\end{equation}
The design ${M}_{s_3}$ is symmetric and has the same values of A-, D-, E-, and T-efficiencies, with its columns as blocks. These alphabetical efficiencies are all equal to the efficiency of ${\langle s_3\rangle}$, particularly we have $e_{\langle s_3\rangle}=0.9862$ for both Models (\ref{model:3}) and (\ref{model:2}). For $k=11$ and $t\in\{5,7,11,23\}$, the representative sequence and the design efficiency is unchanged among different choices of $t$. The only difference is that the minimum number of blocks in generating symmetric design is $20,42,110$ and $506$, respectively.

For larger values of $(k,t)$,
let us consider $k=37$, $t\in \{5,7,11,23\}$ and $\Sigma=I$. Theorem \ref{thm:11} suggests we should adopt the representative sequence $(1111111122222222333333344444$ $445555555)$ for both Models (\ref{model:3}) and (\ref{model:2}) and for all $t$ under consideration. The efficiencies of the derived designs are $0.9997,0.9992,0.9992,0.9992$ for $n$ as $20,42,110,506$, respectively.
\end{example}

\bigskip
\begin{center}
{\large\bf SUPPLEMENTARY MATERIAL}
\end{center}

\begin{description}
\item[Title:] Circular optimal design supplementary material. (PDF)
\end{description}

\newpage

\fontsize{12}{14pt plus.8pt minus .6pt}\selectfont \vspace{0.8pc}
\markboth{\hfill{\footnotesize\rm XIANGSHUN KONG AND WEI ZHENG} \hfill}
{\hfill {\footnotesize\rm SUPPLEMENTARY MATERIAL} \hfill}
\centerline{\large\bf Optimal and efficient circular designs
with neighboring effects}
\centerline{\large\bf (supplementary material)}
\if0{\vspace{.4cm} \centerline{Xiangshun Kong, Xueru Zhang and Wei Zheng}}\fi 
\fontsize{12}{14pt plus.8pt minus .6pt}\selectfont \vspace{0.2pc}

\begin{center}
\begin{table}
\center
\caption{True optimal designs in Example \ref{example:largekt} with $11\leq k\leq 30$ and $t=8$. For convenience, we use $z_q$ to represent an subsequence $(z,\ldots,z)$ of length $q$. $p_{\langle s_2 \rangle}$ is omitted since $p_{\langle s_2 \rangle}=1-p_{\langle s_1 \rangle}$. We refer to (\ref{eqn:121001})--(\ref{eqn:121003}) for the definition of ${\cal M}$.}\label{tb:largekt2}
\begin{tabular}{|c|c|c|c|c|c|c|c|c|c|c|ccccccccccccc|}\hline
$k$&$s_1$&$s_2$&$p_{\langle s_1 \rangle}$\\
  \hline
    $11$&$(1_4,2_4,3_3)$&$(~{\cal M}(1_2,2_2),3_4,4_3~)$&$0.8034$\\\hline
  $12$&$(1_3,2_3,3_3,4_3)$&$(~{\cal M}(1_2,2_2),{\cal M}(3_2,4_2),{\cal M}(5_2,6_2)~)$&$0.9264$\\\hline
  $13$&$(1_4,2_3,3_3,4_3)$&$(~{\cal M}(1_3,2_2),3_4,4_4~)$&$0.8514$\\\hline
  $14$&$(1_4,2_4,3_3,4_3)$&$(~{\cal M}(1_3,2_3),3_4,4_4~)$&$0.8889$\\\hline
  $15$&$(1_4,2_4,3_4,4_3)$&$(~{\cal M}(1_4,2_3),3_4,4_4~)$&$0.9053$\\\hline
  $16$&$(1_4,2_4,3_4,4_4)$&$(~{\cal M}(1_3,2_3),{\cal M}(3_3,4_3),5_4~)$&$0.9529$\\\hline
  $17$&$(1_5,2_4,3_4,4_4)$&$(~{\cal M}(1_3,2_2),3_4,4_4,5_4~)$&$0.8784$\\\hline
  $18$&$(1_5,2_5,3_4,4_4)$&$(~{\cal M}(1_3,2_3),3_4,4_4,5_4~)$&$0.9017$\\\hline
  $19$&$(1_5,2_5,3_5,4_4)$&$(~{\cal M}(1_3,2_3),3_5,4_4,5_4~)$&$0.9088$\\\hline
  $20$&$(1_5,2_5,3_5,4_5)$&$(~{\cal M}(1_3,2_3),3_5,4_5,5_4~)$&$0.9150$\\\hline
  $21$&$(1_5,2_4,3_4,4_4,5_4)$&$(~{\cal M}(1_3,2_3),3_5,4_5,5_5~)$&$0.8956$\\\hline
  $22$&$(1_5,2_5,3_4,4_4,5_4)$&$(~{\cal M}(1_3,2_3),{\cal M}(3_3,4_3),5_5,6_5~)$&$0.9500$\\\hline
  $23$&$(1_5,2_5,3_5,4_4,5_4)$&$(~{\cal M}(1_3,2_3),{\cal M}(3_3,4_3),{\cal M}(5_3,6_3),7_5~)$&$0.9684$\\\hline
  $24$&$(1_5,2_5,3_5,4_5,5_4)$&$(~{\cal M}(1_3,2_3),{\cal M}(3_3,4_3),{\cal M}(5_3,6_3),{\cal M}(7_3,8_3)~)$&$0.9775$\\\hline
  $25$&$(1_5,2_5,3_5,4_5,5_5)$&$(~{\cal M}(1_4,2_3),{\cal M}(3_3,4_3),{\cal M}(5_3,6_3),{\cal M}(7_3,8_3)~)$&$0.9798$\\\hline
  $26$&$(1_6,2_5,3_5,4_5,5_5)$&$(~{\cal M}(1_3,2_3),3_5,4_5,5_5,6_5~)$&$0.9120$\\\hline
  $27$&$(1_6,2_6,3_5,4_5,5_5)$&$(~{\cal M}(1_4,2_3),3_5,4_5,5_5,6_5~)$&$0.9285$\\\hline
  $28$&$(1_6,2_6,3_6,4_5,5_5)$&$(~{\cal M}(1_4,2_4),3_5,4_5,5_5,6_5~)$&$0.9405$\\\hline
  $29$&$(1_6,2_6,3_6,4_6,5_5)$&$(~{\cal M}(1_4,2_3),3_6,4_6,5_5,6_5~)$&$0.9346$\\\hline
  $30$&$(1_6,2_6,3_6,4_6,5_6)$&$(~{\cal M}(1_4,2_4),{\cal M}(3_4,4_3),5_5,6_5,7_5~)$&$0.9708$\\\hline
\end{tabular}
\end{table}
\end{center}

\begin{center}
\begin{table}
\center
\caption{True optimal designs in Example \ref{example:largekt} with $31\leq k\leq 50$ and $t=8$. For convenience, we use $z_q$ to represent an subsequence $(z,\ldots,z)$ of length $q$. $p_{\langle s_2 \rangle}$ is omitted since $p_{\langle s_2 \rangle}=1-p_{\langle s_1 \rangle}$. We refer to (\ref{eqn:121001})--(\ref{eqn:121003}) for the definition of ${\cal M}$.}\label{tb:largekt3}
\begin{tabular}{|c|c|c|c|c|c|c|c|c|c|c|ccccccccccccc|}\hline
$k$&$s_1$&$s_2$&$p_{\langle s_1 \rangle}$\\
  \hline
  $31$&$(1_6,2_5,3_5,4_5,5_5,6_5)$&$(~{\cal M}(1_4,2_3),3_6,4_6,5_6,6_6~)$&$0.9252$\\\hline
  $32$&$(1_6,2_6,3_5,4_5,5_5,6_5)$&$(~{\cal M}(1_4,2_4),3_6,4_6,5_6,6_6~)$&$0.9348$\\\hline
  $33$&$(1_6,2_6,3_6,4_5,5_5,6_5)$&$(~{\cal M}(1_4,2_4),{\cal M}(3_4,4_3),5_6,6_6,7_6~)$&$0.9667$\\\hline
  $34$&$(1_6,2_6,3_6,4_6,5_5,6_5)$&$(~{\cal M}(1_4,2_4),{\cal M}(3_4,4_4),5_6,6_6,7_6~)$&$0.9695$\\\hline
  $35$&$(1_6,2_6,3_6,4_6,5_6,6_5)$&$(~{\cal M}(1_4,2_4),{\cal M}(3_4,4_4),{\cal M}(5_4,6_3),7_6,8_6~)$&$0.9796$\\\hline
  $36$&$(1_6,2_6,3_6,4_6,5_6,6_6)$&$(~{\cal M}(1_4,2_4),{\cal M}(3_4,4_4),{\cal M}(5_4,6_4),7_6,8_6~)$&$0.9810$\\\hline
  $37$&$(1_7,2_6,3_6,4_6,5_6,6_6)$&$(~{\cal M}(1_4,2_4),{\cal M}(3_4,4_4),{\cal M}(5_4,6_4),7_7,8_6~)$&$0.9813$\\\hline
  $38$&$(1_7,2_7,3_6,4_6,5_6,6_6)$&$(~{\cal M}(1_4,2_4),3_6,4_6,5_6,6_6,7_6~)$&$0.9416$\\\hline
  $39$&$(1_7,2_7,3_7,4_6,5_6,6_6)$&$(~{\cal M}(1_4,2_4),3_7,4_6,5_6,6_6,7_6~)$&$0.9434$\\\hline
  $40$&$(1_7,2_7,3_7,4_7,5_6,6_6)$&$(~{\cal M}(1_5,2_4),3_7,4_6,5_6,6_6,7_6~)$&$0.9524$\\\hline
  $41$&$(1_7,2_7,3_7,4_7,5_7,6_6)$&$(~{\cal M}(1_4,2_4),{\cal M}(3_4,4_4),5_7,6_6,7_6,8_6~)$&$0.9733$\\\hline
  $42$&$(1_7,2_7,3_7,4_7,5_7,6_7)$&$(~{\cal M}(1_4,2_4),{\cal M}(3_4,4_4),5_7,6_7,7_6,8_6~)$&$0.9741$\\\hline
  $43$&$(1_7,2_6,3_6,4_6,5_6,6_6,7_6)$&$(~{\cal M}(1_4,2_4),3_7,4_7,5_7,6_7,7_7~)$&$0.9397$\\\hline
  $44$&$(1_7,2_7,3_6,4_6,5_6,6_6,7_6)$&$(~{\cal M}(1_5,2_4),3_7,4_7,5_7,6_7,7_7~)$&$0.9476$\\\hline
  $45$&$(1_7,2_7,3_7,4_6,5_6,6_6,7_6)$&$(~{\cal M}(1_5,2_5),3_7,4_7,5_7,6_7,7_7~)$&$0.9539$\\\hline
  $46$&$(1_7,2_7,3_7,4_7,5_6,6_6,7_6)$&$(~{\cal M}(1_5,2_5),{\cal M}(3_4,4_4),5_7,6_7,7_7,8_7~)$&$0.9750$\\\hline
  $47$&$(1_7,2_7,3_7,4_7,5_7,6_6,7_6)$&$(~{\cal M}(1_5,2_5),{\cal M}(3_5,4_4),5_7,6_7,7_7,8_7~)$&$0.9769$\\\hline
  $48$&$(1_7,2_7,3_7,4_7,5_7,6_7,7_6)$&$(~{\cal M}(1_5,2_5),{\cal M}(3_5,4_5),5_7,6_7,7_7,8_7~)$&$0.9785$\\\hline
  $49$&$(1_7,2_7,3_7,4_7,5_7,6_7,7_7)$&$(~{\cal M}(1_5,2_5),{\cal M}(3_5,4_5),5_8,6_7,7_7,8_7~)$&$0.9793$\\\hline
  $50$&$(1_8,2_7,3_7,4_7,5_7,6_7,7_7)$&$(~{\cal M}(1_4,2_4),3_7,4_7,5_7,6_7,7_7,8_7~)$&$0.9469$\\\hline
\end{tabular}
\end{table}
\end{center}

This appendix provides the proof of Theorems \ref{prop:pd}, \ref{thm:1}, \ref{thm:3}, \ref{thm:3.4}, Theorems \ref{thm:5}, \ref{thm:table}, \ref{lem:121001}, \ref{thm:121101}, \ref{thm:11}.
The following definition and notations are frequently used in most proofs in this appendix.
For sequence $s=(t_1\dots t_k)$, let $\gamma_s=\sum_{i=1}^{k-1}\delta_{t_i,t_{i+1}}$, $\psi_s=\sum_{i=2}^{k-1}\delta_{t_{i-1},t_{i+1}}$, $f_{s, j}=\sum_{i=1}^{k}\delta_{t_i,j}$, $\chi_s=\sum_{i=1}^{t}f_{s, i}^2$. Here $\delta_{ij}$ is the Kronecker delta. 
In the beginning of Section \ref{sec:formulation}, we restrict our analysis to $k\geq 4$. That is because $C_d=0$ when $k\leq 3$, which means all contrasts of $\phi$ are not estimable. See Proposition \ref{thm:122001}.

\begin{proposition}\label{thm:122001}
For $k\leq 3$, we have $C_d=0$ for both interference models (\ref{model:3}) and (\ref{model:2}).
\end{proposition}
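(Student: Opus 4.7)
The plan is to exhibit, for each $k \in \{1,2,3\}$, a linear combination of $T_d$, $L_d$, $R_d$ whose columns lie in the column space of $I_n \otimes 1_k$. Since $\tilde{B} 1_k = 0$ by construction, left-multiplying such an identity by $I_n \otimes \tilde{B}$ will force $(I_n \otimes \tilde{B}) T_d$ to be a linear combination of $(I_n \otimes \tilde{B}) \tilde{L}_d$ and $(I_n \otimes \tilde{B}) \tilde{R}_d$. Feeding this collinearity into the Schur complements (\ref{eqn:2173}) and (\ref{eqn:2172}) then makes $C_d$ collapse to zero via the generalized-inverse identity $Q Q^- Q = Q$.

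I would first dispose of $k = 1$ by direct computation: $\tilde{B} = \Sigma^{-1} - \Sigma^{-1} J_k \Sigma^{-1}/(1_k'\Sigma^{-1}1_k) = 0$ in this case, so every $C_{dij}$ vanishes and $C_d = 0$ trivially. For $k = 2$ and $k = 3$ the key combinatorial input is that the direct treatment and its two circular neighbors already cover every plot of a block of length at most three. Concretely, for $k = 3$ each row of $T_d + L_d + R_d$ in block $(a,b,c)$ equals the constant vector $e_a + e_b + e_c$, so $T_d + L_d + R_d \in \mathrm{range}(I_n \otimes 1_k)$; substituting $L_d = \tilde{L}_d + T_d$, $R_d = \tilde{R}_d + T_d$ and hitting with $I_n \otimes \tilde{B}$ yields
\begin{eqnarray*}
(I_n \otimes \tilde{B}) T_d = -\tfrac{1}{3}\bigl[(I_n \otimes \tilde{B})\tilde{L}_d + (I_n \otimes \tilde{B})\tilde{R}_d\bigr].
\end{eqnarray*}
For $k = 2$ the analogous block-by-block check gives $T_d + L_d \in \mathrm{range}(I_n \otimes 1_k)$ together with $\tilde{L}_d = \tilde{R}_d$ (since each plot's left and right neighbor coincide), producing $(I_n \otimes \tilde{B})T_d = -\tfrac{1}{2}(I_n \otimes \tilde{B})\tilde{L}_d$.

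Writing $\alpha = (\alpha_1, \alpha_2)'$ for the collinearity coefficients just derived and $Q = (C_{dij})_{1 \leq i,j \leq 2}$, the identity propagates to $(C_{d01}, C_{d02}) = \alpha' Q$ and $C_{d00} = \alpha' Q \alpha$, so (\ref{eqn:2173}) reduces to
\begin{eqnarray*}
C_d = \alpha' Q \alpha - \alpha' Q \cdot Q^- \cdot Q \alpha = \alpha'(Q - Q Q^- Q)\alpha = 0
\end{eqnarray*}
by the defining property of a generalized inverse. The same reduction handles Model (\ref{model:2}) via (\ref{eqn:2172}) with $Q$ replaced by the single matrix $\sum_{i,j=1}^2 C_{dij}$ and $\alpha$ by the corresponding scalar. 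The one subtlety I anticipate is checking that the Schur complement is invariant under the choice of $Q^-$; this will be automatic because the identity $(C_{d01}, C_{d02}) = \alpha' Q$ places the row space of the outer factor inside that of $Q$, which is precisely the estimability condition for $Q Q^- Q \alpha = Q \alpha$ to hold regardless of which generalized inverse is chosen.
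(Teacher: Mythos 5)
Your proof is correct and takes essentially the same route as the paper's: both rest on the identity $3T_d = UN - \tilde{L}_d - \tilde{R}_d$ for $k=3$ (and the $k=2$ analogue, where additionally $\tilde{L}_d = \tilde{R}_d$), i.e., on showing that $T_d$ lies in the span of the nuisance design matrices so that the information for $\phi$ vanishes. The only difference is the final step --- the paper invokes the projection-operator form $C_d = T_d'V\,\mathrm{pr}^{\perp}\{VU|V\tilde{L}_d|V\tilde{R}_d\}VT_d$, while you push the collinearity through the Schur complement via $QQ^-Q=Q$ --- and your explicit check that the conclusion is independent of the choice of $Q^-$ is a small but worthwhile extra detail.
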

\begin{proof}
Let $N$ be the $n\times t$ block-treatment incident matrix so that its $(i,j)$-th entry is given by the number of times that treatment $j$ appears in block $i$. When $k=3$, one can verify that $3T_d=UN-(L_d-T_d+R_d-T_d)$. When $k=2$, we have $2T_d = UN-(L_d-T_d + R_d-T_d)/2$. The lemma is concluded by (3) and the definition of the projection operator with $C_d=T_d'Vpr^{\perp}\{VU|V(L_d-T_d)|V(R_d-T_d)\}VT_d$, where $V$ is a symmetric matrix such that $V^2=I_n\otimes \Sigma^{-1}$.
\end{proof}


\begin{lemma}\label{thm:112901}
($i$) $y_{\xi}=y^*$ implies $q_{\xi}(x^*)=y^*$ for any measure $\xi\in {\cal P}_0$; ($ii$) $y_{\xi}=y^*$ implies ${\cal V}_{\xi}\subset {\cal T}$ for any measure $\xi\in {\cal P}_0$.
\end{lemma}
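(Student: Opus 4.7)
The plan is to derive both parts from a simple sandwich argument at the single point $x = x^*$, exploiting three ingredients: the convex-combination identity $q_\xi(x) = \sum_{s\in{\cal S}} p_s q_s(x)$, the characterization $y_\xi = \min_x q_\xi(x)$, and the minimax identity $y_* = y^*$ supplied by Theorem \ref{thm:1}(i).

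For part (i), I would first use the definition of $y_\xi$ as a minimum to obtain $y_\xi \leq q_\xi(x^*)$. Next, since $q_s(x^*) \leq r(x^*)$ for every $s$ by definition of $r$, a convex combination gives
\begin{equation*}
q_\xi(x^*) \;=\; \sum_{s\in {\cal S}} p_s\, q_s(x^*) \;\leq\; r(x^*) \;=\; y_* \;=\; y^*.
\end{equation*}
Combining these two inequalities with the hypothesis $y_\xi = y^*$ yields the pinching
\begin{equation*}
y^* \;=\; y_\xi \;\leq\; q_\xi(x^*) \;\leq\; y^*,
\end{equation*}
which forces $q_\xi(x^*) = y^*$.

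For part (ii), I would rewrite the equality $q_\xi(x^*) = y^* = r(x^*)$ just established as
\begin{equation*}
\sum_{s\in {\cal S}} p_s \bigl[\,r(x^*) - q_s(x^*)\,\bigr] \;=\; 0.
\end{equation*}
Because every summand is non-negative, each term $p_s[r(x^*) - q_s(x^*)]$ must vanish individually. Hence any $s$ with $p_s > 0$ satisfies $q_s(x^*) = r(x^*) = y_*$, i.e.\ $s \in {\cal T}$, which is exactly the statement ${\cal V}_\xi \subset {\cal T}$.

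There is no serious obstacle: the whole argument hinges on the observation that $x^*$ acts as a uniform witness, pinning $y_\xi$ from above while capping each $q_s(x^*)$ from below through the minimax identity $y_* = y^*$. I note in passing that the symmetry assumption $\xi \in {\cal P}_0$ is not actually used in either step, so the lemma would hold verbatim for any $\xi \in {\cal P}$; the restriction to ${\cal P}_0$ merely matches the way the lemma is invoked downstream in establishing Theorems \ref{thm:3} and \ref{thm:3.4}.
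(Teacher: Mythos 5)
Your proof is correct and is essentially the argument the paper itself relies on: the paper's proof of this lemma is just a pointer to Proposition 1 of Zheng (2015), which is exactly this sandwich/pinching argument at $x^*$ using $y_\xi=\min_x q_\xi(x)$, the convexity identity $q_\xi(x^*)=\sum_s p_s q_s(x^*)\leq r(x^*)$, and the minimax identity $y_*=y^*$ from Theorem \ref{thm:1}($i$) (whose proof is independent of this lemma, so no circularity arises). Your side remark that the symmetry hypothesis $\xi\in{\cal P}_0$ is not needed is also consistent with the paper, which asserts $y_\xi=\min_x q_\xi(x)$ for all $\xi\in{\cal P}$.
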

\begin{proof}
This can be proved analogously to Proposition 1 in \cite{zheng:2015}.
\end{proof}

\vspace{.4cm}

{\noindent\bf Proof of Theorem \ref{prop:pd}.} Consider the more complex directional Model (\ref{model:3}). The parallel result can be proved analogously for undirectional Model (\ref{model:2}). Review $Q_s=(c_{ij})_{i,j=1,2}$, where $c_{ij}=tr(B_tC_{ij}B_t)$, $C_{sij}=G_i'\tilde{ B}G_j$, $0\leq i,j\leq 2$, with $G_0=T_s$, $G_1=\tilde{L}_s=L_s-T_s$, $G_2=\tilde{R}_s=R_s-T_s$, and $\tilde{B}=\Sigma^{-1}-\Sigma^{-1}J_k\Sigma^{-1}/1^{\prime}_k\Sigma^{-1}1_k$ with $J_k=1_k1_k'$. Rewrite $\tilde{B}=\Sigma^{-1/2}(I_n-\Sigma^{-1/2}1_k1_k'\Sigma^{-1/2}/1_k'\Sigma^{-1}1_k)\Sigma^{-1/2}=\hat{B}'\hat{B}$. Thus, 
\begin{eqnarray*}\det(Q_s)=tr({\cal M}'_1{\cal M}_1)tr({\cal M}'_2{\cal M}_2)-tr^2({\cal M}'_1{\cal M}_2)
\end{eqnarray*} 
for ${\cal M}_i=(m_{ijk})_{j,k}=\hat{B}G_iB_t=\hat{B}G_i$ since $\hat{B}G_i1_t=0$ with $i=1,2$. Simple calculation reveals that
\begin{eqnarray*}
\det(Q_s)=\bigg(\sum_{jk}m^2_{1jk}\bigg)\bigg(\sum_{jk}m^2_{2jk}\bigg)-\bigg(\sum_{jk}m_{1jk}m_{2jk}\bigg)^2.
\end{eqnarray*}
We have $\det(Q_s)\geq 0$ and $\det(Q_s)=0$ if and only if 
\begin{eqnarray*}
m_{1j_1k_1}^2m_{2j_2k_2}^2-2m_{1j_1k_1}m_{2j_2k_2}m_{1j_2k_2}m_{2j_1k_1}+m^2_{1j_2k_2}m^2_{2j_1k_1}=0,
\end{eqnarray*}
for any $(j_1,k_1)\neq (j_2,k_2)$, which indicates $m_{1j_1k_1}m_{2j_2k_2}=m_{1j_2k_2}m_{2j_1k_1}$. 
If there exists a $(t_1,k_1)$ such that $m_{1i_1j_1}\neq 0$ and $m_{2i_1j_1}\neq 0$, then we have
\begin{eqnarray*}
m_{1j_1k_1}\cdot m_{2j_2k_2}=m_{2j_1k_1}\cdot m_{1j_2k_2}
\end{eqnarray*}
for arbitrary $(j_2,k_2)$, which indicates ${\cal M}_1=(m_{1j_1k_1}/m_{2j_1k_1})\cdot {\cal M}_2$. When there does not exist such a $(t_1,k_1)$, then we have $m_{1jk}m_{2jk}=0$ for all $(j,k)$. With the assumption that $\det(Q_s)=0$, we have
\begin{eqnarray*}
\bigg(\sum_{jk}m^2_{1jk}\bigg)\bigg(\sum_{jk}m^2_{2jk}\bigg)=\det(Q_s)+\bigg(\sum_{jk}m_{1jk}m_{2jk}\bigg)^2=0,
\end{eqnarray*}
which means ${\cal M}_1=0$ or ${\cal M}_2=0$. ${\cal M}_1$ or ${\cal M}_2$ equals to 0 if and only if the columns of $G_1$ or $G_2$ are all equivalent to $(0,0,\ldots,0,0)'$ since the sum of elements in each column is 0. In this case, we can see that the $q(x)$ function is always $0$ and the sequence can only be $(1,1,1,\ldots,1)$, which further derives ${\cal M}_1={\cal M}_2=0$ and so $Q_s=0$. Thus, $s$ with $Q_s=0$ can not be a support point of the optimal design $\xi$ since if so we will have $y^*_{\xi}=0$, which means no information at all. This leads to the first conclusion in part ($i$) and the first conclusion in part ($ii$).

The only possible case to achieve $\det(Q_s)=0$ is that ${\cal M}_1=(m_{1j_1k_1}/m_{2j_1k_1})\cdot {\cal M}_2$.
Note that ${\cal M}_1=\hat{B}G_1=\hat{B}(L_s-T_s)$ and ${\cal M}_2=\hat{B}G_2=\hat{B}(R_s-T_s)$. Let $\rho_0=m_{1j_1k_1}/m_{2j_1k_1}$, we have
\begin{eqnarray*}
0&=&\hat{B}(L_s-T_s)-\rho_0\hat{B}(R_s-T_s)=\hat{B}\{(L_s-T_s)-\rho_0(R_s-T_s)\}.
\end{eqnarray*}
Since $rank(\tilde{B})=k-1$ and $\tilde{B}\cdot 1_k=0$, we have $(L_s-T_s)-\rho_0(R_s-T_s)=1_k\cdot (c_1,\ldots,c_t)$ for some $(c_1,\ldots,c_t)\in\mathbb{R}^t$.
Meanwhile, $1_k'(L_s-T_s)=1_k'(R_s-T_s)=0$ and so $1_k'\cdot 1_k\cdot(c_1,\ldots,c_t)=0$, i.e., $(c_1,\ldots,c_t)=0$ and $(L_s-T_s)-\rho_0(R_s-T_s)=0$. Write $L_s-T_s$ the form of different rows as $L_s-T_s=(r_1',\ldots,r_k')'$. Then we have $T_s-R_s=(r_2',\ldots,r_k',r_1')'$. This indicates $r_{k-1}=-\rho_0r_k$, $r_{k-2}=-\rho_0r_{k-1}$, $\ldots$ ,$r_1=-\rho_0r_2$, $r_k=-\rho_0r_1$. Now we consider the exact form of $r_k$. There are two choices for $t_k$: ($i$) all elements are $0$ or ($ii$) one element equals $1$, another one equals $-1$ and all others equal $0$. For case ($i$), we have $L_s-T_s=0$, which can be achieve only by $(1,1,\ldots, 1,1)$. For case ($ii$), we have $\rho_0$ equals to either $1$ or $-1$. When $\rho=-1$, the matrix $L_s-T_s=0$ violates the basic property that the sum of each column equals to $0$ (now there are two column sums equal $k$ or $-k$). When $\rho=1$,
from the definitions of $G_1$ and $G_2$, we can see it can be achieved if and only if $k$ is even and $s=\langle(12\ldots12)\rangle$.
This concludes the second part of ($i$).

At this point, to violate $Q_{\xi}>0$, we must have ${\cal T}\in \langle(12\ldots12)\rangle$. 
Now we shall show that the minimized value of $q_s(x)$ when $s=(12\ldots12)$ can not be $y^*$. It can be shown that $\min_xq_{s_0}(x)=tr(T'_{s_0}\tilde{B}T_{s_0})-tr[(T'_{s_0}-L'_{s_0})\tilde{B}(T_{s_0}-L_{s_0})]/4$. Let $\alpha=(10\cdots10)$ and $\beta=(01\cdots01)$.
Then simple calculation reveals $\min_xq_{s_0}(x)=\alpha \tilde{B}\alpha'+\beta \tilde{B}\beta'+(\alpha \tilde{B}\alpha'+\beta \tilde{B}\beta'-2\alpha \tilde{B}\beta')/2$. Note that $\tilde{B}\alpha=\tilde{B}(1_k-\beta)=-\tilde{B}\beta$. We have $\min_xq_{s_0}(x)=0$. Thus, it is obvious that this single equivalent class can not generate the optimal design since otherwise the information will be $0$. Thus, we have $Q_{\xi}>0$ if $\xi$ is universally optimal among $\cal P$. This concludes the second part of ($ii$).

Result ($iii$) can be proved analogously to Proposition 1 in \cite{zheng:2015}.

\vspace{.4cm}

{\noindent\bf Proof of Theorem \ref{thm:1}.} Note that $\max_{\xi \in {\cal P}}q_{\xi}(x)=r(x)$. We have $q^*_{\xi}=\min_{x\in \mathbb{R}^2} q_{\xi}(x)\le \min_{x\in \mathbb{R}^2}$ $\max_{\xi\in {\cal P}}q_{\xi}(x)=y_*$, which implies $y^*=\max_{\xi \in {\cal P}}q^*_{\xi}\leq y_*$. 
Define ${\cal T}_0=\{s:q_s(x^*)=r(x^*)\}$. If ${\cal T}_0$ contains a single sequence, it is obvious that we have $y^*\geq r(x^*)=y_*$. Now, we consider the case ${\cal T}_0$ has more than one sequence.
Let $\triangledown q_{s}(x)$ (resp. $\triangledown q_{\xi}(x)$) be the gradient of the bivariate function $q_{s}(\cdot)$ (resp. $q_{\xi}(\cdot)$) evaluated at point $x$. For convenience, we also use $\triangledown q_{s}(x)$ for the derivative function if $x\in\mathbb{R}$. For ${\cal C}_{\cal T}=\{\sum_{s\in {\cal T}}w_s\triangledown q_{s}(x^*):w_s\geq 0,\sum_{s\in {\cal T}} w_s>0\}$, we claim $0\in {\cal C}_{\cal T}$. Otherwise, there
exists a vector $c\in \mathbb{R}^2$ such that $c'\triangledown q_{s}(x)<0$ for all $s\in {\cal T}$, which implies that $r(x)$ decreases
in the direction $c$ at point $x^*$, a contradiction to the condition of $r(x)=y_*$. As a result, there exists a measure, say $\xi_0$, such that $\triangledown q_{\xi_0}(x^*)=0$ and $q_{\xi_0}(x^*)=y_*$. Then $y^*\ge q_{\xi 0}^*=
q_{\xi_0}(x^*)=y_*$. Hence part ($i$) is concluded.

Part ($ii$) naturally follows from a typical argument of the general equivalence theory (GET) for D-criterion by treating the scaler $y_{\xi}$ as the Schur complement of the matrix $F_{\xi}$. Part ($iii$) can be proved analogously to Theorem 2 in \cite{zheng:2015}.

\vspace{.4cm}

{\noindent\bf Proof of Theorem \ref{thm:3}.} Equations (\ref{eq3-1})--(\ref{eq3-3}) are equivalent to
\begin{eqnarray}
E_{\xi00}+E_{\xi00}(x^*\otimes B_t)&=&y^*B_t/(t-1),\label{proof3-1}\\
E_{\xi10}+E_{\xi11}(x^*\otimes B_t)&=&0,\label{proof3-2}\\
\sum_{s\in{\cal T}}p_s&=&1\label{proof3-3}.
\end{eqnarray}

There exists a symmetric measure denoted by $\xi_1$ such that $\xi_1\in{\cal P}^*$. From Theorem \ref{prop:pd}, we conclude $C_{\xi}=C_{\xi_1}=y^*B_t/(t-1)$. For $\xi_2=(\xi_1+\xi)/2$, we have $C_{\xi_2}\geq (C_{\xi}+C_{\xi_1})/2=y^*B_t/(t-1)$. Note that $\xi_1$ is already optimal, we have $C_{\xi_2}=y^*B_t/(t-1)$. Now we adopt the similar arguments as in \cite{kushner:1997} and have
\begin{eqnarray}
E_{\xi11}(E_{\xi11}^+E_{\xi10}-E_{\xi_211}^+E_{\xi_210})=0,\label{proof3-4}\\
E_{\xi_111}(E_{\xi_111}^+E_{\xi_110}-E_{\xi_211}^+E_{\xi_210})=0,\label{proof3-5}
\end{eqnarray}
where $^+$ represents the Moore-Penrose generalized inverse. For symmetric $\xi_1$, we have $E_{\xi_111}=Q_{\xi_1}\otimes B_t/(t-1)$ with $1_t'C_{\xi ij}1_t=0$ for $i,j=1,2$. For an arbitrary matrix $A$, let $C(A)$ denote the column space of $A$ and $C^{\perp}(A)$ denote the orthogonal complement space. Note that (\ref{proof3-5}) indicates 
\begin{eqnarray*}
C(E_{\xi_111}^+E_{\xi_110}-E_{\xi_211}^+E_{\xi_210})\subset C^{\perp}(E_{\xi_111}).
\end{eqnarray*}
Meanwhile, we can verify that
\begin{eqnarray*}
C^{\perp}(E_{\xi_111})\subset C^{\perp}(E_{\xi_111}^+E_{\xi_110}-E_{\xi_211}^+E_{\xi_210}).
\end{eqnarray*}
Thus, we have $C(E_{\xi_111}^+E_{\xi_110}-E_{\xi_211}^+E_{\xi_210})\subset C^{\perp}(E_{\xi_111}^+E_{\xi_110}-E_{\xi_211}^+E_{\xi_210})$ which indicates $E_{\xi_111}^+E_{\xi_110}-E_{\xi_211}^+E_{\xi_210}=0$ and so
\begin{eqnarray}
E_{\xi_211}^+E_{\xi_210}&=&E_{\xi_111}^+E_{\xi_110}\notag\\
&=&Q_{\xi_1}^{-1}l_{\xi_1}\otimes B_t\label{proof3-6}\\
&=&-x^*\otimes B_t,\notag
\end{eqnarray}
in view of Theorem \ref{thm:1}($iii$). Now (\ref{proof3-2}) is derived from (\ref{proof3-4}) and (\ref{proof3-6}). Equations (\ref{proof3-1}) and (\ref{proof3-3}), and the sufficiency can all be derived analogously to the proof of Theorem 3 in \cite{zheng:2015}.

\vspace{.4cm}

{\noindent\bf Proof of Theorem \ref{thm:3.4}.} Let $\triangledown q_{s}(x)$ (resp. $\triangledown q_{\xi}(x)$) be the gradient of the bivariate function $q_{s}(\cdot)$ (resp. $q_{\xi}(\cdot)$) evaluated at point $x$. Note that (\ref{eqn-1}) is equivalent to $\sum_{s\in {\cal T}} p_s \triangledown q_s(x^*)=0$. Suppose (\ref{eqn-1}) and (\ref{eqn-2}) hold, then $q_{\xi}(x)$ reaches its minimum at $x^*$ and hence $y_{\xi}=q_{\xi}(x^*)=\sum_{s\in \cal T}p_s q_s(x^*)=\sum_{s\in \cal T}p_s y_*=y_*=y^*$. So $\xi$ is universally optimal due to the former conclusions and hence the sufficiency of the theorem. The necessity follows from the three former conclusions in view of $\triangledown q_{\xi}(x)=\sum_{s\in {\cal S}} p_s\triangledown q_s(x)$.

\if0{

\vspace{4cm}

{\noindent\bf Proof of Theorem \ref{thm:1}.} 
Part ($i$) can be proved through the same arguments as in Kushner (1997). Here we prove part ($ii$) only for Model (\ref{model:3}) since it can be proved more easily for Models (\ref{model:1}) and (\ref{model:2}). The part ($i$) indicates that an optimal measure among the subclass of symmetric measures is automatically universally optimal ${\cal P}$. For any symmetric measure, we have
\begin{eqnarray*}
C_{\xi}=y_{\xi}B_t/(t-1),\label{eqn:1123}~~~~~~y_{\xi}=c_{\xi 00}-\ell_{\xi}'Q_{\xi}^{-1}\ell_{\xi}\label{eqn:11232}
\end{eqnarray*}
For a symmetric measure, we have $C_{\xi ij}=c_{\xi ij}B_t/(t-1)+(1_t^{\prime}C_{\xi ij}1_t)J_t/t^2$ with $c_{\xi ij}=tr(B_tC_{\xi ij}B_t)$ with $B_t=I_t-1_t1_t'/t$. Then we have $E_{\xi10}=\ell_{\xi}\otimes B_t/(t-1)$ where $\ell_{\xi}=(c_{\xi 01},c_{\xi 02})'$ and $E_{\xi11}=Q_{\xi}\otimes B_t/(t-1)+\tilde{Q}_{\xi}\otimes J_t/t^2$ with $Q_{\xi}=(c_{\xi ij})_{1\leq i, j\leq 2}$ and $\tilde{Q}_{\xi}=(1_t^{\prime}C_{\xi ij}1_t)_{1\leq i, j\leq 2}$.  At this point, we shall show that both $Q_{\xi}$ and $\tilde{Q}_{\xi}$ are positive definite for any measure and hence $E_{\xi11}$ is positive definite for any symmetric measure. 

Let $\tilde{B}=(b_{ij})_{k\times k}$. Let ${\tilde{T}_s}$ (resp. ${\tilde{L}_s}$, ${\tilde{R}_s}$) be the design matrix for the direct (resp. left neighbour, right neighbour) effect associated with sequence $s$. Recall $L_s=HT_s$ and $R_s=H'T_s$ with $H=(\mathbb{I}_{i=j+1}-\mathbb{I}_{i=j})_{1\leq i,j\leq k}$. Then we have $L_s1_t=H1_k=(-1, 0, \dots, 0)^{\prime}$ and similarly $R_s1_t=(0, 0, \dots, -1)^{\prime}$, $s\in S$. Thus we have
\begin{eqnarray*}
\tilde{Q}_{\xi}=\left ( \begin{array}{cc}
\tilde{B}(1,1) & \tilde{B}(1,k) \\
\tilde{B}(k,1)  & \tilde{B}(k,k)\end{array} \right ),
\end{eqnarray*}
where $\tilde{B}(i,j)$ is the $(i,j)$th element of $\tilde{B}$. Note that $\tilde{B}$ is nonnegative definite with rank $k-1$ and $\tilde{B}1_k=0$. Hence for $x=(x_1, x_2)'\in \mathbb{R}^2-\{(0, 0)\}$, one has
$x'\tilde{Q}_{\xi}x=\tilde{x}'\tilde{B}\tilde{x}> 0$,
where $\tilde{x}=(x_1, 0, \dots, 0, x_2)\in  \mathbb{R}^k$.
Thus, $\tilde{Q}_{\xi}$ is positive.

Let $x=(x_1, x_2)\in \mathbb{R}^2$. Due to $Q_s\geq 0$, It will be sufficient to show that $x'Q_{s}x=0$ implies $x=(0, 0)$.
Firstly, we have
\begin{eqnarray*}
0=x'Q_{s}x=\sum_{i, j=1}^{2}x_ix_j tr(B_tC_{sij}B_t)=tr\left(\sum_{i, j=1}^{2}x_ix_j B_tC_{sij}B_t\right).
\end{eqnarray*}
Since $\sum_{i, j=1}^{2}x_ix_j B_tC_{sij}B_t$ is nonnegative definite, we have
\begin{eqnarray*}
0=\sum_{i, j=1}^{2}x_ix_j B_tC_{sij}B_t=B_t(x_1\tilde{L}_s+x_2\tilde{R}_s)^{\prime}\tilde{B}(x_1\tilde{L}_s+x_2\tilde{R}_s)B_t,
\end{eqnarray*}
from which we conclude
\begin{equation}\label{eq2-3}
\tilde{B}(x_1\tilde{L}_s+x_2\tilde{R}_s)B_t=0.
\end{equation}
From (\ref{eq2-3}), it follows that each column of $(x_1\tilde{L}_s+x_2\tilde{R}_s)B_t$ has identical entries, hence columns of $B_t(x_1\tilde{L}_s+x_2\tilde{R}_s)^{\prime}$ are equal. For $s=(t_1, t_2, \dots, t_k)$, let $ e_{t_i}$$(1\leq i \leq k)$  be a vector of length $t$ with $t_i$-th component as 1 and and the rest components as $0$. By convention we have $e_{t_0}=e_{t_{k+1}}=0$. The $i$th column of $(x_1\tilde{L}_s+x_2\tilde{R}_s)^{\prime}$ is given by $\alpha_i=x_1e_{t_{i-1}}-(x_1+x_2)e_{t_i}+x_2e_{t_{i+1}}$, $1\leq i \leq k$. Thus we have $B_t\alpha_i=B_t\alpha_j, 1\leq i, j \leq k,$
and hence $\alpha_i-\alpha_j=\lambda_{ij}1_t$ for some $\lambda_{ij}\in \mathbb{R}$. Particularly, for $2\leq i\leq k$,  we have
\begin{eqnarray}\label{eq2-4}
\lambda_{i, i-1}1_t&=&-x_1e_{t_{i-2}}+(2x_1+x_2)e_{t_{i-1}}-(x_1+2x_2)e_{t_i}+x_2e_{t_{i+1}}.
\end{eqnarray}
Next we discuss (\ref{eq2-4}) in the following four cases A--D.

Case A: $t\geq 3$ and $\lambda_{21}\neq 0$ or $\lambda_{k, k-1}\neq 0$. In fact, only $t=3$ is possible in this case. Suppose $\lambda_{21}\neq0$, then $t_1, t_2, t_3$ shall be different from each other. Then (\ref{eq2-4}) with $i=2$ implies $2x_1+x_2=-(x_1+2x_2)=x_2$, hence $x_1=x_2=0$. The argument for $\lambda_{k, k-1}\neq 0$ is the same by symmetry.

Case B: $t\geq 3$ and $\lambda_{21}=\lambda_{k, k-1}= 0$. $(a)$ When $t_1, t_2, t_3$ are different from each other, we have $2x_1+x_2=-(x_1+2x_2)=x_2=0$ by (\ref{eq2-4}) with $i=2$. $(b)$ $t_1=t_2\neq t_3$, we have $2x_1+x_2-x_1-2x_2=0$ and $x_2=0$. $(c)$ $t_1=t_3\neq t_2$, we have $2x_1+x_2+x_2=0$ and $x_1+2x_2=0$. $(d)$ $t_2=t_3\neq t_1$, we have $2x_1+x_2=0$ and $-x_1-2x_2+x_2=0$. Each of these equation systems leads to $x_1=x_2=0$.  $(e)$ When $t_1=t_2= t_3$, we have $2x_1+x_2-x_1-2x_2+x_2=x_1=0$.  Suppose $x_2\neq 0$, then by (\ref{eq2-4}) with $i=3$, one gets $t_3=t_4$. By iterating the process, we have $t_1=t_2=\dots=t_k$. But in this case $-x_2e_{t_k}=\lambda_{k, k-1}1_t=0$, which implies $x_2=0$.

Case C: $t=2$ and $\lambda_{21}\neq 0$. The $(a)$ and $(e)$ of Case B are not impossible here. When $t_1=t_2\neq t_3$, we have $x_1=2x_2$. In this case, $x_2(3e_{t_2}-4e_{t_3}+e_{t_4})=\lambda_{32}1_t$, which implies $x_1=x_2=0$. When $t_1=t_3\neq t_2$, we have $x_1=-\frac{4}{3}x_2$. In this case, $x_2(\frac{2}{3}e_{t_1}-\frac{5}{3}e_{t_2}+e_{t_4})=\lambda_{32}1_t$, which implies $x_1=x_2=0$. When $t_2=t_3\neq t_1$, then $x_1=-\frac{2}{3}x_2$. In this case, $x_2(\frac{2}{3}e_{t_1}-\frac{5}{3}e_{t_3}+e_{t_4})=\lambda_{32}1_t$, which implies $x_1=x_2=0$.

Case D: $t=2$ and $\lambda_{21}=0$. It is not possible to have $(a)$ for Case B. The cases of $(b)$, $(c)$, $(d)$ and $(e)$ of Case D can be argued in the same way as in Case B. Thus, ${Q}_{\xi}$ is positive.

Finally, ($iii$) is the Kiefer's type GET and can be proved by the same argument as in Theorem 1 of Zheng (2015).

\vspace{.4cm}

{\noindent\bf Proof of Proposition \ref{thm:2}.} 
First, we shall give the proof of conclusions ($i$)--($iii$). Note that, since $\max_{\xi \in {\cal P}}q_{\xi}(x)=r(x)$, we have $q^*_{\xi}=\min_{x\in \mathbb{R}^2} q_{\xi}(x)\le \min_{x\in \mathbb{R}^2}$ $\max_{\xi\in {\cal P}}q_{\xi}(x)=y_*$, which implies $y^*=\max_{\xi \in {\cal P}}q^*_{\xi}\leq y_*$. Let $\triangledown q_{s}(x)$ (resp. $\triangledown q_{\xi}(x)$) be the gradient of the bivariate function $q_{s}(\cdot)$ (resp. $q_{\xi}(\cdot)$) evaluated at point $x$. For convenience, we also use $\triangledown q_{s}(x)$ for the derivative function if $x\in\mathbb{R}$. For ${\cal C}_{\cal T}=\{\sum_{s\in {\cal T}}w_s\triangledown q_{s}(x^*):w_s\geq 0,\sum_{s\in {\cal T}} w_s>0\}$, we claim $0\in {\cal C}_{\cal T}$. Otherwise, there
exists a vector $c\in \mathbb{R}^2$ such that $c'\triangledown q_{s}(x)<0$ for all $s\in {\cal T}$, which implies that $r(x)$ decreases
in the direction $c$ at point $x^*$, a contradiction to the condition of $r(x)=y_*$. As a result, there exists a measure, say $\xi_0$, such that $\triangledown q_{\xi_0}(x^*)=0$ and $q_{\xi_0}(x^*)=y_*$. Then $y^*\ge q_{\xi 0}^*=
q_{\xi_0}(x^*)=y_*$. Hence part ($i$) is concluded.
Now suppose $q_{\xi}^*=y^*$. Then we have $y_*=r(x^*)\ge q_{\xi}(x^*)\ge q_{\xi}^*=y^*$, which implies $q_{\xi}(x^*)=y^*$. Suppose there exists a sequence $s\in {\cal V}_{\xi}$ and $s\not\in {\cal T}$, then we have $y^*=y_*>q_{\xi}(x^*)\geq q^*_{\xi}$, and hence a contradiction is reached.

Note that (\ref{eqn-1}) is equivalent to $\sum_{s\in {\cal T}} p_s \triangledown q_s(x^*)=0$. Suppose (\ref{eqn-1}) and (\ref{eqn-2}) hold, then $q_{\xi}(x)$ reaches its minimum at $x^*$ and hence $y_{\xi}=q_{\xi}(x^*)=\sum_{s\in \cal T}p_s q_s(x^*)=\sum_{s\in \cal T}p_s y_*=y_*=y^*$. So $\xi$ is universally optimal due to the former conclusions and hence the sufficiency of the theorem. The necessity follows from the three former conclusions in view of $\triangledown q_{\xi}(x)=\sum_{s\in {\cal S}} p_s\triangledown q_s(x)$.

\vspace{.4cm}

{\noindent\bf Proof of Proposition \ref{thm:3}.}
Clearly equations $(\ref{eq3-1})$--$(\ref{eq3-3})$ are equivalent to the following equations:
\begin{eqnarray}
E_{\xi00}+E_{\xi01}(x^*\otimes B_t)&=&y^*B_t/(t-1),\label{eq3-4}\\
E_{\xi10}+E_{\xi11}(x^*\otimes B_t)]&=&0,\label{eq3-5}\\
\sum_{s \in T}p_s&=&1.\label{eq3-6}
\end{eqnarray}
Firstly, we show the necessity. There exists a symmetric and universally optimal measure $\xi_1$. Then $C_{\xi_0}=C_{\xi_1}=y^*B_t/(t-1)$. Let $\xi_1=(\xi_0+\xi_1)/2$. Then we have $C_{\xi_2}\geq (C_{\xi_0}+C_{\xi_1})/2=y^*B_t/(t-1)$. Since $\xi_0$ is universally optimal, then $C_{\xi_2}=y^*B_t/(t-1)$. Hence, by similar arguments as in Kushner (1997), we have
\begin{eqnarray}
E_{\xi11}(E^+_{\xi11}E_{\xi10}-E^+_{\xi_211}E_{\xi_210})&=&0,\label{eq3-7}\\
E_{\xi_111}(E^+_{\xi_111}E_{\xi_110}-E^+_{\xi_211}E_{\xi_210})&=&0,\label{eq3-8}
\end{eqnarray}
where $+$ represents the Moore-Penrose generalized inverse. Note that $E_{\xi_111}=Q_{\xi_1}\otimes B_t/(t-1)+\tilde{Q}_{\xi_1}\otimes J_t/t^2$ and $Q_{\xi_1}$, $\tilde{Q}_{\xi_1}$ are both positive definite, then $det(E_{\xi_111})=det(Q_{\xi_1})^{t-1}det(Q_{\xi_1})/((t-1)^{2t-2}t^3)\neq 0$. By (\ref{eq3-8}), one gets
\begin{equation}\label{eq3-9}
E^+_{\xi_211}E_{\xi_210}=E^+_{\xi_111}E_{\xi_110}=Q^{-1}_{\xi_1}l_{\xi_1}\otimes B_t=-x^*\otimes B_t.
\end{equation}
Then (\ref{eq3-5}) follows from (\ref{eq3-7}) and (\ref{eq3-9}). Then by (\ref{eq3-5}) one has
\begin{eqnarray}
y^*B_t/(t-1)&=&E_{\xi00}-E_{\xi01}E^{-1}_{\xi11}E_{\xi10}\label{eqn:1127}\\
&=&E_{\xi00}+E_{\xi01}E^{-1}_{\xi11}E_{\xi11}(x^*\otimes B_t)\\
&=&E_{\xi00}+E_{\xi01}(x^*\otimes B_t),\label{eqn:11272}
\end{eqnarray}
which is (\ref{eq3-4}).

Note that $C_{\xi_0}\leq M^{\prime}D_{\xi}M$, where $M$ is any $3t\times t$ matrix. Take $M=(x_0, x_1, x_2)^{\prime}\otimes B_t$ and $x_0=1$. Then we get
\begin{equation}\label{eq3-11}
C_{\xi_0}\leq \sum_{i,j=0}^{2}x_ix_jB_tC_{\xi_0 ij}B_t,
\end{equation}
which further yields
\[tr(C_{\xi_0})\leq \sum_{i,j=0}^{2}x_ix_jtr(B_tC_{\xi_0 ij}B_t)=q_{\xi_0}(v).\]
Hence, we have $y^*=tr(C_{\xi_0})\leq q^*_{\xi_0}\leq y^*$ and conclude $q^*_{\xi_0}= y^*$. 
The sufficiency of the theorem is a direct result of (\ref{eqn:1127})--(\ref{eqn:11272}).}\fi

\vspace{.4cm}

In the rest of this appendix, discussions are all based on Models (\ref{model:2}) and (\ref{model:3}).
For the convenience of later analysis, we introduce the following notation under Model (\ref{model:2}). The covariance matrix $\Sigma$ is assumed to be type-H from now on.
\begin{eqnarray}\label{eqn:030801}
q_s(x)=k-\frac{\chi_s}{k}+4(\gamma_s-k)x+(6k+2\psi_s-8\gamma_s)x^2.
\end{eqnarray}

\vspace{.4cm}

{\noindent\bf Proof of Theorem \ref{thm:5}.} Theorem \ref{thm:5}($i$) can be derived similarly as Lemma 5 in \cite{zheng:2015}. Theorem \ref{thm:5}($ii$)--($iii$) can be proved similarly to Theorem 5 in \cite{zheng:2015}.


\if0{

\vspace{2cm}

We prove the result for Model (\ref{model:2}). The result for Model (\ref{model:3}) right follows the result for Model (\ref{model:2}) according to Theorem \ref{thm:5} ($iv$). First, consider the sequence $(1...1,2...2,\cdots,i...i)$.
 $(A)$ $(a_1\cdots a_1a_2\cdots a_2\ldots a_{t'}\cdots $ $a_{t'})$ and $(B)$ $(a_1\cdots a_1a_2\cdots a_2\ldots a_{t'}\cdots a_{t'}a_1\cdots a_1)$ where $a_{i_1}<a_{i_2}$ for any $i_1<i_2$ in which each consecutive subsequence contains at least two elements. Note that increasing $\psi_s$ by an arbitrary number $p$ will decrease $\gamma_s$ at least by $p$. Thus, $(A)$ or $(B)$ maximizes the value of $k-k^{-1}{\chi_s}+0.32\psi_s+0.32\gamma_s$.

We are now ready to identify all possible optimal sequences that belong to this sub-group. Let $n_1,\ldots,n_{t'}$ denote the number of $a_1,\ldots,a_{t'}$. If $n_{a_1}\geq k/2$, we can show that the sequence can not be optimal since the $\chi$ value is very large. If $n_1\leq k/2$, we can move the $a_1$'s in the tile of sequence ($ii$) back to its head to become sequence ($i$) since direct but tedious analysis reveals that the increase in $B_s$ will be strictly large than the increase in $\frac{A_s}{k}$ for any $z^*\in [0,4,0.5)$. Thus, we only need to consider sequences with the pattern of ($i$). Suppose $t'>2$. When there is some $i^*\in\{2,\ldots,t'-1\}$ such that $n_{i^*}>\min{n_1,n_{t'}}$, we can exchange $n_{i^*}$ and $\min{n_1,n_{t'}}$ such that $B_s$ is unchanged but $\frac{A_s}{k}$ is strictly decreased. Thus, each treatment in the possible optimal sequence is repeated for either $\lfloor k/t'\rfloor$ or $\lfloor k/t'\rfloor+1$ times. 

For each pair $(k,t')$, $g_s$ is fixed for case ($i$) and so we can find the optimal $t'_{opt}$ for each $k$ where $t'_{opt}$ is the optimal treatment number $t'$ for sequence $s_0$. Some tedious analysis reveals that $\sqrt{k}-1<t'_{opt}<\sqrt{k}+1$. Actually, for any given $k$, the $t'_{opt}$ can be easily calculated, even though an analytical expression is unavailable. If $t<t'_{opt}$, the only possible optimal sequence in this group is $(a_1\cdots a_1a_2\cdots a_2\ldots a_t\cdots a_t)$ where each treatment in the possible optimal sequence is repeated for either $\lfloor k/t\rfloor$ or $\lfloor k/t\rfloor+1$ times and $\min\{n_1,n_t\}\geq\max\{n_2,\ldots,n_{t-1}\}$.

Set $\tilde{t}=\min\{t,t'_{opt}\}$. We shall write down the $q_s(x)|_{x=0.4}$ and $q'_s(x)|_{x=0.4}$, $q_s(x)|_{x=0.5}$ and $q'_s(x)|_{x=0.5}$ as follows. 
\begin{eqnarray}
&&q_s(x)|_{x=0.4}=0.36k-\frac{\chi_s}{k}+0.32\gamma_s+0.32\psi_s-{\cal U}_{0.4}(k,t,s),\label{eqn:030601}\\
&&q'_{s}(x)|_{x=0.4}=0.8k+1.6\psi_{s}-2.4\gamma_{s}-{\cal H}_{0.4}(k,t,s),\\
&&q_s(x)|_{x=0.5}=0.5k-\frac{\chi_s}{k}+0.5\psi_s-{\cal U}_{0.5}(k,t,s),\\
&&q'_{s}(x)|_{x=0.5}=2k+2\psi_{s}-4\gamma_{s}-{\cal H}_{0.5}(k,t,s),
\end{eqnarray}
where ($i$) ${\cal H}_{0.4}(k,t,s)=1.6+\frac{1.6(k+t-2)}{kt}+1.6\frac{\delta_{t_1,t_k}}{k}$ with ${\cal U}_{0.4}(k,t,s)=0.32-0.8\frac{f_{s,t_1}+f_{s,t_k}}{k}$ $+0.32\frac
{k+t-2}{kt}+0.32\frac{\delta_{t_1,t_k}}{k}$, and ($v$) ${\cal H}_{0.5}(k,t,s)=2+\frac{2(k+t-2)}{kt}+2\frac{\delta_{t_1,t_k}}{k}$ with ${\cal U}_{0.5}(k,t,s)=0.5-\frac{f_{s,t_1}+f_{s,t_k}}{k}$ $+0.5\frac
{k+t-2}{kt}+0.5\frac{\delta_{t_1,t_k}}{k}$.

Now, let us change $s_0=(a_1\cdots a_1a_2\cdots a_2\ldots a_{\tilde{t}}\cdots a_{\tilde{t}})$ to $s_0'=(a_1a_2a_1a_2$ $\cdots a_1a_2a_3a_3\cdot a_3\ldots a_{\tilde{t}}\cdots a_{\tilde{t}})$ if $n_1=n_2$ or $s_0'=(a_1a_2a_1a_2$ $\cdots a_1a_2a_1a_3a_3\cdot a_3\ldots a_{\tilde{t}}\cdots a_{\tilde{t}})$ if $n_1=n_2+1$. Then some direct calculation reveals that $q_{s_0}(0.5)<q_{s'_0}(0.5)$ and $q_{s_0}(0.4)>q_{s'_0}(0.4)$, and $0<q'_{s'_0}(x)|_{x=0.4}$ and $q'_{s_0}(x)|_{x=0.5}<0<q'_{s'_0}(x)|_{x=0.5}$. We consider the nearly optimal design restricted to the combination of $\langle s_0\rangle\cup\langle s'_0\rangle$.

Then we have $0<q_{s_0}(0.4)-q_{s_0}(0.5)=0.06\tilde{t}+{\cal U}_{0.5}(k,t,s_0)-{\cal U}_{0.4}(k,t,s_0)$ $:=0.06\tilde{t}+\Delta{\cal U}(k,t,s_0)$. Note that 
\begin{eqnarray}
\Delta{\cal U}(k,t,s_0)=0.18-0.2\frac{f_{s_0,a_1}+f_{s_0,a_{\tilde{t}}}}{k}+0.18\frac{k+t-2}{kt}
\end{eqnarray}
From the definition of $s_0$, we have $\Delta{\cal U}(k,t,s_0)<0.18$. Thus, the minimum efficiency can be given as
\begin{eqnarray}\label{eq:091102}
1-\frac{0.04\tilde{t}+0.18}{q_{s_0}(0.4)}
\end{eqnarray}
and we only need to give the upper bound of the second term in equation (\ref{eq:091102}) with $q_{s_0}(0.4)=k-\frac{\chi_{s_0}}{k}-0.96\tilde{t}-{\cal U}_{0.4}(k,t,s_0)$ where ${\cal U}_{0.4}(k,t,s_0)=0.32-0.8\frac{f_{s_0,a_1}+f_{s_0,a_{\tilde{t}}}}{k}+0.32\frac
{k+t-2}{kt}$. From the range of $t>3$ and $k>9$, with some unreported tedious analysis, we have $\tilde{t}_{opt}\geq 3$ and also $3\leq\tilde{t}<\sqrt{k}+1$. When $\tilde{t}=3$, $\frac{0.04\tilde{t}+0.18}{q_{s_0}(0.4)}<\frac{0.3}{3}=0.1$ for any $k>8$. When $\tilde{t}>3$, $\frac{0.04\tilde{t}+0.18}{q_{s_0}(0.4)}<\frac{0.04\tilde{t}+0.18}{(0.75k-7)}<0.18/3.5<0.1$ since it can be shown that $k>13$ in this case. Thus, we have that the efficiency of the combination of this two sequences is at least 90\%.

Moreover, ($i$) when $\tilde{t}=3$, we have $\frac{0.04\tilde{t}+0.18}{q_{s_0}(0.4)}<\frac{0.12+0.18}{2k/3-6}$ which tends to $0$ as $k\rightarrow\infty$, and
($ii$) when $\tilde{t}>3$, we have $\frac{0.04\tilde{t}+0.18}{q_{s_0}(0.4)}<\frac{0.04\sqrt{k}+0.18}{0.75k-7}$ which also tends to $0$ as $k\rightarrow\infty$. For case ($i$), the efficiency is higher than $95\%$ when $k\geq 18$ and higher than $99\%$ when $k\geq 54$. For case ($ii$), the efficiency is higher than $95\%$ when $k\geq 20$ and higher than $99\%$ when $k\geq 90$. When $t\geq\sqrt{k}+1$, similar analysis reveals that the efficiency is higher than $95\%$ when $k\geq 16$ and higher than $99\%$ when $k\geq 75$.}\fi

\vspace{.4cm}

It should be mentioned here that proofs of Theorems \ref{thm:table} and \ref{thm:121101} frequently use the concept of {\it type}-$j$ {\it treatment} given in the proof of Theorem \ref{thm:table} when $t=3$.
We shall break the proof of Theorem \ref{thm:table} down to three cases according to different $t$: ($i$) $t=2$, ($ii$) $t=3$ and ($iii$) $t\geq 4$ and $4\leq k\leq 10$.\\

{\noindent\bf Proof of Theorem \ref{thm:table}.} The proof is teared down to three parts according to ($i$) $t=2$, ($ii$) $t=3$ and ($iii$) $t\geq 4$ and $4\leq k\leq 10$.

\vspace{.2cm}

Consider $t=2$. Suppose $k=2\lambda$ for some integer $\lambda$. Consider the intersection of $q_{s_1}(x)$ and $q_{s_2}(x)$, i.e., $x_0={(\lambda-1)}/{(2\lambda-1)}=0.5-1/(4\lambda-2)$. Simple analysis reveals that to have a larger $\psi_s$ than $s_1$, the sequence must have the pattern of $(11...111212...1212)$. Obviously we have $\psi_s\leq 2\lambda-2$. Thus, to achieve $q_{s}(x_0)\geq q_{s_1}(x_0)$ for such a sequence $s$ we need $f_1-f_2\geq \lambda-1$. In this case we can show that the increase from the $\gamma_s$ is even less than the increase in $\chi_s/k$ and so we always have $q_{s}(x_0)<q_{s_1}(x_0)$. Similar analysis can be adapted to $k=2\lambda+1$ and is omitted here for simplicity.

\if0{
A lengthy computation yields
\begin{equation}\label{eq5-1}
\tilde{q}_s(z)=q_{s,0}+q_{s,1}z+q_{s,2}z^2,
\end{equation}
where
$q_{s,0}=k-\frac{\chi_s}{k}$,
$q_{s,1}=2\{2\gamma_s-2k+\frac{f_{s, t_1}+f_{s, t_k}}{k}\}$,
$q_{s,2}=2\{3k-1-\frac{k+t-2}{kt}\}+2\{\psi_s-4\gamma_s-\frac{\delta_{t_1, t_k}}{k}\}$. $(i)$ For the given $s_1$ and $s_2$, simple calculation yields
\begin{eqnarray}
\tilde{q}_{s_1}(z)&=&\lambda-6z+5z^2,\label{ieq1}\\
\tilde{q}_{s_2}(z)&=&\lambda+2(1-4\lambda)z+(16\lambda-7)z^2.\label{ieq2}
\end{eqnarray}
Equating (\ref{ieq1}) with (\ref{ieq2}) and solving it for $z$, we obtain two solutions $z=\frac{2(\lambda-1)}{4\lambda-3}$ or $0$. Note that $\tilde{q}_{s_1}(0)>\tilde{q}_{s_1}(\frac{2(\lambda-1)}{4\lambda-3})$. Then we take $z^*=\frac{2(\lambda-1)}{4\lambda-3}$ and $y^*=\tilde{q}_{s_1}(z^*)$.

Note that $\tilde{q}^{\prime}_{s_1}(z^*)=-\frac{2(2\lambda+1)}{4\lambda-3}<0$ and $\tilde{q}^{\prime}_{s_2}(z^*)=\frac{2(16\lambda^2-30\lambda+11)}{4\lambda-3}>0,(\lambda\geq2)$. Hence by Lemma 3.2 in Kushner(1997a), we only need to show  $\tilde{q}_{s_1}(z^*)=\tilde{q}_{s_2}(z^*)=\max_{s\in S}\tilde{q}_{s}(z^*)$, which is equivalent to proving that $s_1$ and $s_2$ maximize $g_s(z^*)$. Observe that $\max_{s\in S}\psi_s=2\lambda-2$, $\max_{s\in S}\gamma_s=2\lambda-1$.

If $\psi_s=2\lambda-2$, then $s=(a)_{2\lambda}$ or $s=(ab)_{\lambda}\in \langle s_2\rangle$, where $(ab)$ is a permutation of $(12)$. For $s=(a)_{2\lambda}$, we have
$g_s-g_{s_1}=-4(z^*)^2+6z^*-\lambda<0$,
since $\frac{2}{5}\leq z^*<\frac{1}{2}$ and $\lambda\geq2$. Then $g_s<g_{s_1}$.

If $\psi_s=2\lambda-3$ and $\gamma_s=0$, then $B_s< B_{s_2}$, hence $g_s<g_{s_2}$. Then we assume $\gamma_s>0$ and $s$ contains $m$ consecutive subsequences, let $x_i$ be the length of them. Then  the largest possible value of $\psi_s$ is $(\Sigma_{i=1}^{m}x_i-2m)+(2\lambda-\Sigma_{i=1}^{m}x_i-2)+1=2\lambda-2m-1$, which yields $m=1$. That is, $s$ contains only one consecutive subsequence. In this case, it's easy to obtain
 $s=((a)_{x}, (ba)_{(\lambda-\frac{x+1}{2})}, b)$ or $s=((a)_{x}, (ba)_{(\lambda-\frac{x}{2})})$. For $s=((a)_{x},(ba)_{(\lambda-\frac{x+1}{2})}, b)$, $(2\leq x\leq 2\lambda-1)$, we have $f_{s,a}=\lambda+\frac{x-1}{2}$, $f_{s,b}=\lambda-\frac{x-1}{2}$ and
\[g_s-g_{s_1}=(x-1)4z^*(1-2z^*)-2(z^*)^2-\frac{(x-1)^2}{4\lambda}\leq (\lambda-1)^2(\frac{8}{(4\lambda-3)^2}-\frac{1}{\lambda})<0.\]
 Then $g_s<g_{s_1}$. Similarly, for $s=((a)_{x}, (ba)_{(\lambda-\frac{x}{2})})$, we have
\[g_s-g_{s_1}=(x-1)4z^*(1-2z^*)-2(z^*)^2+\frac{x(4z^*-x)}{4\lambda}<0.\]
Hence, for these two kinds of sequences, $g_s<g_{s_1}$.

If $\psi_s\leq2\lambda-4$, then $\gamma_s\leq 2\lambda-2$, hence $B_s\leq B_{s_1}$. Note that for any $s$,
\begin{eqnarray*}
A_s-A_{s_1}=f_{s,1}^2+f_{s,2}^2-2\lambda^2+2z^*(2\lambda-f_{s,t_1}-f_{s,t_k})+2\delta_{t_1,t_k}(z^*)^2.
\end{eqnarray*}
If $\delta_{t_1,t_k}=0$, then $f_{s,t_1}+f_{s,t_k}=2\lambda$, hence $A_s-A_{s_1}=f_{s,1}^2+f_{s,2}^2-2\lambda^2\geq0$. If $\delta_{t_1,t_k}=1$, then assume $f_{s,t_1}=f_{s,t_k}=f_{s,1}$ without loss of generality and let $\lambda-f_{s,1}=\tau$. In this case,
$A_s-A_{s_1}=2(z^*)^2+2\tau(\tau+2z^*)>0.$
 Hence, $A_s\geq A_{s_1}$ and then $g_s\leq g_{s_1}$, where equality holds if and only if $s=s_1$. In sum, we complete the proof of $(i)$.\\

$(ii)$ Similar to $(i)$, by simple algebra, it's easy to get $z^*$, $y^*$. Then it suffIces to show $g_{s_1}(z^*)=g_{s_2}(z^*)=\max_{s\in S}g_{s}(z^*)$. Note that $\max_s\psi_s=2\lambda-1$ and $\max_s\gamma_s=2\lambda$.

If $\psi_s=2\lambda-1$, then $s=(a_{2\lambda+1})$ or $s=( (ab)_{\lambda},a)\in \langle s_2\rangle$ . For $s=(a_{2\lambda+1})$, we have
\begin{eqnarray*}
g_s-g_{s_2}&=&2\lambda 4z^*(1-2z^*)+\frac{4\lambda z^*-2\lambda^2-2\lambda-2(z^*)^2}{2\lambda+1}\\
&=&-\frac{2(256\lambda^6-128\lambda^5-272\lambda^4+40\lambda^3+81\lambda^2+27\lambda+9)}{(2\lambda+1)(16\lambda^2+4\lambda-3)^2}<0.\ \  (\lambda>2)
\end{eqnarray*}
Hence, $g_s<g_{s_2}$.

If $\psi_s=2\lambda-2$ and $\gamma_s=0$, then $g_s<g_{s_1}$. Suppose $\gamma_s>0$. In this case, $s$ can only contain one consecutive subsequence with length $x$ and it's easy to obtain $s=((a)_{x}, (ba)_{\lambda-\frac{x-1}{2}})$ or $s=((a)_{x},(ba)_{\lambda-\frac{x}{2}}, b)$. For $s=((a)_{x},(ba)_{\lambda-\frac{x}{2}},b)$, we have
\[g_s-g_{s_1}=(x-2\lambda)4z^*(1-2z^*)+2(z^*)^2+\frac{2x-x^2}{2(2\lambda+1)}<0,\ \ (2\leq x\leq 2\lambda).\]
For  $s=((a)_{x}, (ba)_{\lambda-\frac{x-1}{2}})$, we have
\[g_s-g_{s_2}=(x-1)4z^*(1-2z^*)-2(z^*)^2+\frac{4(x-1)z^*+1-x^2}{2(2\lambda+1)}<0,\ \ (2\leq x\leq 2\lambda-1).\]
Hence, $g_s<g_{s_1}=g_{s_2}$.

If $\psi_s\leq2\lambda-3$, then $\gamma_s\leq2\lambda-1$ and $B_s\leq B_{s_1}$. If $\delta_{t_1,t_k}=0$, then
\[A_{s_1}-A_s=(\lambda+1)^2+\lambda^2-(f_{s,1}^2+f_{s,2}^2)\leq0,\]
where equality holds if and only if $f_{s,1}=\lambda, \lambda+1$, since $f_{s,1}+f_{s,2}=2\lambda+1$. Hence, $g_s\leq g_{s_1}$, equality holds if and only if $s=s_1$. If $\delta_{t_1,t_k}=1$, then $\gamma_s\leq2\lambda-2$. If $\gamma_s=2\lambda-2$, then $s=((a)_{2\lambda-1},ba)$ or $s^{\prime}$ with $\psi_s=2\lambda-2$. Then we only need to consider the case $\gamma_s\leq2\lambda-3$. Note that $4z^*(1-2z^*)\gamma_s-4(z^*)^2\leq0$ if and only if $\gamma_s\leq\frac{z^*}{1-2z^*}=2\lambda-\frac{3}{2}+\frac{3}{2(4\lambda+3)}$. Hence, we have
\begin{eqnarray*}
B_s-B_{s_2}&\leq& 4z^*(1-2z^*)\gamma_s+2(2\lambda-3)(z^*)^2-2(2\lambda-1)(z^*)^2\\
&=&4z^*(1-2z^*)\gamma_s-4(z^*)^2<0.
\end{eqnarray*}
 Besides, let $f_{s,1}=\lambda+1+\tau$,  $f_{s,2}=\lambda-\tau$, one has $A_{s_2}-A_s=2\tau(2z^*-\tau-1)\leq0$.
 Then we have $g_s<g_{s_2}$, which completes the proof of $(ii)$.}\fi

\vspace{.2cm}

Consider $t=3$. Suppose $k\geq 48$. For the other cases of $k<48$, the results have been verified with the help of computer codes.
When $k=3u$ with a positive integer $u$, for $s_1$ and $s_2$, a simple calculation reveals that the intersection is achieved at $x_0=(2u-2)/(4u-3)$.
Note that we have at most $3$ different treatments. We can define $\gamma_{s,i}$ and $\psi_{s,i}$ as the contribution to $\gamma_s$ and $\psi_s$ from treatment $i=1,2,3$. For $s_1$, $\gamma_{s_1}=3u-3$ and $\psi_{s_1}=3u-6$. For $s_2$, $\gamma_{s_2}=u$ and $\psi_{s_2}=3u-4$.
If there shall be some sequence $s_0$ such that $q_{s_0}(x_0)\geq q_{s_1}(x_0)$, we must have all $3$ different treatments in $s_0$ since otherwise $\chi_s/k$ will be too large that any possible increase in $\gamma_s$ and $\psi_s$ can not compensate for the loss from the increasing of $\chi_s/k$ (for simplicity, we will simply say $\chi_s/k$ is {\it too large} for this phenomenon). Also, we have $\psi_{s_0}>\psi_{s_1}=3u-6$ since $\gamma_{s_1}$ has already achieved its maximum over all sequences which has at least $3$ treatments. Now we consider $\psi_{s_0}=3u,3u-1,3u-2,3u-3,3u-4,3u-5$. Note that $\psi_{s_0,i}=f_i$ if and only if $k$ is even and $s_0=(i,a_1,i,a_2,\ldots,i,a_{k/2})$ where $a_j\neq 1$ for $j=1,\ldots,k/2$. 

{\it It should be especially emphasized here that the following definition of {\it type-$j$} treatment is vital in proofs of the current theorem and Theorem \ref{thm:121101}.}
For any sequence $(a_1,\ldots,a_t)$, define $\psi_{s,i}=\#\{j:a_j=i,a_{j+2}=i\}$. We call treatment $i$ with $\psi_{s,i}=f_i-j$ as type-$j$ treatment and use $n_j$ to denote the number of type-$j$ treatments. It should be emphasized here that $n_0=0$ for all sequences in $\cal T$ when $t>2$ since otherwise $\chi_{s_0}/k$ will be too large (this conclusion will be discussed in detail in the proof of Theorem \ref{thm:121101}).

So it is impossible to have $\psi_{s_0}=3u,3u-1,3u-2$ since we need at least one type-$0$ treatment. To achieve $\psi_{s_0}=3u-3$, we have $n_0=0$, $n_1=3$. This can be achieved if and only if $s_0$ is in the same equivalence class as $(1212...1212111...111313...1313)$ such that we have too many $1$ ($f_1>k/2$) such that $\chi_s/k$ is also too large. Now we come to the case of $\psi_{s_0}=3u-4$. Then we have $n_0=0$, $n_1=2$, $n_2=1$ since there are at least $3$ different treatments. When $\psi_{s_0,1}=f_1-1$, it is either $f_{s_1}\geq k/2$ or $f_{s_1}< k/2$ and this treatment $1$ appears in the form of $s=(1a_11a_21...1a_{f_1}a_{f_1+1}...)$ with $a_j\neq 1$. Thus, $\gamma_{s_0,1}=0$. Similarly, we know $\psi_{s_0,2}=f_2-1$ indicates $\gamma_{s_0,2}=0$ and $\psi_{s_0,3}=f_3-2$ indicates $\gamma_{s_0,3}\leq f_3-1$. Then, compared with $s_2$, it is only possible to win from the increase of $\gamma_s$, but this benefit comes with a larger increase of $\chi_s/k$ such that we still have $q_{s_2}(x_0)>q_{s}(x_0)$. When $\psi_{s_0}=3u-5$, compared with $s_2$, $\gamma_s$ can be increased by at most $2k/3$ while the loss in $\psi_s$ overwhelms the benefit from the increase of $\gamma_s$. Thus, for $k=3u$, we have proved the corresponding result. Similar analysis also holds for $k=3u+1$ and $k=3u+2$ and so is omitted here.

\vspace{.2cm}

Consider $t=4$ and $4\leq k\leq 10$.
For $k=3$ and $s_1=(123)$, by (\ref{eqn:030801}) we have
\[{q}_{s_1}(x)=2-\frac{32}{3}x+\left(16-\frac{2(t+1)}{3t}\right)x^2,\]
The minimizer ${q}_{s_1}(x)$ is $x^*=\frac{8t}{23t-1}$ with the minimum value $y^*={q}_{s_1}(x^*)=\frac{2}{3}\frac{5t-3}{23t-1}$. Note $s=(aaa)$, $(abb)$, $(bab)$ or $s_1$ in this case. It's easy to verify $g_s<g_{s_1}$ if $s\neq s_1$.
Besides, ${q}_{s_1}^{\prime}(x^*)=0$, hence ${\cal T}_0=\langle(123)\rangle$ by Lemma 3.2 in \cite{kushner:1997}.

Similar tedious analysis can be applied to $4\leq k\leq 8$. However we may adopt a much simpler analysis with the help of computer codes. For $4\leq t\leq k$, we can search for the optimal sequence by Algorithm \ref{alg:T}. For $t>k$, we can prove that $p_{s_1}'(0.5)\leq 0$ and $p_{s_2}'(0.4)\geq 0$ for all $s_1$ and $s_2$ listed in Table \ref{tb:theorem}. For the former $x^*$ and the expression of $q_s(x^*)$, we can easily conclude that the maximum value of $q_{s}(x^*)$ is still achieved by the former ${\cal T}$ when $t\leq k$ since the change of $t$ does not change the rank of $q_s(x^*)$ with respect to $s$. Since $p_{s_1}'(0.5)\leq 0$ and $p_{s_2}'(0.4)\geq 0$ always hold, we can see $p_{s_1}'(x^*)\leq 0$ and $p_{s_2}'(x^*)\geq 0$. Thus, $x^*$ is still the minimizer of $\max_sq_s(x)$. The conclusion right follows.

\vspace{.4cm}

{\noindent\bf Proof of Theorem \ref{thm:11}.} Here we only discuss the case of $k>20$. {The other cases with $k\leq 20$ have been verified by computer codes.} We shall begin with the proof of  $x^*\in [0.4,0.5]$.
By direct calculations we have the following results for the $q_s(0.4)$, $q'_{s}(0.4)$, $q_s(0.5)$, $q'_{s}(0.5)$. 
\begin{eqnarray}
&&q_s(0.4)=0.36k-k^{-1}{\chi_s}+0.32\gamma_s+0.32\psi_s,\label{eqn:030601}\\
&&q'_{s}(0.4)=0.8k+1.6\psi_{s}-2.4\gamma_{s},\label{eqn:de4}\\
&&q_s(0.5)=0.5k-k^{-1}{\chi_s}+0.5\psi_s,\label{eqn:func5}\\
&&q'_{s}(0.5)=2k+2\psi_{s}-4\gamma_{s}.\label{eqn:de5}
\end{eqnarray}
Consider all $f_1$ elements $1$ in the sequence, if $f_1\neq 0$. By replacing all these elements to the first $f_1$ position(s) in the new sequence, we can see that the value of $0.32\gamma_s+0.32\psi_s$ increased by these elements is increased. Meanwhile, the $\chi_s$ is not changed since we are only re-ordering the elements in this sequence. This argument applies to all treatments in the original sequence $s$. Thus, one of the sequences of the form $(1\ldots12\ldots23\ldots3\cdots)$ maximizes $q_s(0.4)$. Also, it is easy to show that $q_s'(0.4)\leq 0$ for such sequences and so $x^*\geq 0.4$. Similar analysis reveals that $q'_s(0.5)\geq 0$ for all $s\in\cal S$ and so $x^*\leq 0.5$. The former analysis reveals $x^*\in[0.4,0.5]$.

The following analysis will address the lower bound of the design supported on one fixed equivalence class of sequences (and its due equivalence class). Here we give the outline of the proof. First, we find a sequence that maximizes $q_s(0.4)$. This sequence, denoted by $s_{i_0}$ will be shown to have $q_{s_{i_0}}(0.4)\leq 0$, $q_{s_{i_0}}(0.5)\geq 0$. Since it maximizes $q_s(0.4)$ and $x^*\in[0.4,0.5]$, we have $y^*\leq q_s(0.4)$. Then, the lower bound for the optimal design supported on $\langle s_{i_0}\rangle$ is at least $y^*_{i_0}/q_s(0.4)$ where $y^*_{i_0}=\min_xq_{s_{i_0}}(x)$ and this value is taken as the lower bound given in this theorem. Following this outline, the detailed analysis is given as follows.

Let $s_{i,f_1,\ldots,f_i}=(1_{f_1}',2\cdot 1_{f_2}',\cdots,i\cdot1_{f_i}')$. Former analysis has reveals that $q_s(0.4)$ ($s\in\cal S$) can be maximized by one of some $s_{i,f_1,\ldots,f_i}$. Now we find one maximizer and characterize the corresponding parameters $i_0,f_1,\ldots,f_{i_0}$. First we shall show $|\{j:1\leq j\leq i_0,f_j=1\}|\leq 1$ since otherwise we can merge two different treatments used for only one period to one treatment used in two neighbored periods. The increase in $\gamma_s(0.4)$ provides an increase of $0.32$ in $q_s(0.4)$ and the increase from $\chi_s/k$ (which decreases $q_s(0.4)$) is $2/k<0.32$k, which leads to contradiction. 
Now we will show that $|\{j:1\leq j\leq i_0,f_j=1\}|=0$. We can always increase $f_{j_{min}}$ ($f_{j_{min}}$ minimizes $f_1,\ldots,f_{i_0}$) by one and decrease $f_{j_{max}}$ ($f_{j_{max}}$ minimizes $f_1,\ldots,f_{i_0}$) by one such that $\gamma_s$ and $\psi_s$ are not changed while $\chi_s/k$ is decrease and $q_s(0.4)$ is increased. Thus, we know $q_s(0.4)$ can be maximized by either of the following two types of sequences: ($i$) $s_{i,1}=(1\cdot1'_{f_1},2\cdot1'_{f_2},\ldots,i\cdot1'_{f_i},i+1)$ with $1+f_i\geq f_1\geq f_2\geq\cdots\geq f_i\geq 2$, and ($ii$) $s_{i,2}=(1\cdot1'_{f_1},2\cdot1'_{f_2},\ldots,i\cdot1'_{f_i})$ with $1+f_i\geq f_1\geq f_2\geq\cdots\geq f_i\geq 2$.
By increasing $i$ by one and rebalancing $f_1,\ldots,f_i$, we can see that the decrease in $\chi_s/k$ is at least $(k-1)/i-(k-1)/(i-1)-(i+1)/k$ for case ($i$) and $k/i-k/(i-1)-(i+1)/k$ for case ($ii$). For $i=2,3$, we can show that this decrease is larger than $0.64$ with $k>20$, which overwhelms the decrease in $0.32\gamma_s+0.32\psi_s$, i.e., $0.64$. Thus, we have $i\geq 4$. So if we merge the one-period treatment into $f_{j_{min}}$ to increase it by one, the increase in $0.32\gamma_s+0.32\psi_s$ is $0.64$, while the increase in $\chi_s/k$ is less than $[(k/4+1)^2-(k/4)^2-1]/k=1/2<0.64$. Thus, we can do this merge and increase $q_s(0.4)$. So, we have proved $|\{j:1\leq j\leq i_0,f_j=1\}|=0$.

Now, we have proved that the maximizer of $q_s(0.4)$ can be found among case ($ii$), i.e., $s_{i,2}=(1\cdot1'_{f_1},2\cdot1'_{f_2},\ldots,i\cdot1'_{f_i})$ with $1+f_i\geq f_1\geq f_2\geq\cdots\geq f_i\geq 2$. For simplicity, we will simply write $s_{i,2}$ as $s_i$ and find the sequence we need from this type of sequences. So we need to decide the value of $i$. Simple calculation reveals
\begin{eqnarray}\label{eqn:052601}
q_{s_i}(0.4)=0.36k-\chi_{s_i}/k+0.32(k-i)+0.32(k-2i)=k-(0.96i+\chi_{s_i}/k),
\end{eqnarray}
where $\chi_{s_i}=(k-i\lfloor k/i\rfloor)(\lfloor k/i\rfloor+1)^2+(i-k+i\lfloor k/i\rfloor)(\lfloor k/i\rfloor)^2$. Note that
\begin{eqnarray}\label{eqn:120601}
\chi_{s_i}&=&(k-i\lfloor k/i\rfloor)(\lfloor k/i\rfloor+1)^2+(i-k+i\lfloor k/i\rfloor)(\lfloor k/i\rfloor)^2\\
&=&(k-i\lfloor k/i\rfloor)(k/i+\Delta_1)^2+(i-k+i\lfloor k/i\rfloor)(k/i-\Delta_2)^2,\notag
\end{eqnarray}
where $\Delta_1+\Delta_2=1$.
Obviously, $(k-i\lfloor k/i\rfloor)\Delta_1=(i-k+i\lfloor k/i\rfloor)\Delta_2$, and $i\Delta_1=(i-k+i\lfloor k/i\rfloor)$. Direct analysis reveals
\begin{eqnarray}\label{eqn:120602}
\chi_{s_i}&=&k^2/i+i(1-\Delta_1)\Delta_1^2+i(1-\Delta_1)^2\Delta_1=k^2/i+i(1-\Delta_1)\Delta_1\leq k^2/i+i/4.
\end{eqnarray}

We can see that
\begin{eqnarray*}
0.96i+k/i\leq (0.96i+\chi_{s_i}/k)\leq 0.96i+k/i+i/4k.
\end{eqnarray*}
The lower bound above is minimized at $i=\sqrt{k/0.96}$ and the upper bound above is minimized at $\sqrt{k/(0.96+1/4k)}$. Thus we claim that the optimal $i_0$ locates in $(\sqrt{k/0.96}-1.5,\sqrt{k/0.96}+1.5)$ using the following analysis. Take the largest integer $i_1=\sqrt{k/0.96}+\delta\in[\sqrt{k/0.96},\sqrt{k/0.96}+1]$.
For arbitrary $i_2=\sqrt{k/0.96}+B\geq \sqrt{k/0.96}+1.5$, we have
\begin{eqnarray*}
&&q_{s_{i_1}}(0.4)-q_{s_{i_2}}(0.4)\notag\\
\geq&&0.96(\sqrt{k/0.96}+B)+k/(\sqrt{k/0.96}+B)-0.96(\sqrt{k/0.96}+\delta)\notag\\
&&-k/(\sqrt{k/0.96}+\delta)-(\sqrt{k/0.96}+\delta)/k\notag\\
=&&0.96(B-\delta)-(B-\delta)k/[(\sqrt{k/0.96}+B)(\sqrt{k/0.96}+\delta)]-(\sqrt{k/0.96}+\delta)/k\notag\\
=&&0.96(B-\delta)[(B+\delta)\sqrt{k/0.96}+B\delta]/[(\sqrt{k/0.96}+B)(\sqrt{k/0.96}+\delta)]/k\notag\\
&&-(\sqrt{k/0.96}+\delta)~~~~~~~~~~>~0 ~(~{\rm for} ~k\geq 20),
\end{eqnarray*}
and so $i_0<\sqrt{k/0.96}+1.5$. Similarly, we can show $i_0>\sqrt{k/0.96}-1.5$.

Further more, for $i_0\in(\sqrt{k/0.96}-1.5,\sqrt{k/0.96}+1.5)$ and $i(t,k)=\min(t,i_0)$, we have
\begin{eqnarray}\label{eqn:lowerbound}
{\rm effi}(\langle s_{i}\rangle)&=&\frac{\min_xq_{s_{i}}(x)}{\max_{s\in\cal S}q_s(0.4)}=\frac{k-\chi_{s_{i}}/k-i}{k-\chi_{s_{i}}/k-0.96i}=1-\frac{0.04i}{k-\chi_{s_{i}}/k-0.96i}\\
&\geq&1-\frac{0.04i}{k-k/i-i/k-0.96i}\label{eqn:uuuuu}.\\
{\rm effi}(\langle s_{i(t,k)}\rangle)&>&1-\frac{0.04\sqrt{k/0.96}+0.06}{k-2.5-0.96\sqrt{k/0.96}}.\label{eqn:uuuuu2}
\end{eqnarray}
The last inequality (\ref{eqn:uuuuu2}) comes from the fact that (\ref{eqn:uuuuu}) decreases in $i$.
This gives the efficiency lower bound. Note that for each $s_i$, we have $q_{s_i}'(0.5)=0$. Now we search for the $i$ which maximizes ${\rm effi}(\langle s_{i}\rangle)$ defined in (\ref{eqn:lowerbound}), denoted by $i^*=\min\{{\arg\max}_{i} ~q_{s_i}(0.5),t\}$.

Obviously, $i^*$ maximizes $k-\chi_{s_{i}}/k-i$. Then we need to find the $i$ that minimizes $\chi_{s_{i}}/k+i$, where
\begin{eqnarray}
\chi_{s_{i}}/k+i=(i+k/i)+i(1-\Delta_1)\Delta_1/k.\label{proof3-7}
\end{eqnarray}
If $\sqrt{k}\in\mathbb{Z}$, then $i$ minimizes $(i+k/i)$. When we decrease $i$ by $1$, then the increase in $(i+k/i)$ equals to $1/(\sqrt{k}-1)$ while the decrease of $i(1-\Delta_1)\Delta_1/k$ is at most $1/(4\sqrt{k})$. When we increase $i$, then both two terms in (\ref{proof3-7}) increase. Thus, in this case, ${\arg\max}_{i} ~q_{s_i}(0.5)=\sqrt{k}$. If $\sqrt{k}\not\in\mathbb{Z}$, consider $\lfloor\sqrt{k}\rfloor$ and $\lceil\sqrt{k}\rceil$. When we consider $i<\lfloor\sqrt{k}\rfloor$ or $i>\lceil\sqrt{k}\rceil$, we can show that $\chi_{s_{i}}/k+i$ increases using the same analysis. Thus, ${\arg\max}_{i} ~q_{s_i}(0.5)\in\{\lfloor\sqrt{k}\rfloor,\lceil\sqrt{k}\rceil\}$ in this case.

\begin{lemma}\label{lem:b1}
Suppose $\Sigma$ is of type-H and $k>10$, $t>3$. For any sequence $s$ which contains at least one type-$0$ treatment and arbitrary optimal approximate design $\xi$, we have $\xi(s)=0$, which means $s\not\in\cal T$.
\end{lemma}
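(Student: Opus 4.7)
The plan is to prove the stronger assertion $s\notin\mathcal{T}$ directly, i.e.\ that $q_s(x^*)<y^*$; by Theorem \ref{thm:3} this immediately forces $\xi(s)=0$ for every $\xi\in\mathcal{P}^*$. First I would unpack the structural content of the type-$0$ condition. The identity $\psi_{s,i_0}=f_{i_0}$ means that whenever $t_j=i_0$ we also have $t_{j+2}=i_0$; iterating modulo $k$, the $i_0$-positions form a union of cosets of the subgroup $\langle 2\rangle\leq \mathbb{Z}/k\mathbb{Z}$. For odd $k$ this forces $f_{i_0}=k$, so $s$ is constant and Theorem \ref{prop:pd} yields $Q_s=0$ and $q_s\equiv 0<y^*$. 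For even $k$ the only nontrivial possibility is $f_{i_0}=k/2$ with $i_0$ at every second position; after a rotation we may write $s=(i_0,a_1,i_0,a_2,\ldots,i_0,a_{k/2})$ with $a_l\neq i_0$. Elementary counting then gives $\gamma_s=0$, $\psi_s=k/2+\eta_a$ where $\eta_a\leq k/2-m$ counts circularly adjacent equal pairs among $(a_l)$ and $m$ is the number of distinct values occurring, and Cauchy--Schwarz gives $\chi_s=k^2/4+\sum_{j\neq i_0}f_j^2\geq k^2/4+k^2/(4m)$.

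Substituting these parameters into (\ref{eqn:030801}) produces the explicit quadratic $q_s(x)=0.75k-\sum_{j\neq i_0}f_j^2/k-4kx+(7k+2\eta_a)x^2$, whose unique minimizer is $x_s=2/(7+2\eta_a/k)\leq 2/7<0.4$. Combined with $x^*\in[0.4,0.5]$ (established earlier in this section), the convex $q_s$ is strictly increasing at $x^*$, so
\[
q_s(x^*)\;\leq\;q_s(0.5)\;=\;0.5k-\sum_{j\neq i_0}f_j^2/k+0.5\eta_a\;\leq\;0.75k-\Bigl(\tfrac{k}{4m}+\tfrac{m}{2}\Bigr),
\]
and AM--GM on the bracket gives the uniform bound $q_s(x^*)\leq 0.75k-\sqrt{k/2}$ (attained at $m=\sqrt{k/2}$ with the $f_j$ balanced).

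For the matching lower bound on $y^*$, I would use that $y^*\geq y_\xi$ for the degenerate measure concentrated on the block-structured sequence $s_{i^*}$ of Theorem \ref{thm:11}; this gives $y^*\geq \min_x q_{s_{i^*}}(x)=k-\chi_{s_{i^*}}/k-i^*$, and the AM--GM inequality $k/i+i\geq 2\sqrt{k}$ together with the refinement (\ref{eqn:120602}) yields $\chi_{s_{i^*}}/k+i^*\leq 2\sqrt{k}+O(1/\sqrt{k})$. Subtracting the two estimates,
\[
y^*-q_s(x^*)\;\geq\;0.25k-\Bigl(2-\tfrac{1}{\sqrt{2}}\Bigr)\sqrt{k}-O(1/\sqrt{k}),
\]
which is strictly positive as soon as $k\geq 27$, and the lemma follows in that range.

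The main obstacle is the narrow band of even values $k\in\{12,14,16,18,20,22,24,26\}$, where the crude convexity bound $q_s(x^*)\leq q_s(0.5)$ wastes too much because $q_s$ climbs steeply on $[0.4,0.5]$. To close this gap I would replace the single-sequence lower bound by the two-equivalence-class mixture supported on the pair $\langle s_1\rangle\cup\langle s_2\rangle$ appearing in Tables \ref{tb:largekt2}--\ref{tb:largekt3} for the relevant $k$; the optimal weight is determined by the single linear equation in Theorem \ref{thm:3.4}, the resulting $y_{\xi_0}$ can be written in closed form, and a short case check over these eight residual values of $k$ and over all admissible structural parameters $(m,\eta_a)$ of a type-$0$ sequence verifies $y_{\xi_0}>0.75k-\sqrt{k/2}\geq q_s(x^*)$, completing the proof.
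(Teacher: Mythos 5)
Your structural analysis of the type-$0$ condition is correct and matches the paper's: the coset argument recovers exactly the two patterns $(1,\dots,1)$ and $(i_0,a_1,i_0,a_2,\dots,i_0,a_{k/2})$ that the paper lists as cases ($i$) and ($ii$); the parameter computations $\gamma_s=0$, $\psi_s=k/2+\eta_a$, $\chi_s=k^2/4+\sum_{j\neq i_0}f_j^2$ are right; and the observation that the minimizer of $q_s$ sits at $2/(7+2\eta_a/k)\le 2/7<0.4$, so that $q_s(x^*)\le q_s(0.5)$ for $x^*\in[0.4,0.5]$, is a clean substitute for the paper's comparison of derivatives of $q_s-q_{s'}$ on $[0.4,0.5]$. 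Where you diverge is in how the comparison at $x=0.5$ is closed: the paper manufactures, for each type-$0$ sequence with $z+1$ distinct treatments, a block-type competitor in ${\cal S}^*$ using roughly $(z+1)/2$ distinct treatments — so the competitor always exists within the available $t$ — and checks $q_s(0.5)>q_{s'}(0.5)$ case by case; you instead invoke a single global lower bound $y^*\ge\min_x q_{s_{i^*}}(x)$.

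That is where the gap is. The estimate $\chi_{s_{i^*}}/k+i^*\le 2\sqrt{k}+O(1/\sqrt{k})$ requires taking $i^*\approx\sqrt{k/0.96}$, which is only possible when $t\gtrsim\sqrt{k}$; the lemma must hold for every $t>3$, and for fixed $t=4$ or $5$ with $k$ large one only gets $y^*\ge k-k/t-t-O(1/k)$ (e.g.\ $0.75k-4$ for $t=4$). Your final display $y^*-q_s(x^*)\ge 0.25k-(2-1/\sqrt{2})\sqrt{k}-O(1/\sqrt{k})$ is therefore false in that regime, and the threshold ``$k\ge 27$'' does not follow. The conclusion is still salvageable because the type-$0$ upper bound also tightens when $t$ is small (since $m\le t-1$ forces $k/(4m)+m/2\ge k/(4(t-1))+(t-1)/2$), but making this work requires a $t$-dependent comparison of $k/t+t$ against $k/4+k/(4(t-1))+(t-1)/2$ — essentially re-deriving the paper's device of matching the competitor's treatment count to that of the type-$0$ sequence — plus additional finite checks (for $t=4$ the resulting inequality needs roughly $k\ge 31$, not $27$). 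Two smaller points: the bound $\eta_a\le k/2-m$ fails at $m=1$ (harmless, since then $q_s(0.5)=k/2$, far below $y^*$), and the residual verification in your last paragraph cannot lean on Tables \ref{tb:largekt2}--\ref{tb:largekt3}, which are tabulated only for $t=8$.
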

\begin{proof} There are two ways to include a type-$0$ treatment: ($i$) $(1,\ldots,1)$, ($ii$) $(1,a_1,1,a_2,$ $\ldots,1,a_{k/2})$ with an even $k$ and $a_1,\ldots,a_{k/2}\neq 1$. The sequence of pattern ($i$) can be easily excluded from $\cal T$ since otherwise $y^*=0$. Now we focus on ($ii$). Suppose a sequence, say $s$, follows pattern ($ii$) and has $r(>0)$ treatments other than 1. Simple calculation reveals $\gamma_s=0$ and $\psi_s\leq k-r$.

The result of this lemma holds if we can show that, for any sequence $s_0$ of pattern ($ii$), there exists a sequence $s$ in ${\cal S}^*$ and not of pattern ($ii$) such that $q_s(x)>q_{s_0}(x)$ for all $x\in[0.4,0.5]$. Now we will dive into a direct but complex analysis to claim this hypothesis. Before doing this, for convenience, we rewrite $q_s(x)$, $q_s'(x)$ and $q_s(0.5)$ as follows.
\begin{eqnarray}
q_s(x)&=&k-\frac{\chi_s}{k}+4(\gamma_s-k)x+(6k+2\psi_s-8\gamma_s)x^2;\label{eqn:120603}\\
q_s'(x)&=&4(\gamma_s-k)+2x\cdot (6k+2\psi_s-8\gamma_s);\label{eqn:120604}\\
q_s(0.5)&=&0.5k-\frac{\chi_s}{k}+0.5\psi_s.\label{eqn:120605}
\end{eqnarray}

If $s_0$ has $\psi_s< k-r$, we can resort $a_1,\ldots,a_{k/2}$ in the descending order such that we can always get a new sequence $s'$ of pattern ($ii$) such that $\psi_{s'}= k-r$. From (\ref{eqn:120603}), we can see that nothing is changed except $\psi$ is increased by changing $s_0$ to $s'$. Thus, we have $q_{s'}(x)>q_{s_0}(x)$ for arbitrary $x\in[0.4,0.5]$. So, we only need to show there exists a sequence $s$ in ${\cal S}^*$ and not of pattern ($ii$) such that $q_s(x)>q_{s'}(x)$ for all $x\in[0.4,0.5]$.

From (\ref{eqn:120604}), we have $q_{s}(x)-q_{s'}(x)=(4-16x)(\gamma_{s}-\gamma_{s'})+4x(\psi_s-\psi_{s'})$. If the number of different treatments in $s'$ is less or equal to that in $s\in{\cal S}^*$, we have $\psi_s-\psi_{s'}<0$. And also, we have $\gamma_{s'}=0$ and $4-16x<0$ for $x\in[0.4,0.5]$. Thus, we conclude that $q_{s}(x)-q_{s'}(x)<0$ if the number of different treatments in $s'$ is less or equal to that in $s\in{\cal R}_2(k,t)$. If we can find such an $s\in{\cal S}^*$ which simultaneously satisfies $q_{s}(0.5)-q_{s'}(0.5)>0$, our claim is verified by this $s$. Now we move on to address this $s$.

If $s'$ has two different treatments, it means $s_0\in\langle(1,2,1,2,\ldots,1,2)\rangle$.
Then, $q_{s'}(0.5)=k-k/2$. Let $s$ be the sequence in ${\cal R}_2(k,t)$ with $t_1=0$ and $r=2$. We have $q_s(0.5)\geq k-k/4-1/k-2$. Simple calculation reveals $q_{s}(0.5)-q_{s'}(0.5)>0$ when $k\geq 9$.

If $s'$ has three different treatments, $q_{s'}(0.5)\leq k-3k/8-1$. Let $s$ be the sequence in ${\cal S}^*$ with $t_1=0$ and $r=2$. We have $q_s(0.5)\geq k-k/4-1/k-2$. Simple calculation reveals $q_{s}(0.5)-q_{s'}(0.5)>0$ when $k\geq 9$, still.

If $s'$ has $z+1$ ($z\geq 3$) different treatments, $q_{s'}(0.5)\leq k-k/4-k/4z-0.5z$. If $z$ is odd, let $s$ be the sequence in ${\cal S}^*$ with $t_1=0$ and $r=(1+z)/2$. We have $q_s(0.5)\geq k-k/(1+z)-(1+z)/4k-0.5(1+z)$. Simple calculation reveals $q_{s}(0.5)-q_{s'}(0.5)>0$ when $k\geq 9$, still. If $z$ is even, let $s$ be the sequence in ${\cal S}^*$ such that $t_1=1$, $r=z/2$ and the difference between frequencies of treatments is no larger than $1$. We have $q_s(0.5)\geq k-k/(1+z)-(1+z)/4k-0.5(2+z)$. Simple calculation reveals $q_{s}(0.5)-q_{s'}(0.5)>0$ when $k\geq 15$.

Thus, for $k\geq 15$, we have proved our hypothesis. For $10<t\leq 14$, the result of this lemma can be proved by ergodic searching codes whose computational cost is still affordable for such a small $k$.
\end{proof}

\begin{lemma}\label{lem:120701}
Suppose $\Sigma$ is of type-H and $k>10$, $t>3$.  Then $x^*\in[0.4,0.5)$ where $x^*$ is defined in (\ref{eqn:2262}).
\end{lemma}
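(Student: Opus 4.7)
\textbf{Proof plan for Lemma \ref{lem:120701}.}

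The lower bound $x^* \geq 0.4$ already follows from the proof of Theorem \ref{thm:11}: the block-form sequence $s_{i_0}$ maximizing $q_s(0.4)$ satisfies $q_{s_{i_0}}'(0.4) \leq 0$, and convexity of $r(x) = \max_s q_s(x)$ then forces $x^* \geq 0.4$. That same proof also establishes $q_s'(0.5) \geq 0$ for every $s \in \mathcal{S}$, hence $x^* \leq 0.5$. My plan is to upgrade this to the strict inequality $x^* < 0.5$.

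Because $r$ is the pointwise maximum of finitely many smooth convex quadratics, its one-sided derivatives satisfy $r'(0.5^-) = \min\{q_s'(0.5): s\in \mathcal{S}_0\}$, where $\mathcal{S}_0 := \arg\max_{s\in\mathcal{S}} q_s(0.5)$. The optimality condition for $x^*=0.5$ is $r'(0.5^-)\leq 0\leq r'(0.5^+)$; combined with $q_s'(0.5)\geq 0$ for all $s$, the scenario $x^*=0.5$ is equivalent to the existence of some $s\in \mathcal{S}_0$ with $q_s'(0.5)=0$. It therefore suffices to show that every maximizer of $q_s(0.5)$ has $q_s'(0.5) > 0$ strictly.

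Using \eqref{eqn:030801}, the condition $q_s'(0.5)=0$ reads $\psi_s = 2\gamma_s - k$; substituting back gives $q_s(0.5) = k - \chi_s/k - B$, where $B$ is the number of maximal treatment-runs in $s$ (so $\gamma_s = k-B$). Cauchy--Schwarz yields $\chi_s \geq k^2/r$ with $r$ the number of distinct treatments in $s$, and since every such treatment spans at least one block, $r \leq B$. Minimising $k/r + B$ under these constraints produces the ceiling
\[
\max\bigl\{q_s(0.5): q_s'(0.5)=0\bigr\} \;\leq\; k - \tfrac{k}{\min(t,\lceil\sqrt{k}\rceil)} - \min(t,\lceil\sqrt{k}\rceil).
\]
To beat this ceiling I build an explicit ``multi-alternating'' competitor $s^{**}$. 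Let $z = \min(\lfloor\sqrt{k/2}\rfloor, \lfloor t/2\rfloor)$ and take $z$ consecutive alternating two-treatment pairs of balanced length using $2z$ fresh treatments; if $t = 2z+1$ is odd, append a single-treatment block of the remaining positions using one additional treatment. A direct enumeration gives $\chi_{s^{**}} = k^2/(2z) + O(1)$, $\psi_{s^{**}} = k - 2z - O(1)$, $\gamma_{s^{**}} = O(k/t)$, whence
\[
q_{s^{**}}(0.5) \;\geq\; k - \tfrac{k}{2z} - z - C_1, \qquad q_{s^{**}}'(0.5) \;\geq\; 4k - 4z - C_2 \;>\; 0.
\]
In the $t$-limited regime ($t \leq \sqrt{2k}$) the comparison yields a gap of order $\lfloor t/2\rfloor > 0$; in the $k$-limited regime ($t > \sqrt{2k}$) the competitor achieves $\approx k - \sqrt{2k}$ against the ceiling $\approx k - 2\sqrt{k}$, leaving a gap of $\sqrt{k}(2-\sqrt{2}) - O(1)$, positive for $k > 10$. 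Hence $\mathcal{S}_0$ contains no zero-derivative sequence, $r'(0.5^-) > 0$, and $x^* < 0.5$.

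The main obstacle is the combinatorial step: verifying that the rigid identity $\psi_s = 2\gamma_s - k$ really forces the bound $q_s(0.5) \leq k - \chi_s/k - B$ with $r \leq B$ tight enough that the Cauchy--Schwarz estimate carries through. I expect to augment the simple inequality $r \leq B$ with a rearrangement lemma carried over from Theorem \ref{thm:11}'s proof---balancing block lengths within a fixed ``block-parity'' class can only decrease $\chi_s/k$ while preserving $\psi_s$, $\gamma_s$, and the constraint---so that the extremals under $q_s'(0.5)=0$ reduce essentially to the $s_i$ family and match the ceiling exactly. Small cases such as $11 \leq k \leq 20$ and the transition $t \approx \sqrt{k}$, where $O(1)$ corrections could threaten the gap, will be double-checked by direct numerical inspection.
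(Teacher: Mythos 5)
Your reduction is the same as the paper's: $x^*\geq 0.4$ comes from the rearrangement argument in the proof of Theorem \ref{thm:11}, $q_s'(0.5)=2k+2\psi_s-4\gamma_s\geq 0$ for every $s$ gives $x^*\leq 0.5$, and the whole issue is to show that no maximizer of $q_\cdot(0.5)$ has vanishing derivative there. Where you genuinely diverge is in how that last claim is proved. The paper argues by local improvement: it sorts an arbitrary $s$ into a run-ordered $s'$ with $q_{s'}'(0.5)=2n^{(1)}\geq 0$, and then either $q_s(0.5)>q_{s'}(0.5)$ --- which forces $\psi_s>\psi_{s'}$ and $\gamma_s\leq\gamma_{s'}$, hence $q_s'(0.5)>0$ --- or it interleaves two runs of $s'$ into an alternating pair to produce $s''$ with $q_{s''}(0.5)>q_{s'}(0.5)\geq q_s(0.5)$, so $s$ was not a maximizer. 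You instead take an extremal/counting route: on the zero-derivative class the identity $q_s(0.5)=\gamma_s-\chi_s/k=k-B-\chi_s/k$ (with $B$ the number of circular runs) is exact, Cauchy--Schwarz together with $r\leq\min(t,B)$ caps it at roughly $k-m-k/m$ with $m=\min(t,\sqrt{k})$, and your multi-alternating competitor exceeds that ceiling by $\Theta(\sqrt{k})$ (or by $\lfloor t/2\rfloor$ in the $t$-limited regime). Both arguments are valid and both hinge on the same structural fact that interleaving runs buys $\psi$ at no cost in $\chi$; yours buys an explicit quantitative margin by which every zero-derivative sequence loses, whereas the paper's perturbation only yields strict inequality. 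One reassurance: the ``main obstacle'' you flag is not actually an obstacle --- $q_s(0.5)=k-B-\chi_s/k$ under $\psi_s=2\gamma_s-k$ is an identity rather than an estimate, and $r\leq B$ is immediate, so all that remains is the routine integer rounding and the small-$k$ verification ($11\leq k\leq 20$), which the paper likewise delegates to computation in the neighboring results.
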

\begin{proof}
First, consider the $q_s(0.4)$ in (\ref{eqn:030601}). Obviously, $s=(1,1,1,\ldots,1)$ has the largest $0.32(\gamma_s+\psi_s)$, but the $\chi_s$ is too large. For arbitrary $s$, we can rearrange its treatments in ascending order without decreasing $0.32(\gamma_s+\psi_s)$. Since $\chi_s$ is unchanged in this rearrangement, we have $q_{s'}(0.4)\geq q_s(0.4)$ where $s'$ is the rearranged version of $s$ according to the analysis above. This new $s'$ is of pattern $(1,1,1,2,2,2,2,3,3,3,\ldots)$. 

For the $s'$ generated above, consider treatments with frequency $1$, i.e., treatments appear only once. If there are more than one such treatments, we can choose two such treatments and change one of them to the other. For example, if treatments $1$ and $2$ both appear only once, we change treatment $2$ to treatment $1$ and put them in neighboring periods. In this process, $\gamma_{s'}$ is increased by $1$ and $\chi_{s'}/k$ is increased by $(2^2-1-1)/k=2/k<0.32$ when $k>6$. Thus, we can always generate a new sequence $s''$ which has at most $1$ treatment which frequency $1$ such that $q_{s''}(0.4)\geq q_{s'}(0.4)$. Now, for other treatments other than the frequency-$1$ treatment, we can balance their frequencies such that $|f_{i_1}-f_{i_2}|\leq 1$ without decreasing $0.32(\gamma_{s''}+\psi_{s''})$ but strictly decreasing $\chi_{s''}$. For example, if treatment $2$ appears in $3$ periods and treatment $3$ appears in $5$ periods, we can change the treatment $3$ in one period to treatment $2$ and rearrange the sequence again. Then both $\gamma_{s''}$ and $\psi_{s''}$ are unchanged in this process while $\chi_{s''}$ is decreased by $(5^2+3^2-2\cdot 4^2)/k=2/k$. Thus, we can always find one sequence which has pattern ($i$) $(1,2\cdot 1_{f_2}',3\cdot 1_{f_3}',\ldots,i\cdot 1_{f_i}')$ with $|f_{i_1}-f_{i_2}|\leq 1$ for all $2\leq i_1< i_2\leq i$, or pattern ($ii$) $(1\cdot 1_{f_1}',2\cdot 1_{f_2}',3\cdot 1_{f_3}',\ldots,i\cdot 1_{f_i}')$ with $|f_{i_1}-f_{i_2}|\leq 1$ for all $1\leq i_1< i_2\leq i$. For both two types, we can show that $q_{s''}'(0.4)\leq 0$. Thus, $x^*\geq 0.4$.

Consider $q_{s}(0.5)=0.5k-\chi_s/k+0.5\psi$. For any such sequence, rearrange treatments in ascending order as in the former analysis for $q_s(0.4)$. Let $s'$ denote the new sequence which has pattern $(1,2,2,2,3,3,3,3,\ldots)$. Let $n^{(1)}$ denote the number of treatments with frequency $1$ and $n^{(2)}$ denote the number of treatments with frequencies larger than $1$. Then we have $q_{s'}'(0.5)=2n^{(1)}\geq 0$. If $n^{(2)}=0$, then $q_{s'}'(0.5)=2k>0$ and $s'=(1,2,3,4,5,6,\ldots,k)$. We change treatment $3$ to $1$ to generate $s'=(1,2,1,4,5,6,\ldots,k)$. And it is obvious that $q_{s''}(0.5)-q_{s'}(0.5)=0.5-2/k>0$ and $q_{s''}(0.5)=2k+2>0$. If $n^{(2)}>0$, we can pick one type-$v$ ($v>1$) treatment, say treatment $2$ in $s'=(1,2,2,2,3,3,3,3,\ldots)$ with an arbitrary treatment other than $2$ (we can always find this treatment since we have proved the extremely poor performance of sequence with only one treatment). Suppose it is treatment $1$ with $f_1$ as its frequency. Then, we can balance the frequencies of treatments $1$ and $2$ and then write the combination of these two treatments as $(1,2,1,2,1,\ldots,2,1)$ or $(1,2,1,2,1,\ldots,1,2)$ or $(2,1,2,1,\ldots,2,1,2)$. By doing so, we see the $\chi_s$ is not increased while there is one more type-$1$ treatment with all other treatments of type-$2$. Thus, $\psi_s$ is increased by $1$ at least in this process. Let $s''$ denote this new sequence. We can see $q_{s''}(0.5)>q_{s'}(0.5)$. Note that $\gamma_{s'}\geq\gamma_{s''}$ and $\psi_{s'}<\psi_{s''}$. We know $q_{s''}'(0.5)-q_{s'}'(0.5)=2(\psi_{s''}-\psi_{s'})-4(\gamma_{s''}-\gamma_{s'})>0$. Note that we have already shown that $q_{s'}'(0.5)\geq 0$. Thus, $q_{s''}'(0.5)>0$.

The analysis above shows us two outlets for an arbitrary sequence $s$: ($i$) If $q_s(0.5)>q_{s'}(0.5)$, it means $\psi_{s}>\psi_{s'}$ since they have the same $\chi$ value in $q_{s}(0.5)=0.5k-\chi_s/k+0.5\psi$. Note that $s'$ gives maximum $\gamma$ value among all possible rearrangements, we have $\gamma_s\leq \gamma_{s'}$ and so $q_{s}'(0.5)-q_{s'}(0.5)=2(\psi_{s}-\psi_{s'})-4(\gamma_{s}-\gamma_{s'})>0$. ($ii$) If $q_s(0.5)\leq q_{s'}(0.5)$, we have found an $s''$ in the last paragraph such that $q_{s''}(0.5)>q_{s'}(0.5)\geq q_s(0.5)$ and $q_{s''}'(0.5)>0$. It means either it has positive derivative at $0.5$ or there exist a sequence with a larger value at $0.5$ and positive derivative. Thus, $x^*<0.5$.
\end{proof}

Suppose a period is assigned with a non-type-$0$ and non-type-$1$ treatment, and its left and right neighbors are assigned with the same type-$1$ treatment, we call this period an {\it isolated period}. The number of all isolated periods in a sequence $s$ is denoted by $ip(s)$.

\begin{lemma}\label{lem:123101}
For any subset of $\{a_1,\ldots, a_{t'}\}\subset \{1,\ldots,t\}$ be a subset with $t'(\leq t)$ different treatments. Let $f_{a_1},\ldots,f_{a_{t'}}$ denote $t'$ positive numbers. Consider the set $\{s\in{\cal S}:a_1,\ldots, a_{t'}~{\rm are~ type-1 ~ treatments~with~frequencies~}f_{a_1},\ldots,f_{a_{t'}}\}$ denoted by ${\cal B}$. Define
\begin{eqnarray*}
\Delta=\min\bigg\{\sum_{j=1}^{t''}f_{a_{i_j}}-\sum_{j=t''+1}^{t'}f_{a_{i_j}}:~\sum_{j=1}^{t''}f_{a_{i_j}}-\sum_{j=t''+1}^{t'}f_{a_{i_j}}\geq 0, ~0\leq t''\leq t',\\
(i_1,\ldots,i_{t'}) {\rm~is ~a ~permutation ~of} ~(1,\ldots,{t'})\bigg\},
\end{eqnarray*}
with $\sum_{i+1}^i$ defined as the summation of no item and so always equals $0$. Then we have $\min_{s\in{\cal B}} ip(s)=\max\{0,\Delta-1\}$.
\end{lemma}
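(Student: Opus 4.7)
The plan is to prove the equality by matching upper and lower bounds, where the upper bound is a single direct construction and the lower bound relies on decomposing any $s \in {\cal B}$ into interleaved ``blocks'' and then applying a signed-partition argument.

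For the upper bound I would take a two-partition $S_1 \cup S_2 = \{a_1, \ldots, a_{t'}\}$ attaining the minimum nonnegative difference $\Delta$, with $F_j = \sum_{a_i \in S_j} f_{a_i}$ and $F_1 - F_2 = \Delta$, and place the treatments of $S_1$ in consecutive runs on the odd positions $1, 3, \ldots, 2F_1 - 1$ and the treatments of $S_2$ in consecutive runs on the even positions $2, 4, \ldots, 2F_2$. Then each $a_j$ sits at $f_{a_j}$ consecutive same-parity positions, has exactly $f_{a_j} - 1$ distance-two gaps and one large cyclic gap, and is therefore type-1 with the prescribed frequency. The $\Delta - 1$ even positions in $[2F_2 + 2, 2F_1 - 2]$ are flanked by the final $S_1$ treatment and, when filled with some non-type-0 non-type-1 occupant, yield exactly $\Delta - 1$ isolated periods (zero when $\Delta \leq 1$); the remaining positions outside the interleaved block can be filled with further non-type-0 non-type-1 treatments so that the sequence lies in ${\cal B}$.

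The lower bound begins with a structural observation: each type-1 treatment $a_j$ has its $f_{a_j}$ occurrences forming an arithmetic progression of common difference two, so its occurrences together with the $f_{a_j} - 1$ flanked positions form a contiguous ``zone'' of length $2f_{a_j} - 1$. Two zones are either disjoint or interleave bipartitely on opposite parities within their overlap. I would group the $a_j$'s via the transitive closure of the zone-overlap relation into ``blocks''; each block $g$ occupies a cyclic interval with an intrinsic two-parity structure, and the $a_j$'s in block $g$ partition naturally into two sides with total frequencies $F_1^g$ and $F_2^g$. Within block $g$, the $\max\{0, |F_1^g - F_2^g| - 1\}$ positions in the longer side's zone that cannot be matched by occupants from the other side are flanked by the same type-1 treatment and, in the setting excluding type-0 occupants, can only be filled by non-type-0 non-type-1 treatments, producing at least that many isolated periods per block. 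Hence $ip(s) \geq \sum_g \max\{0, |F_1^g - F_2^g| - 1\}$ for any $s \in {\cal B}$.

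To convert this sum into $\max\{0, \Delta - 1\}$, I would argue as follows. For any choice of signs $\epsilon_g \in \{\pm 1\}$, assigning side $\epsilon_g$ of block $g$ to super-side $S_1$ and the other to $S_2$ yields a legitimate two-partition of $\{a_1, \ldots, a_{t'}\}$, so $|\sum_g \epsilon_g (F_1^g - F_2^g)| \geq \Delta$ by definition of $\Delta$. A standard greedy argument shows that for any nonnegative reals $\delta_g$ one can choose signs so that $|\sum_g \epsilon_g \delta_g| \leq \max_g \delta_g$; therefore $\max_g |F_1^g - F_2^g| \geq \Delta$, and this single block alone contributes at least $\max\{0, \Delta - 1\}$ isolated periods to the total. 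The main obstacle is rigorously establishing the block decomposition, in particular that overlapping zones can only interleave in a pairwise bipartite manner at each locus, that the parity labelling on each block is globally well-defined, and that the unmatched positions must indeed become isolated periods; these facts rely on the rigidity of the arithmetic-progression pattern of type-1 treatments on the cycle, but their verification demands careful case analysis at block boundaries and across the cyclic wrap-around.
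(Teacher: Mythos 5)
Your overall architecture is sound, and your upper-bound construction is essentially the one the paper uses (it is the arrangement described at the beginning of the proof of Lemma \ref{lem:010201}); your lower bound, however, takes a genuinely different route. The paper does not form overlap-blocks at all: it deletes every period that is neither a type-1 occurrence nor an isolated period, observes that in the resulting circular word $s'$ each type-1 treatment still occupies a single parity class (nothing is ever deleted between two consecutive occurrences of a type-1 treatment, since the flanked period in between is either another type-1 occurrence or isolated), and then reads off the bound from the fact that the two parity classes of $s'$ differ in size by at most one while each class consists of type-1 occurrences plus isolated periods, giving $ip(s)\geq \bigl|\sum_{a\in{\cal A}_{\rm odd}}f_a-\sum_{a\in{\cal A}_{\rm even}}f_a\bigr|-1\geq \Delta-1$ in one global count. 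Your version localizes this: per block you are in effect using $ip_g=L_g-F_1^g-F_2^g$ together with $L_g\geq 2\max(F_1^g,F_2^g)-1$, and then the greedy sign-choice to extract a single block with imbalance at least $\Delta$. Both are valid; the paper's global parity count buys it freedom from formalizing the block decomposition and the wrap-around cases you flag, while your decomposition makes explicit where the isolated periods must sit.

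Two points deserve attention. First, in the upper bound you assert that all $\Delta-1$ gap positions are flanked by the \emph{final} $S_1$ treatment; this is true but not automatic from the construction as written. The clean justification is that if some partial sum $P_j=\sum_{i\leq j}f_{b_i}$ of the $S_1$ frequencies satisfied $F_2+1\leq P_j\leq F_1-1$, then moving $b_{j+1},\ldots,b_r$ to $S_2$ would yield a partition with imbalance $|2P_j-F_1-F_2|<\Delta$, contradicting the minimality of $\Delta$; hence $P_{r-1}\leq F_2$, the last $S_1$ treatment has frequency at least $\Delta$, and its run covers every gap position. Without this observation a treatment boundary could in principle fall inside the gap and the count of isolated periods would not be what you claim. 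Second, your caveat about type-0 occupants is well taken: as literally stated the lemma can fail when a flanked position is occupied by a type-0 treatment (possible only for a treatment filling an entire parity class of the circle), and the paper's own proof makes the same silent exclusion; the lemma is only invoked after Lemma \ref{lem:b1} has eliminated such sequences, so this is a defect of the statement rather than of either argument.
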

\begin{proof}
We prove this lemma with an illustrative example. Let $s_0=(1,1,2,3,2,3,2,3,3,$ $4,3,4,3,4,5,5,6,7,6,7)$. Treatment $3$ is type-$2$, so the $4$th ,$6$th, $11$th, and $13$th periods are all isolated periods. It should be mentioned that the $18$th period is non-isolated since it is assigned with a type-$1$ treatment $7$.

Remove all periods in $s$ which is neither isolated nor assigned with type-$1$ treatments, and the resulting sequence is denoted by $s'$. 

In our illustrative example $s_0$, we show this process as follows.
The first two periods and last two periods in $s_0$ are assigned with type-$2$ treatments $1$ and $5$ and also non-isolated and are so removed. Periods in $s_0$ assigned with $2$, $4$, $6$, or $7$ (all of type-$1$), are maintained. The $4$th ,$6$th, $11$th, and $13$th periods are all isolated periods, and are so maintained. Here we use $\#$ to represent removed periods. The $8$th and $9$th periods are assigned with type-$2$ treatment $2$ and are also non-isolated, and so removed. And then, the original sequence $s_0$ becomes $(\#,\#,2,3,2,3,2,\#,\#,4,3,4,3,4,\#,\#,6,7,6,7)$.
Remove all empty periods labeled in $\#$ and finally we have $s_0'=(2,3,2,3,2,4,3,4,3,4,6,7,6,7)$.

Now consider the new sequence $s'$ which contains all periods assigned with type-$1$ treatments. Suppose the length of $s'$ ($s_0'$ in our illustrative example) is $n_0$, then the number of isolated periods in $s$ ($s_0$ in our illustrative example) equals to $n_0-\sum_{i=1}^{t'}{f_{a_i}}$ Let ${\cal A}_{\rm odd}$ and ${\cal A}_{\rm even}$ denote all type-$1$ treatments in odd and even periods of $s'$, respectively. Obviously, one type-$1$ treatment can not appear in both ${\cal A}_{\rm odd}$ and ${\cal A}_{\rm even}$.
Thus, we have $n_1+\sum_{a\in{\cal A}_{\rm odd}}f_a$ odd periods and $n_2+\sum_{a\in{\cal A}_{\rm even}}f_a$ even periods, where $n_1$ and $n_2$ are the number of isolated periods in odd and even periods, respectively. Take $s_0'$ as example, we have ${\cal A}_{\rm odd}=\{2,6\}$, ${\cal A}_{\rm even}=\{4,7\}$, $n_1=2$ and $n_2=2$. Note that, when $n_0$ is even, $n_1+\sum_{a\in{\cal A}_{\rm odd}}f_a=n_2+\sum_{a\in{\cal A}_{\rm even}}f_a$, and when $n_0$ is odd, $n_1+\sum_{a\in{\cal A}_{\rm odd}}f_a=1+n_2+\sum_{a\in{\cal A}_{\rm even}}f_a$. Simple calculation shows that the minimum value of $n_1+n_2$ equals to $\max\{\Delta,\Delta-1\}$, which completes the proof. The arrangement of these type-$1$ treatments can be found at the beginning of the proof of Lemma \ref{lem:010201} and is so omitted here.
\end{proof}

Given a sequence $s$, let ${\cal A}(s)$ and ${\cal A}_i(s)$ denote the set of all different treatments and all type-$i$ treatments assigned to $s$, correspondingly. Without special declaration, elements in ${\cal A}(s)$ and ${\cal A}_i(s)$ are all arranged in ascending order. Let ${\cal F}(s)$ and ${\cal F}_i(s)$ denote the corresponding frequencies of treatments in ${\cal A}(s)$ and ${\cal A}_i(s)$, if the set of treatments is not empty. For example, given $s=(1,2,1,2,3,3,3,4,4,4)$, then ${\cal A}(s)=\{1,2,3,4\}$, ${\cal F}(s)=(2,2,3,3)$; ${\cal A}_1(s)=\{1,2\}$, ${\cal F}_1(s)=(2,2)$; ${\cal A}_2(s)=\{3,4\}$, ${\cal F}_2(s)=(3,3)$; ${\cal A}_i(s)=\emptyset$, $i\geq 3$;
\begin{lemma}\label{lem:010201}
For an arbitrary sequence $s$, if $\min_{a\in{\cal A}(s)\setminus{\cal A}_1(s)}f_a\leq ip(s)+1$, one can find $s'$ such that $\min_{a\in{\cal A}(s')\setminus{\cal A}_1(s')}f_a> ip(s')+1$ and $q_s(x)<q_{s'}(x)$ for all $x\in[0.4,0.5)$.
\end{lemma}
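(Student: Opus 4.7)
The plan is to build $s'$ by iterating a single local transformation that trades one occurrence of a low-frequency non-type-$1$ treatment for an extension of a pre-existing alternating type-$1$ run, using the isolated periods guaranteed by Lemma \ref{lem:123101} as the raw material. Fix a non-type-$1$ treatment $a$ attaining the minimum with $f_a\leq ip(s)+1$; since $ip(s)\geq f_a-1$, one can injectively pair the $f_a$ occurrences of $a$ with distinct isolated periods $bcb$ in $s$ (where $b$ is type-$1$ and $c$ is non-type-$1$). For each pair, I would perform the swap that rewrites the central $c$ of the isolated period as $a$ and rewrites the matched original $a$-slot as the same type-$1$ letter $b$, thereby growing the alternating $b$-run by one link. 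Call the resulting sequence $s'$ after carrying out all $f_a$ swaps.

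Next, I would quantify the effect on $(\chi,\gamma,\psi)$. The frequency vector of $s'$ differs from that of $s$ only by shifting $f_a$ units out of label $a$ (now either absent or absorbed into the type-$1$ class) and into various type-$1$ labels, so the rebalancing gives $\chi_{s'}\leq\chi_s+O(1)$. Each swap preserves the $\psi$ contribution of the host $b$-run (the outer $b$'s still second-neighbor each other) while inserting $a$ inside an alternating block creates fresh second-neighbor matches for $a$, and the relabeled original $a$-slot typically converts into a further link of the extended $b$-run; a direct count then yields $\psi_{s'}-\psi_s$ strictly positive and $\gamma_{s'}-\gamma_s$ of smaller order. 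Substituting into
\begin{equation*}
q_{s'}(x)-q_s(x)=-\frac{\chi_{s'}-\chi_s}{k}+4(\gamma_{s'}-\gamma_s)\,x+\bigl(2(\psi_{s'}-\psi_s)-8(\gamma_{s'}-\gamma_s)\bigr)x^2
\end{equation*}
produces a strictly positive expression on $x\in[0.4,0.5)$ whenever the gain in $\psi$ dominates four times the gain in $\gamma$.

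The main obstacle is controlling behavior at the upper endpoint $x\to 0.5^-$, where the coefficient of $x^2$ degenerates to $\tfrac{1}{2}(\psi_{s'}-\psi_s)-2(\gamma_{s'}-\gamma_s)$ and the linear term is $2(\gamma_{s'}-\gamma_s)$; a careless swap could incur a gain in $\gamma$ comparable to its gain in $\psi$ and destroy the dominance. I expect to resolve this by prescribing exactly where inside the receiving $b$-run the displaced $a$ lands (strictly interior, avoiding adjacency with equal labels), together with a short case analysis on the parity of the host run length and on whether $f_a\geq 2$ or $f_a=1$. Finally, if $s'$ still contains a non-type-$1$ treatment violating the hypothesis $\min_{a\in{\cal A}(s')\setminus{\cal A}_1(s')}f_a>ip(s')+1$, the transformation is repeated; termination is immediate because each round strictly decreases the non-negative integer pair $(|{\cal A}(s)\setminus{\cal A}_1(s)|,\,ip(s))$ in lexicographic order, and the strict inequality $q_s(x)<q_{s'}(x)$ is preserved under composition on the half-open interval $[0.4,0.5)$.
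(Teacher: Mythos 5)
Your local-swap strategy diverges from the paper's proof in a way that introduces genuine gaps. The paper's argument is a \emph{global permutation}: it keeps the frequency multiset fixed (so $\chi_{s'}=\chi_s$ exactly), re-sorts the type-$1$ treatments onto alternating positions and the remaining treatments into sorted blocks, and then only needs $\gamma_{s'}\geq\gamma_s$ and $\psi_{s'}>\psi_s$ to conclude, since by (\ref{eqn:030801}) $q_{s'}(x)-q_s(x)=4(\gamma_{s'}-\gamma_s)x(1-2x)+2(\psi_{s'}-\psi_s)x^2>0$ on $[0.4,0.5)$ with no $\chi$ term to fight. Your swap, by contrast, \emph{relabels} letters (the matched $a$-slot becomes $b$, the central $c$ becomes $a$), so the frequency vector changes and $\chi$ moves. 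The claim $\chi_{s'}\leq\chi_s+O(1)$ is unsupported and false in general: pushing several units of frequency into an already-large $f_b$ increases $\chi$ by an amount of order $f_a\cdot f_b$, not $O(1)$, and you never compare this loss against the (also unquantified) gains in $\psi$.

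A second, independent failure: writing $b$ into the original $a$-slot does not ``grow the alternating $b$-run by one link'' unless that slot happens to sit at distance two from the existing run, which nothing in your construction arranges. Generically the new $b$ occurrence has no second-neighbor match, so $\psi_{s,b}$ drops from $f_b-1$ to a smaller value and $b$ ceases to be type-$1$ --- a loss in $\psi$ that can swamp every gain you hope for. Relatedly, the relocated $a$ sitting at the center of the former isolated period (pattern $b,a,b$) is still non-type-$1$ with unchanged frequency $f_a$, and that period is still isolated, so neither coordinate of your termination pair $(|{\cal A}(s)\setminus{\cal A}_1(s)|,\,ip(s))$ is shown to decrease; your earlier remark that $a$ is ``now either absent or absorbed into the type-$1$ class'' contradicts your own swap description. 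Finally, the behavior near $x=0.5$, which you yourself identify as the main obstacle, is deferred to an unexecuted case analysis. To repair the proof you would essentially have to abandon the relabeling and work with a frequency-preserving rearrangement, which is exactly the paper's route.
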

\begin{proof}
Here, we adopt similar notations in the proof of Lemma \ref{lem:123101}. Define
\begin{eqnarray*}
(a'_1,\ldots,a'_{|{\cal A}_1(s)|},t''_0)=\arg_{(a_1,\ldots,a_{|{\cal A}_1(s)|},t''_0)}\min\bigg\{\sum_{j=1}^{t''}f_{a_{j}}-\sum_{j=t''+1}^{|{\cal A}_1(s)|}f_{a_{j}}:\\~\sum_{j=1}^{t''}f_{a_{j}}-\sum_{j=t''+1}^{|{\cal A}_1(s)|}f_{a_{j}}\geq 0, ~0\leq t''\leq |{\cal A}_1(s)|,\\
~(a_1,\ldots,a_{|{\cal A}_1(s)|}) {\rm~is ~a ~permutation ~of~element~in} ~{\cal A}_1(s)\bigg\}.
\end{eqnarray*}
In periods $1,3,5\ldots,2\sum_{j=1}^{t''_0}f_{a'_{j}}-1$, we arrange treatments $a_1',\ldots,a'_{t_0''}$ sequentially. In periods $2,4,6,\ldots,2\sum_{j=t''_0+1}^{|{\cal A}_1(s)|}f_{a'_{j}}$, we arrange the rest treatments. Then, the number of isolated periods is minimized. Now we sort the rest treatments (non-type-$1$ treatments) in by their frequencies such that the treatment on the left always has smaller or the same frequency as the treatment on the right, and fill them into all empty periods. The resulting sequence is denoted by $s'$. We give an example to illustrate this process. Let $s_0=(1,2,1,3,1,2,1,3,3,3,4,5,4,4,6)$. Then $1,5,6$ are type-$1$ treatments $2,3,4$ are non-type-$1$ treatments. The result sequence is $s'_0=(1,5,1,6,1,2,1,2,4,4,4,3,3,3,3)$.

Now we take a look at the resulting sequence $s'$. Let 
\begin{eqnarray*}
lp(s')=\max\bigg\{2\sum_{j=t''_0+1}^{|{\cal A}_1(s)|}f'_{a_{j}}~,~2\sum_{j=1}^{t''_0}f_{a'_{j}}-1\bigg\}.
\end{eqnarray*}
Consider the subsequence of $s'$ from period $lp(s')+1$ to the last period. Obviously, $\gamma_{s'}\geq \gamma_s$ and $\psi_{s'}\geq \psi_s$ since elements in $s'$ are either type-$1$ or type-$2$. If $\min_{a\in{\cal A}(s)\setminus{\cal A}_1(s)}f_a\leq ip(s)+1$, $s'$ has at least one more type-$1$ treatment than $s$ and so $\psi_{s'}>\psi_s$. Note that $\chi_s=\chi_{s'}$, we have $q_{s'}(x)>q_s(x)$ for arbitrary $x\in[0.4,0.5)$ according to (\ref{eqn:030801}). It should be emphasized that the sequence $s'$ generated here is called a {\it sorted competing sequence}. 
\end{proof}

For an arbitrary sorted competing sequence $s'$ generated in the proof of Lemma \ref{lem:010201}, we can further construct a dominating sequence $s''$ such that $q_{s''}(x)> q_{s'}(x)$ for arbitrary $x\in[0.4,0.5)$ as in the following Lemma \ref{lem:010202} if $s'$ has at least one isolated period.

\begin{lemma}\label{lem:010202}
For an arbitrary sorted competing sequence $s'$, if there is at least one isolated period in $s'$, one can find $s''$ such that $q_{s''}(x)> q_{s'}(x)$ for arbitrary $x\in[0.4,0.5)$.
\end{lemma}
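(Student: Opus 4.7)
The plan is to construct $s''$ by a local modification of $s'$ that strictly decreases $\chi_{s}$ while keeping $\gamma_{s}$ and $\psi_{s}$ essentially unchanged, so that
\[
q_{s''}(x)-q_{s'}(x)=-\tfrac{1}{k}\Delta\chi+4x(1-2x)\Delta\gamma+2x^2\Delta\psi>0 \quad\text{for all } x\in[0.4,0.5).
\]
Because the coefficient of $-\Delta\chi/k$ is the constant $1$ independent of $x$, a constant-size drop in $\chi$ uniformly dominates an $O(x^2)\le \tfrac{1}{2}$ possible loss from a modest change in $\psi$ (and $4x(1-2x)\ge 0$ on this interval, with equality only at the excluded endpoint $x=1/2$).

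First I would pick the isolated period at position $j$ in $s'$, with $s'[j-1]=s'[j+1]=a$ a type-1 treatment and $s'[j]=b$ a non-type-1 treatment; by the output condition of Lemma \ref{lem:010201}, $f_{s',b}\ge ip(s')+2\ge 3$. By the sorted-competing layout, at least one non-type-1 treatment $c$ (possibly $c=b$) has a contiguous block of length $m\ge 1$ inside the tail zone $\{2P_1,\ldots,k\}$. The construction of $s''$ reassigns a suitable subset of this tail block: when $t$ exceeds $|\mathcal{A}(s')|$, we introduce a fresh label $d$ and relabel $m$ copies of $c$ to $d$, which reduces $\chi$ by $2m(f_{s',c}-m)$; when $t=|\mathcal{A}(s')|$, we instead rebalance frequencies of two existing non-type-1 treatments by exchanging tail copies, which also strictly reduces $\chi$ whenever the existing frequencies are unequal. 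In either case the interior of the modified stretch sees $(c,c)$ adjacencies and $(c,?,c)$ sandwiches replaced by identical $(d,d)$ (resp.\ balanced) patterns, contributing nothing to $\Delta\gamma$ and $\Delta\psi$; only the two boundaries of the modified run can flip at most one adjacency and two distance-$2$ pairs, and by the sorted-competing layout the treatments immediately outside the tail $c$-run differ from $c$, so $\Delta\gamma\ge 0$ and $|\Delta\psi|\le 2$.

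Combining, $q_{s''}(x)-q_{s'}(x)\ge 2m(f_{s',c}-m)/k-4x^2$, which is strictly positive on $[0.4,0.5)$ as soon as $m(f_{s',c}-m)\ge k/2$. A deliberate choice of $m\approx f_{s',c}/2$, together with a structural lower bound on $f_{s',c}$ forced by the presence of an isolated period under the Lemma \ref{lem:010201} constraint (which yields, through a counting argument on middle-zone capacity vs.\ total non-type-1 mass, some non-type-1 treatment of frequency at least $\Omega(\sqrt{k})$), makes this hold. The main obstacle is twofold: first, proving the quantitative lower bound $m(f_{s',c}-m)\ge k/2$ from the interplay between the total non-type-1 mass, the number of isolated periods, and the constraint $\min_{a\in\mathcal{A}(s')\setminus\mathcal{A}_1(s')}f_a>ip(s')+1$; and second, handling the few degenerate boundary situations in which the tail $c$-run is abutted by another copy of $c$ (an event the sorted construction normally avoids, but which can arise at the interface with the middle zone), where the naive relabelling incurs an extra $\psi$-loss and one must switch to an alternative local move -- for instance, relabelling only an interior half of the tail run, or compensating with a distant swap that restores the $\psi$ contribution without disturbing $\chi$.
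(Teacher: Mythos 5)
Your approach has a genuine gap, and it is one you partly flag yourself: the entire argument hinges on extracting a decrease $\Delta\chi\gtrsim k$ from a single non-type-$1$ treatment $c$, which requires some non-type-$1$ treatment to have frequency of order $\sqrt{k}$. That lower bound is not implied by the hypotheses. The presence of an isolated period only forces (via Lemma \ref{lem:010201}) that every non-type-$1$ treatment has frequency $>ip(s')+1$, and $ip(s')$ can be as small as $1$; a sorted competing sequence can have one isolated period together with a tail consisting of many non-type-$1$ treatments each of frequency $3$ or $4$ (e.g.\ two type-$1$ treatments with frequencies $5$ and $3$ and a tail of $\Theta(k)$ treatments of frequency $4$). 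In that situation your relabelling move yields $\Delta\chi=O(1)$, your fallback ``rebalance two existing non-type-$1$ treatments'' yields $\Delta\chi=0$ when their frequencies are already equal, and the inequality $2m(f_{s',c}-m)/k>4x^2$ cannot hold. There is also a smaller miscount: splitting a contiguous run $c^{f_c}$ into $c^{f_c-m}d^{m}$ destroys the adjacency at the split point, so $\Delta\gamma=-1$ rather than $\Delta\gamma\ge 0$, which costs an additional $4x(1-2x)\approx 0.32$ at $x=0.4$ that your final bound does not absorb.

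The paper's proof exploits the isolated period itself rather than the tail. Because $\min_{a\notin{\cal A}_1}f_a>ip(s')+1$, all isolated periods carry the same treatment $a_{\rm iso}$, and these periods sit in the even positions immediately following the even-indexed type-$1$ treatments. One can therefore trade occurrences between $a_{\rm iso}$ and the smallest even-indexed type-$1$ treatment $a_0$ one at a time: each trade preserves the alternating structure exactly (so $\gamma$ and $\psi$ are unchanged) while strictly decreasing $\chi$ as long as $f_{\rm iso}\ge f_{a_0}+2$, so no quantitative lower bound on any frequency is needed. The residual cases ($0<f_{\rm iso}\le f_{a_0}+1$) are then settled by a parity argument on $|{\cal A}_1|$, converting $a_{\rm iso}$ to type-$1$ or $a_0$ to type-$2$ and comparing the resulting trade-offs in $\gamma$ and $\psi$ on $[0.4,0.5)$. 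If you want to salvage your strategy, you would have to redirect the $\chi$-reduction to the isolated-period treatment and the type-$1$ block, which is essentially the paper's construction.
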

\begin{proof}
In the our construction of $s'$, some non-type-$1$ treatments become type-$1$. For example, in $s_0'=(1,5,1,6,1,2,1,2,4,4,4,3,3,3,3)$ constructed in the proof of Lemma \ref{lem:010201}, the original non-type-$1$ treatment $2$ becomes type-$1$ and there is no isolated period. When there exist isolated periods, it is obvious that these periods in $s'$ must be assigned with the same treatment. Note that $\sum_{j=1}^{t''_0}f_{a'_{j}}\geq \sum_{j=t''_0+1}^{|{\cal A}_1(s)|}f_{a'_{j}}$, the first isolated period must appear be an even period. The corresponding treatment is denoted by $a_{\rm iso}$.

Let ${\cal A}_{\rm odd}$ and ${\cal A}_{\rm even}$ denote the treatments appearing in odd and even periods, respectively.

When ${\cal A}_1(s')\cap{\cal A}_{\rm even}\neq\emptyset$, let $a_0=\arg\min_{a\in{\cal A}_1(s')\cap{\cal A}_{\rm even}}f_a$. Obviously, we can rearrange the order of type-$1$ elements in ${\cal A}_1(s')\cap{\cal A}_{\rm even}\neq\emptyset$ such that $a_0$ is the right end of all even indexed type-$1$ treatments. Take $s_0'=(1,5,1,6,1,2,1,2,4,4,4,3,3,3,3)$ as example. It can be rearranged as $(1,2,1,2,1,5,1,6,4,4,4,3,3,3,3)$. If $f_{\rm iso}\geq f_{a_0}+2$, then we can enlarge $f_{a_{\rm iso}}$ by one and decrease $f_{a_0}$ by changing the first period assigned with treatment $a_{\rm iso}$ to $a_0$. In this process, $\chi_{s'}$ is strictly decreased while $\psi_{s'}$ and $\gamma_{s'}$ are unchanged. We can keep doing this until $f_{\rm iso}\leq f_{a_0}+1$. If $f_{\rm iso}=0$, the result is verified. If $0<f_{\rm iso}\leq f_{a_0}+1$ and $|{\cal A}_1|$ is even, we can balance the frequencies of type-1 treatments and sort all treatments of other types.  This process will not increase $\chi_s$ but will strictly increase $\gamma_s$ and thus $g_{s^{''}}(x)>g_{s^{'}}(x)$ for arbitrary $x\in[0.4,0.5)$. If $0<f_{\rm iso}\leq f_{a_0}+1$ and $|{\cal A}_1|$ is odd, we can change $a_{\rm iso}$ to a type-1 treatment, balance its frequency with all other type-1 treatments and sort all treatments of other types. Let $s^{''}_1$ denote the resulting array. It can be seen that $\chi_s$ is not increased, $\psi_s$ is increased by $1$ and $\gamma_s$ is decreased by at most $f_{\rm iso}-2$. On the other hand, change $a_0$ in to a type-2 treatment, balance the frequencies of all remaining type-1 treatments and sort all treatments of other types. This process will not increase $\chi_s$ but will increase $\gamma_s$ by $f_{a_0}-1$ and decrease $\psi_s$ by $1$. Let $s^{''}_2$ denote the resulting array. and thus $g_{s^{''}}(x)>g_{s^{'}}(x)$ for arbitrary $x\in[0.4,0.5)$.
\end{proof}

\begin{lemma}\label{lem:120801}
Suppose sequence $s$ has no type-$0$ treatment and can be separated into two parts: $s=(s_1|s_2)$ such that $s_1$ and $s_1$ contain only type-$1$ treatment and no treatment appear both in $s_1$ and $s_2$. Let $k_1$ denote the length of $s_1$ and $t_1$ denote the number of different treatments in $s_1$, and similarly $k_2,t_2$ for $s_2$.
Then, we generate four different types of dominating sequences, say $s'$, for $s$. Without loss of generality, we assume the treatments in $s_1$ are $1,\ldots,t_1$.
\begin{itemize}
\item[Case 1.] Even $k_1$, even $t_1$: let $1+f_{t_1/2}\geq f_1\geq f_2\cdots\geq f_{t_1/2-1}\geq f_{t_1/2}$ and $\sum^{t_1/2}_{j=1}f_j=k_1/2$. And $1+f_{t_1}\geq f_{t_1/2+1}\geq f_{t_1/2+2}\cdots\geq f_{t_1-1}\geq f_{t_1}$ and $\sum^{t_1}_{j=t_1/2+1}f_j=k_1/2$. 
\item[Case 2.] Even $k_1$, odd $t_1$: let $1+f_{(t_1+1)/2}\geq f_1\geq f_2\cdots\geq f_{(t_1+1)/2-1}\geq f_{(t_1+1)/2}$ and $\sum^{(t_1+1)/2}_{j=1}f_j=k_1/2$. And $1+f_{t_1}\geq f_{(t_1+1)/2+1}\geq f_{(t_1+1)/2+2}\cdots\geq f_{t_1-1}\geq f_{t_1}$ and $\sum^{t_1}_{j=(t_1+1)/2+1}f_j=k_1/2$.
\item[Case 3.] Odd $k_1$, even $t_1$: let $1+f_{t_1/2}\geq f_1\geq f_2\cdots\geq f_{t_1/2-1}\geq f_{t_1/2}$ and $\sum^{t_1/2}_{j=1}f_j=(k_1+1)/2$. And $1+f_{t_1}\geq f_{t_1/2+1}\geq f_{t_1/2+2}\cdots\geq f_{t_1-1}\geq f_{t_1}$ and $\sum^{t_1}_{j=t_1/2+1}f_j=(k_1-1)/2$.
\item[Case 4.] Odd $k_1$, odd $t_1$: let $1+f_{(t_1+1)/2}\geq f_1\geq f_2\cdots\geq f_{(t_1+1)/2-1}\geq f_{(t_1+1)/2}$ and $\sum^{(t_1+1)/2}_{j=1}f_j=(k_1+1)/2$. And $1+f_{t_1}\geq f_{(t_1+1)/2+1}\geq f_{(t_1+1)/2+2}\cdots\geq f_{t_1-1}\geq f_{t_1}$ and $\sum^{t_1}_{j=(t_1+1)/2+1}f_j=(k_1-1)/2$.

\end{itemize}
Sequentially assign all treatments $1,\ldots, t_1/2$ to the odd periods $1,3,5$ and so on. Sequentially assign all treatments $t_1/2+1,\ldots, t_1$ to the even periods $2,4,6$ and so on. The resulting sub-sequence is denoted by $s_1'$. 
Rearrange the $t_2$ treatments in $s_2$ such that they have balanced frequencies and are sorted in ascending order such that all of them become type-$2$ treatments. The resulting sub-sequence is denoted by $s_2'$. 
Let $s'=(s_1'|s_2')$.
We have $q_{s'}(0.5)\geq q_{s}(0.5)$.
Given $(k,t,k_1,t_1,t_2)$ ($k_2=k-k_1$), $s'$ can be uniquely determined and is so denoted by $s'(k,t,k_1,t_1,t_2)$. Note that the two extreme cases with zero $s_2'$ length and zero $s_2'$ length are also included, which means the rearranging process above is carried out over ($i$) $s=s_1$ to derive $s'=s_1'$ and ($ii$) $s=s_2$ to derive $s'=s_2'$, respectively.
For convenience, we call $s'(k,t,k_1,t_1,t_2)$ the {\rm winner sequence}. If a sequence $s$ has the same $\chi$, $\gamma$ and $\psi$ value as $s'(k,t,k_1,t_1,t_2)$, we call $s$ the {\it competitor} of $s'(k,t,k_1,t_1,t_2)$. Given $(k,t,k_1,t_1,t_2)$, we call the corresponding $s'(k,t,k_1,t_1,t_2)$ and all its competitors as {\it leading sequence}, which is denoted by ${\cal L}(k,t,k_1,t_1,t_2)$.
\end{lemma}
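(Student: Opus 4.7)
The plan is to compare $q_{s'}(0.5)$ with $q_s(0.5)$ directly via the single-point formula (\ref{eqn:func5}), namely $q_s(0.5) = 0.5k - \chi_s/k + 0.5\psi_s$. Since the block length $k$ is shared, the claim reduces to the scalar inequality
\[
\chi_s - \chi_{s'} \;\geq\; \tfrac{k}{2}\bigl(\psi_s - \psi_{s'}\bigr),
\]
and I would decouple the analysis into a frequency-balance argument lower-bounding $\chi_s - \chi_{s'}$ and a type-counting argument upper-bounding $\psi_s - \psi_{s'}$, so that each can be handled by simple combinatorics.

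For the $\chi$ side, I use that $\chi_s = \sum_a f_{s,a}^2$ depends only on the multiset of treatment frequencies. The construction of $s'$ replaces the frequency vectors on $s_1$ and $s_2$ by the uniquely balanced vectors of matching support and total lengths $k_1$ and $k_2$, with entries in $\{\lfloor k_i/t_i\rfloor,\lfloor k_i/t_i\rfloor+1\}$; the four parity cases in the statement simply record how the rounding of $k_1/2$ and $k_1/t_1$ (and similarly for $s_2$) falls out. A standard transfer argument using convexity of $x\mapsto x^2$ then gives $\chi_{s'}\leq\chi_s$, together with a quantitative deficit whenever the original frequency vector is not already balanced.

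For the $\psi$ side, I proceed treatment by treatment, using the elementary inequality $\psi_{s,a}\leq f_{s,a}-1$ that follows immediately from the no-type-$0$ hypothesis. Inside the sub-block $s_1'$, the interleaved assignment places each treatment on positions of a single parity, which makes every such treatment type-$1$ and yields $\psi_{s',a}=f_{s',a}-1$, up to at most one unit lost at the seam where $s_1'$ meets $s_2'$. Inside $s_2'$, each treatment occupies a contiguous run and is therefore type-$2$, contributing $f_{s',a}-2$, again up to $O(1)$ boundary corrections at the two ends of $s_2'$ and at the circular wrap-around. Summing these contributions gives an exact formula for $\psi_{s'}$ modulo bounded boundary terms, which combined with the upper bound on $\psi_s$ provides the needed estimate.

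The main obstacle is the circular bookkeeping of boundary corrections. Three locations can perturb the naive within-block $\psi$-counts by $\pm 1$: the wrap-around between positions $k$ and $1$, the interior seam between $s_1'$ and $s_2'$, and the internal seams between consecutive contiguous runs in $s_2'$. Each contributes at most a constant, and the four parity cases in the statement are arranged precisely so that the balanced frequencies synchronize with the interleaved pattern without cumulative drift, keeping the total correction negligible relative to the $\chi$-deficit. Once these $O(1)$ corrections are tabulated in each of the four cases and compared with the analogous corrections arising in $s$, substitution into (\ref{eqn:func5}) delivers $q_{s'}(0.5)\geq q_s(0.5)$.
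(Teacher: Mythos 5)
Your reduction to the single point $x=0.5$ via $q_s(0.5)=0.5k-\chi_s/k+0.5\psi_s$ and your convexity/transfer argument for $\chi_{s'}\leq\chi_s$ both match the paper's route. The gap is in the $\psi$ accounting. First, the trade-off inequality $\chi_s-\chi_{s'}\geq \tfrac{k}{2}(\psi_s-\psi_{s'})$ with ``$O(1)$ boundary corrections negligible relative to the $\chi$-deficit'' is not a valid endgame: the $\chi$-deficit can be exactly zero (e.g.\ $s$ already has the balanced frequency multiset but a different arrangement), and then a single unit lost in $\psi$ costs $0.5$ in $q(0.5)$ and the claimed inequality fails. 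You must show $\psi_{s'}\geq\psi_s$ outright, with no losses to absorb. Second, the uniform bound $\psi_{s,a}\leq f_{s,a}-1$ is too weak on the $s_2$ side: the treatments of $s_2$ are by hypothesis not type-$1$, so there you need $\psi_{s,a}\leq f_{s,a}-2$. With the correct bounds, $\psi_s\leq (k_1-t_1)+(k_2-2t_2)$, and the constructed $s'$ attains this value exactly, because for a treatment whose $f_a$ occurrences split into $r$ maximal distance-$2$ runs one has $\psi_{s,a}=f_a-r$ identically (being type-$j$ means $r=j$); there are no seam or wrap-around corrections to tabulate, so that whole layer of your argument is both unnecessary and, as framed, unsound.

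What makes the loss-free comparison automatic --- and what your proposal misses --- is the structural observation the paper leans on: a type-$1$ treatment occupies a single maximal distance-$2$ run, hence a single parity class within $s_1$. Consequently the passage from $s$ to $s'$ is a pure rebalancing of the frequency multiset within the odd-position class, within the even-position class, and between the two classes (forcing $|t_1^*-t_2^*|\leq 1$ with the larger group on the longer, odd side), together with sorting the $s_2$ treatments into contiguous balanced runs. Every treatment remains type-$1$ in $s_1'$ and becomes exactly type-$2$ in $s_2'$, so $\psi$ can only increase and $\gamma$ is irrelevant at $x=0.5$; only $\chi$ moves, and it moves down by convexity. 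Rebuilding your $\psi$ step around this parity-separation fact, and using the type-$2$ bound on $s_2$, closes the gap.
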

\begin{proof} The proof of this lemma includes sophisticated but strait-forward analysis. The detailed proof is kept in some unreported works for better reading experience. We provide an intuitive proof of this lemma as follows, which shall help readers to address this issue. Note that all treatments in $s_1$ are type-$1$. We can tear $s_1$ into two parts: odd periods and even periods. If any treatments appear both in an odd period and an even period, it can not be of type-$1$. Thus, each treatment in $s_1$ must appear either in odd periods or even periods together. The number of even-index periods equals to or is one less than that of odd-index periods. Suppose there are $t_1^*$ and $t_2^*$ treatments in even and odd periods respectively. We can rebalance the frequencies of treatments in even periods and do the same for the odd periods to decrease the $\chi$ value. Note that no other parameter in $q_{s}(0.5)$ is changed in this process other than the $\chi$ value. We know, $q_{s}(0.5)$ is increased. When $|t_2^*-t_1^*|>1$, say $t_1^*-t_2^*=2$, we can move one treatment from the $t_1^*$ treatments in even periods to odd periods and then do the rebalancing again. In this process, the $\chi$ value is again decreased and so $q_{s}(0.5)$ increases. Thus, $|t_1^*-t_2^*|\leq 1$. When $t^*_1\neq t_2^*$, we put more treatments in odd periods since it is longer. By doing so, $\chi$ is decreased as many as possible and so $q_{s}(0.5)$ is increased. We call $s''$ a {\it competitor} of $s'$ if it has the same $\chi$, $\psi$ and $\gamma$ value as $s'$. It is obvious that $q_{s*}(0.5)=\max_{s\in{\cal S}} q_s(0.5)$ indicates $s^*$ is $s'$ or a competitor of $s'$.
\end{proof}

{\noindent\bf Proof of Theorem \ref{lem:121001}.} From Lemma \ref{lem:b1}, we know that there is no sequence in ${\cal T}$ which has type-$0$ treatments. For any $s\in{\cal T}$, let ${\cal A}_1(s)$ denote all type-$1$ treatments in $s$. Now we change the order of all these treatments while keeping their frequencies unchanged as follows.

 For an arbitrary $s\in{\cal T}$, let $n_{i}$ denote the number of type-$i$ treatments. According to Lemma \ref{lem:b1}, we know $n_{0}=0$. Suppose there are $n_1$ type-$1$ treatments. If we look at one type-$1$ treatment, say treatment $1$, alone, it forms a subsequence appearing in positions $i_1,i_1+2,i_1+4,\ldots,i_1+{2f_1-2}$ for some $1\leq i_1$ and $i_1+{2f_1-2}\leq k$.

Consider now finding an $s'\in{\cal S}^*$ such that $s'$ {\it dominates} $s$, which means $q_{s'}(x)>q_{s}(x)$ for all $x\in[0.4,0.5]$. If such an $s'\in{\cal S}^*$ exists for each $s\not\in{\cal S}^*$, the result of this theorem right follows.
We change the order of treatments in $s$ as follows. Let $1,\ldots,n_1$ denote the $n_1$ type-$1$ treatments. One can always find
\begin{eqnarray*}
({i_1},\ldots,i_{n_1},t')=\arg\min\bigg\{\sum_{j=1}^{t'}f_{i_j}-\sum_{j=t'+1}^{n_1}f_{i_j}:~\sum_{j=1}^{t'}f_{i_j}-\sum_{j=t'+1}^{n_1}f_{i_j}\geq 0, ~0\leq t'\leq n_1,\\
(i_1,\ldots,i_{n_1}) {\rm~is ~a ~permutation ~of} ~1,\ldots,n_1\bigg\}.
\end{eqnarray*}
Beginning from the first period, we arrange the $\sum_{j=1}^{t'}n_{i_j}$ type-$1$ treatment in ascending order and a uniform distance $2$ which means positions $1,3,5$ and so on. Beginning from the second period, we arrange the rest $\sum_{j=t'+1}^{n_1}n_{i_j}$ type-$1$ treatment in ascending order and a uniform distance $2$ which means positions $2,4,6$ and so on. On the empty periods which have not been assigned any treatment yet, we arrange the rest according to their frequencies. The treatments with smaller frequencies are arranged at left periods. The resulting array is denoted by $s_{\rm temp}$. To make it clear, an illustrative example is given in the following paragraph.

Here we shall give a toy example to show how this is done. Suppose $s=(1,2,1,2,1,3,4,3,$ $5,3,6,6,6,5,5,4,4,4)$. Then, there are three type-$1$ treatments, i.e., $1,2$ and $3$. The minimum non-negative value of $\sum_{j=1}^{t'}n_{i_j}-\sum_{j=t'+1}^{n_1}n_{i_j}$ is $f_1+f_2-f_3=3+2-3=2$. We assign treatment $1$ and $2$ in ascending order to periods $1,3,5$ and so on as $(1,\#,1,\#,1,\#,2,\#,2,\#,$ $\#,\ldots)$ where $\#$ represents undetermined periods. And then, we assign $f_3=3$ treat $3$ to periods $2,4,6$ and the resulting array is $(1,3,1,3,1,3,2,\#,2,\#,\#,$ $\ldots)$. For the rest three treatments, i.e., $5,6$ and $4$, their frequencies are $f_4=4$, $f_5=3$ and $f_6=3$, respectively. So, treatments $4,6$ and $5$ are arranged from the left to the right as $s_{\rm temp}=(1,3,1,3,1,3,2,5,2,5,5,6,6,6,4,4,4,4)$. For the resulting array, we know that all type-$1$ treatments are still type-$1$ and other treatments are all type-$v$ ($v\leq 2$, we can not exclude new type-$1$ treatments which will be shown in next paragraph).

It is possible that, in this process, the number of type-$1$ treatments is increased. For example, the original sequence is $s=(1,2,1,2,1,3,4,3,5,3,6,6,6,5,4,4,4)$, i.e., the frequency of $5$ is decreased by one. Now the new resulting array is $s_{\rm temp}=(1,3,1,3,1,3,2,5,2,$ $5,6,6,6,4,4,4,4)$ and there is one more type-$1$ treatment $5$. We choose to repeat this process until there is no more new type-$i$ treatments. And the resulting array is, with a mild abuse of notations, denoted by $s_{\rm temp}$. It should be mentioned that, the resulting array has maximum $\gamma$ value, maximum $\psi$ value and minimum $\chi$ value. Meanwhile, in this process, $q_{s}(0.5)$ increases as the number pf type-$1$ treatments increases, i.e., $q_{s_{\rm temp}}(0.5)\geq q_s(0.5)$.

If $\sum_{j=1}^{t'}n_{i_j}-\sum_{j=t'+1}^{n_1}n_{i_j}\leq 1$, we know that the first $k_1=\sum_{j=1}^{t'}f_{i_j}+\sum_{j=t'+1}^{f_1}n_{i_j}$ periods are filled by these type-$1$ treatments. For the rest periods, we can sort the other treatments in the ascending order. Note that these treatments are type-$v$, $v\geq 2$. We know that this sorted version has all other treatments as type-$2$ treatment. In this process, $\gamma$ is increased or unchanged. And then, we rebalance the frequencies of these type-$2$ treatments such that $\chi$ is decreased or unchanged. So its $\psi$ value is equal to or greater than that of $s$. The sequence can be rearranged as the sequence $s'$ in Lemma \ref{lem:120801}.

\vspace{.4cm}

{\noindent\bf Proof of Theorem \ref{thm:121101}.} Theorem \ref{thm:121101} can be proved directly by Theorem \ref{lem:121001}.

\if0{
$(ii)$ Let $k=4$, $s_1=(1122)$, $s_2=(1212)$. By (\ref{eq5-1}), it follows
\[\tilde{q}_{s_1}(z)=2-6z+\left(6-\frac{t+2}{2t}\right)z^2,\ \
\tilde{q}_{s_2}(z)=2-14z+\left(26-\frac{t+2}{2t}\right)z^2.\]
Note $\tilde{q}_{s_1}(z)$ and $\tilde{q}_{s_2}(z)$ intersect at $z=0$ and $\frac{2}{5}$. Since $\tilde{q}_{s_1}(0)=2>\frac{12}{25}>\tilde{q}_{s_1}(\frac{2}{5})=\frac{4}{25}\frac{3t-1}{t}$, then $z^*=\frac{2}{5}$ and $y^*=\frac{4}{25}\frac{3t-1}{t}$. On the other hand, $\tilde{q}_{s_1}^{\prime}(\frac{2}{5})=-\frac{6}{5}-\frac{2(t+2)}{5t}<0$ and $\tilde{q}_{s_2}^{\prime}(\frac{2}{5})=\frac{34}{5}-\frac{2(t+2)}{5t}>0$ for $t\geq 2$. Next, we show $y^*=\max_{s\in S}\tilde{q}_s(z^*)$. By (\ref{eq5-1}) again one has
\[\tilde{q}_{s}(\frac{2}{5})=\frac{26t-4}{25t}+\frac{8}{25}(\gamma_s+\psi_s)+\frac{1}{5}(f_{s,t_1}+f_{s,t_k})-\frac{1}{4}\chi_s-\frac{2}{25}\delta_{t_1,t_k}.\]
Let $g_s=\frac{8}{25}(\gamma_s+\psi_s)+\frac{1}{5}(f_{s,t_1}+f_{s,t_k})-\frac{1}{4}\chi_s-\frac{2}{25}\delta_{t_1,t_k}$. Then $g_{s_1}=g_{s_2}=-\frac{14}{25}$.
If $s$ contains four distinct treatments, then $g_s=-\frac{3}{5}<-\frac{14}{25}=g_{s_1}=g_{s_2}$.
If $s$ contains one treatment, then $g_{s}=-\frac{22}{25}<-\frac{14}{25}=g_{s_1}=g_{s_2}$.
If $s$ contains three distinct treatments, then $s=(aabc)$, $(abac)$, $(abca)$ or $(abbc)$ and $g_s<g_{s_1}=g_{s_2}$.
If $s$ contains two distinct treatments then $s=s_1$, $s_2$,  $(abbb)$, $(babb)$, or $(abba)$. Obvously, $g_s<g_{s_1}=g_{s_2}$ if $s\neq s_1, s_2$ in this case. Therefore $y^*=\max_{s\in S}\tilde{q}_s(z^*)$. Then we finish the proof of $(ii)$.\\

$(iii)$ Direct computation yields
\[\tilde{q}_{s_1}(z)=5-\frac{13}{5}-6z+\left(6-\frac{6+2t}{5t}\right)z^2,\]
\[\tilde{q}_{s_2}(z)=5-\frac{9}{5}+\left(-16+\frac{8}{5}\right)z+\left(22-\frac{6+2t}{5t}\right)z^2,\]
\[\tilde{q}_{s_3}(z)=5-\frac{9}{5}+\left(-12+\frac{8}{5}\right)z+\left(12-\frac{6+2t}{5t}\right)z^2.\]
 Then by equating any two of them one has $z^*$.

If $s=(abcde)$ or $(aaaaa)$, then clearly $g_s<g_{s_1}$.

Let $s$ contain four different treatments. Note that $\max_{\delta_{t_1,t_k}=0}{B_s}=\frac{8}{25}$, $\min_{\delta_{t_1,t_k}=0}{A_s}=7-\frac{12}{5}$. Then we have $g_{s_2}>\max_{\delta_{t_1,t_k}=0}{B_s}-\frac{1}{5} \min_{\delta_{t_1,t_k}=0}{A_s}$. If $\delta_{t_1,t_k}=1$, then $s=(abcda)$, obviously $g_s<g_{s_2}$.

Let $s$ contain two different treatments. Then $s=s_1$, $(abbbb)$, $(babbb)$, $(bbabb)$, $(baabb)$, $(ababb)$, $(abbab)$, $(abbba)$, $(babab)$. It's easy to see that $g_s<g_{s_1}$ if $s\neq s_1$.

Let $s$  contain three different treatments. Note that $B_s=\frac{8}{25}(\gamma_s+\psi_s)$. If $f_{s,a}=3$, $f_{s,b}=f_{s,c}=1$ and $\gamma_s+\psi_s=3$, then $s=(aaabc)$ or $(baaac)$, then $g_s<g_{s_2}$  if $s\neq s_2$. If $\gamma_s+\psi_s\leq2$, then $B_s\leq B_{s_2}$. Note $A_s>A_{s_2}$. Then $g_s<g_{s_2}$  if $s\neq s_2$. If $f_{s,a}=f_{s,b}=2$, $f_{s,c}=1$, then $B_s\leq B_{s_2},(\max{(\gamma_s+\psi_s)}=2)$ and $A_s\geq A_{s_2}$, where equality holds if and only if $s=s_2=s_3$. Hence  $g_s<g_{s_2}$ if $s\neq s_2$.\\

$(iv)$ Direct computation as in $(iii)$ yields $z^*$.

If $s=(aaaaaa)$ or $(abcdef)$, then $g_s<g_{s_1}=g_{s_2}=g_{s_3}$.

Let $s$ constain five different treatments. If $\delta_{t_1,t_k}=1$, then $s=(abcdea)$ with $g_s<g_{s_1}$. If $\delta_{t_1,t_k}=0$, then $\max_{\delta_{t_1,t_k}=0}{B_s}=\max\{4(1-2z^*)z^*$, $2(z^*)^2\}$. Note $\min_{\delta_{t_1,t_k}=0} A_s=8-6z^*$. Then $g_{s_2}>\max_{\delta_{t_1,t_k}=0}{B_s}-\frac{1}{6}\min_{\delta_{t_1,t_k}=0}{A_s}>g_s$.

Let $s$ constain four different treatments. In this case, we have $\max \psi_s=2=\max \gamma_s$. If $\psi_s=2$, then $\gamma_s=0$ and $s=s_3$,  $(cdabab)$, $(cababd)$ or $(ababcd)$. Assume $\psi_s=1$, if $\gamma_s=2$, then $s=(aaabcd)$, $(baaacd)$, $(bcaaad)$; if $\gamma_s=1$, then $s=(aabacd)$, $(caabad)$, $(cdaaba)$, $(daabcb)$, $(aadbcb)$, $(aabcbd)$; if $\gamma_s=0$, then $s=(abacda)$, $(cabadc)$, $(cabacd)$, $(cdabac)$, $(cdabad)$. For $\psi_s=0$, then $B_{s_3}>B_s$ and $A_{s_3}<A_s$. Hence $g_s<g_{s_2}$ if $s\neq s_3$.

Let $s$ constain three different treatments. Assume $f_{s,a}=f_{s,b}=f_{s,c}=2$. If $\gamma_s=3$, then $s=(aabbcc)$; if $\gamma_s=2$, then $s=(caacbb)$, $(caabbc)$, $(aacbbc)$; if $\gamma_s=1$, then $s=(aabcbc)=arg \max_{\gamma_s=1}{g_s}$; if $\gamma_s=0$, then $\max_{\gamma_s=0} \psi_s=2$ and $\min_{\gamma_s=0}{A_s}=12-8z^*$.  For these sequences, we have $g_{s_2}>g_s$. Assume $f_{s,a}=3$, $f_{s,b}=2$, $f_{s,c}=1$. If $\gamma_s=3$, then $s=(aaacbb)$, $(aaabbc)$, $(caaabb)$; if $\gamma_s=2$, then $s=s_2$, $(caacbb)$, $(caabbc)$ or $(aacbbc)$; if $\gamma_s=1$, then $s=(aababc)=\arg\max_{\gamma_s=1}{g_s}$; if $\gamma_s=0$, then $s=(ababac)=\arg\max_{\gamma_s=0}{g_s}$. For all these sequences, $g_s<g_{s_2}$ if $s\neq s_2$. Assume  $f_{s,a}=4$, $f_{s,b}=f_{s,c}=1$. If $\delta_{t_1, t_k}=0$, then $=(aaaabc)=\arg\max_{\delta_{t_1, t_k}=0}{g_s}$; if  $\delta_{t_1,t_k}=1$, then $s=(aaabca)$, $(aabcaa)$, $(abacaa)$.  For all these sequences, $g_s<g_{s_2}$ if $s\neq s_2$.

When $s$ contains two different treatments, we can similarly get $g_s<g_{s_1}$ if $s\neq s_1$.
We omit the proof of $(v)$ and $(vi)$, which can be proved similarly, due to the limit of space.}\fi

\end{document}